\newtheorem{theorem}{Theorem}
\newtheorem{lemma}[theorem]{Lemma}
\newtheorem{corollary}[theorem]{Corollary}
\newtheorem{obs}[theorem]{Observation}
\newcommand{\E}{{\mathbf E}}
\newcommand{\opt}{{\tt OPT}}
\newcommand{\UCB}{{\tt UCB}}
\newcommand{\all}{{\sc AllProbe}\ }
\newcommand{\mmax}{{\sc BestProbe}\ }
\newcommand{\D}{\mathcal{D}}
\newcommand{\hwc}{{\sc HwC}\ }
\newcommand{\btrl}{{\sc BtRL}\ }
\newcommand{\W}{\mathcal{W}}
\newcommand{\lap}{{\sc LwC}\ }
\title{Online Learning and Bandits with Queried Hints} 
\author{
  Aditya Bhaskara\thanks{Supported by NSF grant CCF-2047288.} \\
  School of Computing \\
  University of Utah \\
  \texttt{bhaskaraaditya@gmail.com} \\
  \and
  Sreenivas Gollapudi \\
  Google Research \\
  Mountain View, CA \\
  \texttt{sgollapu@google.com} \\
  \and
  Sungjin Im\thanks{Supported in part by NSF grants  CCF-1844939 and CCF-2121745.} \\
  University of California \\
  Merced, CA \\
  \texttt{sim3@ucmerced.edu}
  \and
  Kostas Kollias \\
  Google Research \\
  Mountain View, CA \\
  \texttt{kostaskollias@google.com}
  \and
  Kamesh Munagala\thanks{Supported by NSF grant CCF-2113798.} \\
  Computer Science Department \\
  Duke University \\
  \texttt{kamesh@cs.duke.edu} \\
}
\date{}
\begin{document}

\maketitle

\begin{abstract}
We consider the classic online learning and stochastic multi-armed bandit (MAB) problems, when at each step, the online policy can \emph{probe} and find out which of a small number ($k$) of choices has better reward (or loss) before making its choice. In this model, we derive algorithms whose regret bounds have  exponentially better dependence on the time horizon compared to the classic regret bounds. In particular, we show that probing with $k=2$ suffices to achieve time-independent regret bounds for online linear and convex optimization. The same number of probes improve the regret bound of stochastic MAB with independent arms from $O(\sqrt{nT})$ to $O(n^2 \log T)$, where $n$ is the number of arms and $T$ is the horizon length. For stochastic MAB, we also consider a stronger model where a probe reveals the reward values of the probed arms, and show that in this case, $k=3$ probes suffice to achieve parameter-independent constant regret, $O(n^2)$. Such regret bounds cannot be achieved even with full feedback {\em after} the play, showcasing the power of limited ``advice'' via probing {\em before} making the play. We also present extensions to the  setting where the hints can be imperfect, and to the case of stochastic MAB where the rewards of the arms can be correlated.
\end{abstract}

\section{Introduction}
In this paper, we consider two problems that form the cornerstone of sequential analysis and decision theory,  a field first developed by Wald~\cite{Wald} in the 1940's. The first is the online linear/convex optimization problem that was initially studied in the context of repeated games by Hannan~\cite{Hannan} and Blackwell~\cite{Blackwell} in the 1950's. In this problem, there is a possibly infinite space of potential actions in a high-dimensional space. At each step, a decision maker needs to choose one action, that is sometimes called an ``arm''. Subsequently, nature presents an adversarially chosen linear (or convex) loss function, and the decision maker incurs the evaluation of this loss function at the chosen action. Subsequently, the decision maker is told the loss function at that time step. The goal is to compete with an omniscient policy that knows all the loss functions in advance, but is restricted to choosing one fixed action for all time steps. The difference between the loss of the policy that that of the omniscient policy is termed {\em regret}. Over the decades, several policies have been developed~\cite{KalaiV,LittlestoneW,Zinkevich} that achieve regret $O(\sqrt{T})$, where $T$ is the horizon length, and the $O(\cdot)$ hides problem-dependent parameters. Such a dependence on $T$ is also optimal~\cite{hazan}. A well-known specialization is the {\em experts problem} where the actions or arms form a discrete set, and each arm incurs an arbitrary loss at each time step that is unrelated to the losses of the other arms~\cite{LittlestoneW,howto}. 

The second problem is the stochastic multi-armed bandit (MAB) problem that was first formulated by Robbins~\cite{Robbins} around 1950, though the widely used Thompson Sampling policy for this problem dates back to the 1930's~\cite{Thompson}. In this problem, a decision maker is faced with $n$ independent arms that yield {\em i.i.d.} rewards from unknown underlying distributions. As before, the goal is to design a policy or allocation rule to sequentially play these arms to maximize reward. In a sense, this can be viewed as a stochastic version of the experts problem; however, the key difference is that the decision maker only learns the reward of the chosen arm at the end of the time step, and not the rewards of all the arms. The regret is measured against an omniscient policy that knew the reward distributions of the arms (but not the reward values), and chooses the single arm with maximum expected reward at all time steps.  In seminal work, Lai and Robbins~\cite{LaiR} showed an optimal allocation policy along with tight lower bounds on the regret incurred by any policy, assuming a parametric form on the distributions. This result was subsequently generalized by Auer, Cesa-Bianchi, and Fisher~\cite{AuerCF}, who derived similar upper bounds without assuming a parametric form on the distributions. 

Both the online convex optimization and the stochastic MAB problem have found numerous applications in areas ranging from clinical trial design to ad-word allocations to recommendation systems, and continue to be extensively studied in the fields of statistics and machine learning. For more on the history and variants of this problem, we refer the reader to several excellent books~\cite{BubeckBook,SlivkinsBook,LattimoreBook,hazan}.

\subsection{Probe Model and Motivation} 
In this paper, we study the following twist on these problems. Suppose before playing, we are allowed to query or {\em probe} an oracle with $k$ options, and the oracle responds with the best of these at that step -- either telling the algorithm the identity of this option and nothing more, or telling the algorithm the rewards/losses of all the options. We subsequently play the option suggested by the oracle. How should these probes be chosen, and can we obtain much better regret? 

Our motivation for such a probe model comes from the recent literature on designing algorithms that can leverage machine learning (ML) based {\em predictions}. This paradigm has been used to obtain improved guarantees for many classic online algorithmic problems~\cite{RakhlinSridharan,lykouris2018competitive,LattanziLMV20,purohit2018improving,GollapudiP19} (see also the survery~\cite{MitzenmacherV20}). In these settings, the online algorithm is assumed to have access to an auxiliary ML model that predicts properties of the arriving inputs. The goal is to derive improved bounds assuming the predictions are correct, whilst doing nearly as well as worst-case algorithms when the predictions are incorrect. 
Of particular relevance is recent work on ``parsimonious hints'', where the ML model provides as little information to the online algorithm as possible. For instance, recent work has considered online linear optimization, where the hint is a direction with a strictly positive dot product with the cost vector~\cite{Dekel,bhaskara21}. In such a model, for optimization over a sphere, the regret bound improves from $\Omega(\sqrt{T})$ to $O(\ln T)$, even with hints at  $O(\sqrt{T})$ time steps. Similarly, recent work~\cite{im22a} has considered randomized caching with parsimonious hints about next request time of a few cached pages. 

The probe model can now be viewed as a parsimonious ML hint, where the algorithm queries the predictor with a few options, asking it either for the identity of the best option among these, or for the losses/rewards of all these options.  As an example, consider modeling the problem of shortest path routing using stochastic bandits (where each of $m$ paths has its length drawn from a distribution at each step). In this case, a routing engine can query for the length of a few paths, by possibly querying users who have chosen this path, before routing other users. 

\subsection{Results and Conceptual Contribution}
In this paper, we study the following question:
\begin{quote}
Can we obtain improved regret bounds for online learning and MAB problems against the classic best arm in hindsight benchmark, if the algorithm is allowed the power of using ML advice via a few probes before making its decision to play?
\end{quote}

We answer this question in the affirmative: We show that just $k = 2$ probes (or $k=3, 4$ for stronger results)  suffice to drastically improve the known regret guarantees. Indeed, our main results (in Sections~\ref{sec:experts} and~\ref{sec:all}) show a {\em constant regret bound independent of the time horizon}, assuming the hints are correct! 

Before proceeding further, it is instructive to compare our results with~\cite{Dekel}. They consider online linear optimization where the domain is the sphere (or more generally, strictly convex) and show that if the hint is a direction with positive ``correlation'' with the cost vector, the regret improves to $O(\ln T)$. They also show that such an improvement cannot be obtained if the decision space is more general, say the $\ell_{\infty}$ ball, where the regret remains $\Omega(\sqrt{T})$. In contrast, we show that if the algorithm is allowed to choose a direction and ask the predictor if the cost along that direction is increasing or not, then the regret improves to constant; further this result holds for linear optimization over {\em any domain} (not just strictly convex), and convex optimization over {\em any} convex space.

In the literature on algorithms with ML-based predictions, one key requirement is for the algorithms to not be ``thrown off'' by incorrect predictions. This aspect is referred to as robustness~\cite{MitzenmacherV20}, and the goal is to recover worst-case guarantees (ones possible without any hints) even if the hints are adversarial. In the context of linear optimization, this was studied in~\cite{bhaskara20a}. They extend the result of~\cite{Dekel} again assuming the optimization is over a sphere; however, their result also has a dependence of $O(\ln T)$ on the horizon length. As our second contribution, with $B$ imperfect hints, our probe model improves this to obtain regret $O(\sqrt{B+1})$ (with no dependence on the horizon $T$), again holding for any underlying space over which the linear optimization is performed, as well as for convex optimization over any convex space. We finally note that there was no extension known for~\cite{Dekel} to the MAB problem, and one of our contributions is to develop stochastic bandit algorithms with improved regret bounds under parsimonious hints.  

At a conceptual level, our work shows that 
querying or probing options via comparisons is far more powerful than more `passive' models for hints. We believe that such query based hint models may find other applications beyond online learning. At an even higher level, our work is reminiscent of the \emph{power of two choices} in online load balancing~\cite{AzarBKU}, where choosing the lesser loaded of two random bins leads to an exponential improvement in the expected maximum load.  This paradigm has found numerous applications, such as hashing, congestion control, and distributed memory management. Our paper is in a similar vein – we show that allowing a few queries or hints suffices to give an exponential improvement in regret. 

\subsection{Overview and Technical Highlight} 
We present our model and summary of results in Section~\ref{sec:model}, where we also place the regret bounds we obtain in context.  Our main probing model is the \mmax model, where the online policy probes a set of arms or options ahead of the play, and is told the best of these options (without revealing the actual losses/rewards of these options). For the MAB problem, we also consider the \all model where the policy also observes the rewards of all arms it probes. We present algorithms and regret bounds for online linear/convex optimization in the \mmax model in Section~\ref{sec:experts}. Our results extend to the case where the hints can be imperfect, and we present this in Section~\ref{sec:imperfect}. We present analogous results for the MAB problem in Section~\ref{sec:max}. We consider the MAB problem in the \all model in Sections~\ref{sec:all} and~\ref{sec:mab-dependent}.

At a technical level, our results require development of new probabilistic tools to lower bound (resp. upper bound) the maximum (resp. minimum) of independent random variables 
(See Lemmas~\ref{lem:min},~\ref{lem:min_gen},~\ref{lem:max} and~\ref{lem:pairs}.) We term these as ``reverse prophet inequalities'', since they are in some sense the reverse of well-known prophet inequality results~\cite{Krengel,SamuelCahn} that upper bound the expected maximum of a set of independent  random variables by a sum of quantities related to individual distributions. These technical lemmas are crucial to both our algorithms as well as our improved regret analyses. Much like prophet inequalities, these lemmas are of independent interest as stand-alone probability tools.

\subsection{Other Related Work} 

\medskip \noindent {\bf Stochastic Probing.} The question of adaptively or non-adaptively probing independent distributions has been widely studied, with applications to  database query optimization~\cite{GoelGM06,MunagalaBMW05,DeshpandeHK16,Liu}, wireless communication~\cite{GuhaMS06}, and traffic routing~\cite{BhaskaraGKM20}. Much like our model, a probe reveals the true underlying value drawn from the distribution; however, this line of work largely focuses on algorithm design as opposed to learning. It is shown in~\cite{GoelGM06} that the problem of computing the best set of $k$ distributions to probe in order to maximize the expected value of the maximum of the probed set is {\sc NP-Hard} when $k$ is not a constant.  A related problem is the {\em Pandora's problem}~\cite{Pandora,GuhaMS06,BeyhaghiK19}, where there is no bound on the number of probes, but we seek to maximize the largest value found minus the total probing cost spent in discovering the value. %
A general adaptive greedy algorithm for such problems, which probes the next distribution conditioned on the values seen so far, was presented in~\cite{GolovinK}. Our work is different in that we assume $k$ is a small constant (so that {\sc NP-Hardness} is not an issue); instead, we seek to understand the power of such probes in repeated bandit interactions.

\medskip \noindent {\bf Bandits with Probes.} In the {\em cascading bandits} model~\cite{Cascade1,Slivkins}, a recommendation system (such as search engine or streaming service) needs to choose $k$ items (or arms) to show to a user and obtains feedback on what item the user clicks on. Similarly, in the bandits with pre-observation model~\cite{CJW20}, the arms are wireless channels of unknown quality, and a user needs to sequentially probe the channels until a good channel is found. This is a bandit version of the Pandora's problem. Finally, motivated by job scheduling applications, the online budgeted submodular coverage problem~\cite{StreeterG} considers the more general problem where the reward obtained by the player is an unknown submodular function of the probed arms. Such problems have also been recently considered in the experts setting~\cite{garbage} where the policy gets feedback about all arms after the play, but is allowed to make a bounded number of probes before the play.  One commonality in all the above works is that they consider {\em policy regret}: The benchmark is an omniscient solution that not only knows the rewards/losses of the arms, but is also allowed as many probes per step as the online policy. 
In contrast, motivated by machine learning hints, we consider probing as providing the online policy {\em more power} compared to an omniscient benchmark and our goal is to study the resulting improvement in regret bounds. 

\medskip \noindent {\bf Dueling Bandits.} Our problem is also related to {\em dueling bandits}~\cite{Yue12}, where the only information available on playing a pair of arms is the result of a noisy comparison. Assuming the noisy comparison model is Condorcet consistent (e.g., the Bradley-Terry model), the goal is to minimize the error incurred in not playing the best arm. Though our problem is superficially similar in that we allow plays of multiple arms, in our case, we observe the reward of at least one of the played arms. Further, it is easily possible to incur zero (or even negative) regret in our problem by playing two sub-optimal arms. This makes our problem technically very different. Nevertheless, it is an interesting question whether techniques from dueling bandits can be used to improve some of our results.

\medskip \noindent {\bf Predictable Sequences.} In the field of online algorithms in general, recent research has focused on incorporating machine learning predictions to  obtain more optimistic bounds if the predictions are correct, but  preserve the robustness of the classic online model in case the predictions  turn out to be inaccurate. Of particular relevance is recent work~\cite{RakhlinSridharan,WeiLuo,SteinhardtLiang} on bandits and experts with ``predictable sequences'', where improved regret bounds are shown when the algorithm is given a prediction that is close on average to the true reward at each step. Our work is similar in spirit if we view the probe as a perfect prediction; however, in our case, the probes are both {\em interactive} and {\em parsimonious}, meaning that the policy itself has to decide on the predictions to obtain each step, and further, these are few in number.

\medskip \noindent {\bf Bandits with Limited Advice.} A related model to predictable sequences is the {\em experts} model, where in each step, each of $m$ experts makes a prediction about the best arm at that step. In the limited advice setting~\cite{Seldin,Kale}, the policy can choose only $k \ll m$ of the experts in each round and obtain their predicted best arm; subsequently, the policy plays one of the arms and obtains its reward. The goal of the policy is to compete with reward of following the best expert in hindsight. However, the predictions of these experts can be arbitrary, and unrelated to which arm was actually the best arm. Therefore, even if a policy obtains the predictions of all experts, it cannot avoid $\Omega(\sqrt{nT})$ regret, making this problem fundamentally different from our setting.

\section{Model and Results}
\label{sec:model}
In the online linear optimization problem, we are given a a finite set $\W \subseteq \mathbb{R}^d$ of options (or arms).  Assume that for all $w \in \W$, we have $w \in [-1,1]^d$. At each step $t$, the algorithm chooses action $w^t$ and is subsequently presented a cost vector $\ell^t \in [-1,1]^d$, for which it incurs loss $\langle \ell^t, w^t \rangle \in [-d,d]$. 
Note that the choice of action $w^t$ only depends on the cost vectors $\ell^q$ for $q < t$.  We assume cost vectors $\ell^t$ are generated by an {\em oblivious} adversary that does not know the internal randomness used by the algorithm in choosing $w^t$. We further assume that any linear function can be efficiently minimized over the set $\W$. 
We describe the related experts and online convex optimization settings in Appendix~\ref{app:extension1}.

There is a horizon of $T$ steps; we assume $T$ is known, but this assumption can be removed using standard techniques. We measure regret with respect to the  hindsight optimum as:
$$ \mbox{Regret} = \sum_{t=1}^T \langle \ell^t, w^t \rangle - \min_{w \in \W} \sum_{t=1}^T \langle \ell^t, w \rangle.$$ 
We denote $\opt = \min_{w \in \W} \sum_{t=1}^T \langle \ell^t, w \rangle$. 

In the stochastic MAB problem, there are $n$ bandit arms. Arm $i$ yields $i.i.d.$ rewards $X_i$ drawn from an independent distribution $D_i \in [0,1]$. Let $\mu_i = \E[X_i]$. The policy can play one arm $i_t$ at each step $t$, this choice can depend on the observed rewards till time $t-1$. Unlike online linear optimization, in the MAB problem, at the end of time step $t$, the policy {\em only} learns the reward $r_{{i_t}t} \sim D_{i_t}$ of the arm $i_t$ that it plays at time $t$. The hindsight optimum plays the arm with highest expected reward each step so that $\opt = T \max_i \E[X_i]$. We consider pseudo-regret, which is simply $\opt - \E[\sum_{t=1}^T r_{{i_t} t}]$. In Section~\ref{sec:mab-dependent}, we extend this to the case where there is a joint (correlated) reward distribution over arms and the rewards are drawn $i.i.d.$ across time from this joint distribution.

\medskip \noindent {\bf Probe Model.} There are two models of probing and feedback that we consider. These models are parameterized by a number $k \le n$, which captures the number of probes allowed. The first model applies to both online learning and stochastic MAB, while the second applies only to MAB.

\begin{description}
\item[Probes with Best Arm Feedback (\mmax).] Any policy probes a set $S_t$ of at most $k$ arms or options at any time step $t$, and learns which arm in $S_t$ will incur the lowest loss or maximum reward at step $t$ (but not the reward/loss values). The policy plays the arm $i \in S_t$ with largest reward (resp. minimum loss). For instance, in the MAB problem, if the reward of arm $i$ at time $t$ at step $t$ is $r_{it} \sim D_{i}$, the policy's reward is $\hat{R}_t = \max_{i \in S_t} r_{it}$.  
\item[Probes with All Feedback (\all).] For the stochastic MAB problem, we also consider a model that gives the online policy more information. In this model, the policy is actually told the rewards $r_{it} \sim D_i$ for each arm $i \in S_t$, and it subsequently plays the arm with largest reward, incurring reward $\hat{R}_t = \max_{i \in S_t} r_{it}$ at that step.  \end{description}

When $k = 1$, these models reduce to the classic versions of online linear optimization and stochastic MAB. Note that our regret measure is a departure from work on policy regret: In our case, $\opt$ remains the same whether $k=1$ or $k > 1$. In that sense, we seek to understand the power of increasing $k$ on the regret of the online policy. As mentioned before, the motivation comes from viewing the $k> 1$ setting as ML hints received by the online policy.

\subsection{Our Results}
Our main results are as follows:

\begin{itemize}
    \item In Section~\ref{sec:experts}, for the online linear optimization problem in the \mmax model, we show a regret bound of $O\left(d^2 \ln d\right)$, independent of $T$ with $k=2$ probes. Using similar techniques, in Appendix~\ref{app:extension1}, we improve to $O(\ln n)$ for the experts model with $n$ experts, and we also present a horizon-independent regret bound for online convex optimization.
    \item In Section~\ref{sec:imperfect}, we extend the model to the setting where $B$ of the hints could be imperfect. We show an algorithm with regret bound $O\left(d^2 \ln d \sqrt{B+1}\right)$. The dependence on $B$ is trivially optimal, as can be seen by considering $B=T$. 
    Similar results can be obtained for online convex optimization.
    \item In Section~\ref{sec:max}, for the stochastic MAB problem in the \mmax model, using $k=2$ probes, we obtain a {\em parameter independent} regret bound of $O(n^2 \log T)$. 
    \item In Section~\ref{sec:all}, for the stochastic MAB problem in the \all model, we show a regret bound of $O(n^2)$, independent of $T$, with only $k = 3$ probes. 
    \item In Section~\ref{sec:mab-dependent}, we consider the stochastic MAB problem in \all model when the rewards of arms can be correlated at each time step (whilst still being independent across time). Here, we show a $\tilde{O}(n^{8/3} T^{1/3})$ regret bound using $k=4$ probes.
\end{itemize}

Our regret bounds in Sections~\ref{sec:max} and~\ref{sec:all} for the MAB problem are {\em parameter independent}, meaning that there is no dependence of the regret bound on the means and variances of the individual arms. All of our bounds except those in Section~\ref{sec:mab-dependent} are an exponential improvement in $T$ over bounds without probes, since there is a lower bound of $\Omega(\sqrt{nT})$~\cite{AuerCFS} for the stochastic MAB problem, and a lower bound of $\Omega(\sqrt{T \ln n})$~\cite{howto} for the experts problem with $k=1$ probes.  

We also remark that the difference between the last two results is a subtle one. Typically in the stochastic MAB problem, the rewards of the arms are assumed to be independent, but most  known results carry over to the case where the rewards can be dependent (as long as the samples are independent across time). Informally, this is because the algorithm only receives a feedback about one arm at each step. This is not the case when $k>1$, and leads to the main open problem of understanding if we can improve upon the $\Theta(\sqrt{T})$ regret for adversarial bandits in the \all{} or \mmax{} models.

\section{Online Linear Optimization in the \mmax Model}
\label{sec:experts}
We now consider online linear optimization in the \mmax model. Recall that in standard online linear optimization, the algorithm needs to play an option (or arm) $w \in \W$ each step, and learns the linear loss function $\ell^t$ at the end of each step $t$. We show that with an oracle that can return the better of $k=2$ options at each step $t$, we can achieve regret that is independent of $T$. 

For simplicity, we will assume $\W$ is a convex polytope in $\mathbb{R}^d$, though our result easily extends to the case where $\W$ is any finite set of options over which linear functions can be efficiently optimized. In more generality, our results hold as long as there are a finite set of options $S \subseteq \W$, such that for $w \in S$, the set $C_w = \{ \vec{v} \in \mathbb{R}^d | w = \mbox{argmin}_{s \in \W} \langle s, \vec{v} \rangle \}$ has positive volume, and further, $\cup_{w \in S} C_w = \mathbb{R}^d$.  

Our results also extend to the more general case of online convex optimization with arbitrary convex domains. For conceptual  simplicity, we present only the linear case here and defer the general case to  Appendix~\ref{app:extension-oco}. In the {\em special case} of online learning with experts (where the domain $\W$ is the $n$-dimensional unit simplex), we present an improved regret bound that only depends logarithmically on the number of experts, whilst still being independent of $T$. The logarithmic dependence on the number of experts is analogous to the standard regret bounds for the experts problem, and the details are presented in Appendix~\ref{app:extension-experts}. 

Finally, all of these algorithms can be extended to the case when hints/probes can be incorrect at a small number of steps. Once again, we describe the algorithm only for online linear optimization, in Section~\ref{sec:imperfect}.

\subsection{Algorithm: Differentially Private Regularization} 
Let us first recap the algorithmic framework of randomized regularization~\cite{KalaiV} for the setting without probes. Let $L^{t-1} = \sum_{q=1}^{t-1} \ell^q$ denote the sum of the cost vectors till time $t$. The algorithm chooses a $d$-dimensional random cost vector $x$ of sufficiently large variance upfront and at step $t$, chooses the {\em regularized optimum} action,
$$ w^t = \mbox{argmin}_{w \in \W} \langle L^{t-1} + x, w \rangle.$$
The analysis proceeds in two parts. First it is shown that if $L^t$ were hypothetically used instead of $L^{t-1}$ in the above step, the only regret would be due to adding noise $x$, and this is independent of the time horizon $T$. Next, it is shown that since $x$ has large variance, using $L^{t-1}$ instead of $L^t$ produces almost the same distribution of the regularized optimum $w^t$. These steps trade-off, since the larger the variance of noise, the worse the first step and better the second. The optimal trade-off yields a $O(\sqrt{T})$ bound. 

In the probing model, our algorithm \lap{} will simply sample {\em two} random vectors $x,y$ and compute the regularized optimal solutions as above. The algorithm will find out which of these solutions has smaller loss at time $t$, and then choose this solution as its action $w^t$.  Our key lemma (Lemma~\ref{lem:min}) shows that if the noise vector is chosen so as to satisfy a {\em differential privacy} property, then the error in the first step above (comparing with $L^t$) goes away! 
In other words, the better of two samples produced using the regularized distribution obtained using $L^{t-1}$ will be as good as a sample obtained using  $L^t$.  

We note that the use of differentially private noise was first considered in~\cite{Abernathy1}, who observed that viewing randomized regularization as differential privacy of the loss across time leads to simpler analysis and somewhat stronger regret bounds. We show an {\em algorithmic} application of this approach for probing. Indeed, though classical regularization does not require differentially private noise (indeed, it does not even require randomization~\cite{Zinkevich}), this seems critical to achieving our bounds. 

\medskip
\noindent {\bf Algorithm.} We now formally describe the algorithm. Recall that the distribution {\tt Laplace}$(\beta)$ has density function $f(x) = \frac{1}{2\beta} \exp(-|x|/\beta)$ for $x \in \mathbb{R}$.  For $\eta \le 0.4$ being a constant, the algorithm performs these steps at time $t$. 

\begin{itemize}
\item Choose $x_j \sim {\tt Laplace}(d/\eta)$ for each $j \in \{1,\ldots,d\}$; set $a^t = \mbox{argmin}_{w \in \W} \langle L^{t-1} + x, w \rangle$. 
\item Choose $y_j \sim {\tt Laplace}(d/\eta)$ for each $j \in \{1,2,\ldots,d\}$; set $b^t = \mbox{argmin}_{w \in \W} \langle L^{t-1} + y, w \rangle$. 
\item Let $A^t = \langle a^t, \ell^t \rangle $ and $B^t = \langle b^t, \ell^t \rangle $. Probe to learn $w^t = \mbox{argmin}_{a^t, b^t} \{ A^t, B^t \}$. 
\item Play $w^t$  as the action at time $t$, incurring actual loss $\min(A^t, B^t)$. 
\end{itemize}

We will show the following theorem:
\begin{theorem}
\label{thm:main1}
The constant $\eta \in (0,0.4]$, the regret of the \lap algorithm is $O\left(d^2 \ln d\right)$.
\end{theorem}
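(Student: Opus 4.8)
The plan is to show that probing two independent regularized samples removes the usual ``stability'' loss of Follow-the-Perturbed-Leader \emph{entirely}, leaving only the one-shot regularization penalty, which is horizon-independent. The first step is the distributional fact that makes the two probes useful. Write $\mu_t$ for the law of the loss $A^t=\langle \ell^t, a^t\rangle$ of the regularized action $a^t=\arg\min_w\langle L^{t-1}+x,w\rangle$ computed from $L^{t-1}$, and $\nu_t$ for the law of the ``be-the-leader'' loss $\hat C^t=\langle\ell^t,\hat c^t\rangle$ with $\hat c^t=\arg\min_w\langle L^t+z,w\rangle$ computed from $L^t=L^{t-1}+\ell^t$. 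Since $\|\ell^t\|_1\le d$ and each noise coordinate is drawn from ${\tt Laplace}(d/\eta)$, the Laplace mechanism has $\ell_1$-sensitivity $d$, so the perturbed vectors $L^{t-1}+x$ and $L^t+z$ are $\eta$-differentially-private neighbors: their densities agree up to a factor in $[e^{-\eta},e^{\eta}]$. Because the map from the perturbed cumulative loss to the action and then to its scalar loss is post-processing, the same $e^{\pm\eta}$ density-ratio bound descends to $\mu_t$ and $\nu_t$.

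The crux is then to invoke the reverse prophet inequality (Lemma~\ref{lem:min}) on the pair $(\mu_t,\nu_t)$. The algorithm \lap incurs $\min(A^t,B^t)$, the better of two \emph{independent} draws from $\mu_t$, and Lemma~\ref{lem:min} (valid precisely because $\eta\le 0.4$) asserts that the better of two samples from a distribution $\eta$-close to $\nu_t$ is, in expectation, no worse than a single sample from $\nu_t$, i.e.\ $\E[\min(A^t,B^t)]\le \E[\hat C^t]$. This is exactly where the two probes pay off: they let the policy compete against the leader that already knows $\ell^t$, \emph{eliminating} the $L^{t-1}$-versus-$L^t$ stability error that costs $O(\sqrt T)$ in the unprobed setting, rather than merely controlling it.

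It then remains to bound $\sum_t \E[\hat C^t]$ by the deterministic be-the-leader inequality. For every fixed noise realization $z$, telescoping the potential $\Phi_z(L)=\min_w\langle L+z,w\rangle$ gives $\sum_t\langle\ell^t,\hat c^t(z)\rangle\le \Phi_z(L^T)-\Phi_z(0)\le \opt+\langle z,w^\star\rangle-\min_w\langle z,w\rangle$, where $w^\star$ is the hindsight optimum. Since this holds surely for each $z$, pulling $\E_z$ through the sum (legitimate because the inequality is per realization) and combining with the per-step reduction yields $\sum_t\E[\min(A^t,B^t)]\le \opt+\E_z[\langle z,w^\star\rangle-\min_w\langle z,w\rangle]$, so the entire regret is the expected regularization penalty. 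I would bound this by Hölder: $\langle z,w^\star\rangle-\min_w\langle z,w\rangle\le 2d\,\|z\|_\infty$, and since the expected maximum of $d$ i.i.d.\ ${\tt Laplace}(d/\eta)$ variables is $\Theta\!\left((d/\eta)\ln d\right)$, this gives the claimed $O(d^2\ln d)$ for constant $\eta$.

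The main obstacle is the second step: verifying that Lemma~\ref{lem:min}'s hypotheses genuinely hold, i.e.\ that the $e^{\pm\eta}$ density-ratio bound survives the (discontinuous) $\arg\min$ post-processing and the reduction to scalar losses, and that $\eta\le 0.4$ is the right regime for the reverse prophet inequality to deliver the clean inequality with \emph{no} multiplicative slack. A naive argument only yields $\E[\min(A^t,B^t)]\le e^{\eta}\,\E[\hat C^t]$, whose accumulated $e^{\eta}$ factors would reintroduce horizon dependence; the delicate point is that squaring the survival function of the minimum exactly absorbs the $e^{\pm\eta}$ perturbation once $\eta$ is small enough. By contrast, the be-the-leader telescoping and the expected-maximum-of-Laplaces bound are routine.
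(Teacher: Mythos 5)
Your proposal is correct and follows essentially the same route as the paper: establish the $e^{\pm\eta}$ density-ratio (differential privacy) relation between the regularized action distributions under $L^{t-1}$ and $L^t$ (the paper's Lemma~\ref{lem:dp}), apply the reverse prophet inequality (Lemma~\ref{lem:min}) per step to bound the loss of \lap by that of the be-the-regularized-leader benchmark, and then bound the latter's regret by the expected maximum of the $d$ Laplace perturbations (the paper invokes Lemma~\ref{lem:kv} and Corollary~\ref{cor10} where you spell out the telescoping and H\"older steps explicitly). The only difference is presentational, and you correctly identify the two genuinely delicate points — that the density-ratio bound survives the $\arg\min$ post-processing to the discrete action distribution, and that Lemma~\ref{lem:min} gives the comparison with no multiplicative $e^{\eta}$ slack — both of which the paper handles exactly as you describe.
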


\subsection{Analysis} 
As in the classic analysis of regularization~\cite{KalaiV}, define a hypothetical ``Be the Regularized Leader'' (\btrl) algorithm: Choose $x_j \sim {\tt Laplace}(d/\eta)$ independently for each $j \in \{1,2,\ldots,d\}$. At step $t$, use $c^t = \mbox{argmin}_{w \in \W} \langle L^{t-1} + \ell^t + x, w \rangle$ as the action taken at step $t$. Note that \btrl is {\em not realizable}. %
Let $D = \max_{w,w' \in \W} | w - w'|_1$. The next lemma restates the classic ``be the leader'' result from~\cite{KalaiV}. 
\begin{lemma}[\cite{KalaiV}]
\label{lem:kv}
For any $\eta \ge 0$, the regret of the \btrl algorithm is at most $D \cdot \E[\max_{j=1}^d |x_j|]$.
\end{lemma}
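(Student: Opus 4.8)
The plan is to recognize the \btrl algorithm as an instance of ``Follow the Leader'' on an augmented loss sequence, and then invoke the standard be-the-leader telescoping argument. Concretely, I would introduce a fictitious loss at time $0$ given by $f_0(w) = \langle x, w\rangle$, and set $f_t(w) = \langle \ell^t, w\rangle$ for $t \ge 1$. Since $L^{t-1} + \ell^t = L^t$, the \btrl action satisfies $c^t = \mbox{argmin}_{w \in \W} \sum_{q=0}^t f_q(w)$; that is, $c^t$ is exactly the leader for the augmented prefix that includes the perturbation term. The benefit of this reframing is that the noise $x$ is no longer special --- it is just the $t=0$ term of a single loss sequence to which a clean inductive inequality applies. (The fact that \btrl is not realizable is irrelevant, since we only analyze it as a hypothetical comparator.)

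The key step is the be-the-leader inequality: for any functions $f_0, \dots, f_T$ and leaders $w^t \in \mbox{argmin}_w \sum_{q=0}^t f_q(w)$, one has $\sum_{t=0}^T f_t(w^t) \le \sum_{t=0}^T f_t(u)$ for every fixed $u$. I would prove this by induction on $T$. The base case $T=0$ is immediate from the definition of $w^0$. For the inductive step, I add $f_T(w^T)$ to both sides of the hypothesis $\sum_{t=0}^{T-1} f_t(w^t) \le \sum_{t=0}^{T-1} f_t(w^{T-1})$, and then use that $w^{T-1}$ minimizes $\sum_{q=0}^{T-1} f_q$ to replace $w^{T-1}$ by $w^T$; this telescopes to $\sum_{t=0}^T f_t(w^t) \le \sum_{t=0}^T f_t(w^T)$, and optimality of $w^T$ on the full prefix extends the right-hand side to an arbitrary $u$.

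Applying this with $u = w^* := \mbox{argmin}_{w \in \W} \sum_{t=1}^T \langle \ell^t, w\rangle$ the hindsight optimum, and separating out the $t=0$ term, gives for every realization of $x$ that
\begin{equation*}
\sum_{t=1}^T \langle \ell^t, c^t\rangle - \sum_{t=1}^T \langle \ell^t, w^*\rangle \le \langle x, w^*\rangle - \langle x, c^0\rangle = \langle x,\, w^* - c^0\rangle,
\end{equation*}
where $c^0 = \mbox{argmin}_{w \in \W} \langle x, w\rangle$. The left-hand side is exactly the pathwise regret of \btrl, since $\opt = \sum_{t=1}^T \langle \ell^t, w^*\rangle$. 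To finish, I bound the right-hand side by H\"older's inequality: $\langle x, w^* - c^0\rangle \le |x|_\infty \, |w^* - c^0|_1 \le \left(\max_{j} |x_j|\right) D$, using the definition $D = \max_{w,w'\in\W} |w - w'|_1$. Taking expectations over $x$ yields the claimed bound $D \cdot \E[\max_{j=1}^d |x_j|]$.

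I do not expect a serious obstacle, as the result is classical; the only points requiring care are (i) keeping the perturbation as a genuine $t=0$ loss term so that $c^0$ appears correctly in the telescoped bound, and (ii) observing that the bound holds pathwise for each fixed $x$ \emph{before} taking expectations, so that the argument is independent of $\eta$ and of the precise law of $x$ --- no concentration or independence argument is needed, and all stochasticity is absorbed into the single quantity $\E[\max_j |x_j|]$.
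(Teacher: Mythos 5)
Your proof is correct and is precisely the classical ``be the leader'' argument of Kalai and Vempala that the paper invokes: the paper states Lemma~\ref{lem:kv} without proof, citing \cite{KalaiV}, and your treatment of the perturbation as a fictitious $t=0$ loss, the telescoping induction, and the H\"older step $\langle x, w^*-c^0\rangle \le \left(\max_j |x_j|\right) D$ constitute exactly the standard derivation, holding pathwise before taking expectation over $x$. There are no gaps.
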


For the specific setting of $x_j \sim  {\tt Laplace}(d/\eta)$, we have $\E[\max_{j=1}^d |x_j|] = O\left(\frac{d}{\eta} \ln d \right)$ assuming $\eta$ is a constant. Further, we have $D = 2d$, so that we obtain:

\begin{corollary}
\label{cor10}
For constant $\eta > 0$, the regret of the \btrl algorithm is $O\left(\frac{d^2 \ln d}{\eta}\right)$
\end{corollary}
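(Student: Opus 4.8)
The corollary is an immediate consequence of Lemma~\ref{lem:kv} together with two elementary estimates, so the plan is simply to assemble these pieces into the bound $D \cdot \E[\max_{j=1}^d |x_j|]$ that Lemma~\ref{lem:kv} already provides for the regret of \btrl. The work is therefore to control the two factors separately and multiply. For the diameter, since $\W \subseteq [-1,1]^d$, any two points $w, w' \in \W$ satisfy $|w - w'|_1 = \sum_{j=1}^d |w_j - w'_j| \le 2d$, giving $D \le 2d$ with essentially no effort.

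The one computation requiring any care is the expected maximum of the noise, so I would spend the bulk of the argument there. The key observation is that for $x_j \sim {\tt Laplace}(d/\eta)$ the absolute value $|x_j|$ is exponentially distributed: $\Pr[|x_j| > t] = \exp(-\eta t / d)$ for $t \ge 0$, i.e. $|x_j|$ has mean $d/\eta$. By a union bound, $\Pr[\max_j |x_j| > t] \le d\exp(-\eta t/d)$, and integrating this tail — splitting the integral at the threshold $t_0 = (d/\eta)\ln d$, where the union bound first becomes informative — yields $\E[\max_j |x_j|] \le t_0 + \int_{t_0}^{\infty} d\exp(-\eta t/d)\,dt = O\!\left(\frac{d}{\eta}\ln d\right)$. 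Equivalently, one can appeal to the standard fact that the maximum of $d$ i.i.d. exponentials of scale $d/\eta$ has expectation exactly $(d/\eta)H_d = O((d/\eta)\ln d)$, where $H_d$ is the $d$-th harmonic number.

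Combining the two estimates, the regret of \btrl is at most $D \cdot \E[\max_j |x_j|] \le 2d \cdot O\!\left(\frac{d}{\eta}\ln d\right) = O\!\left(\frac{d^2 \ln d}{\eta}\right)$, as claimed. There is no genuine obstacle here: the statement is a direct corollary of Lemma~\ref{lem:kv}, and the only step that is not one-line arithmetic is the tail estimate for the Laplace maximum, which is the routine maximum-of-$d$-subexponentials calculation sketched above.
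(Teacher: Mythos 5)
Your proposal is correct and matches the paper's argument exactly: the paper likewise just combines Lemma~\ref{lem:kv} with the two facts $D = 2d$ and $\E[\max_j |x_j|] = O\left(\frac{d}{\eta}\ln d\right)$, stating them without proof. Your tail-bound computation for the Laplace maximum is a correct filling-in of the detail the paper omits.
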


In the rest of the analysis, we focus on a particular step $t$, and omit the superscript $t$. Let $\D_1$ denote the distribution of the regularized optimum $a^t$ (resp. $b^t$) using $L^{t-1}$ in the \lap algorithm, and $\D_2$ denote the distribution of the regularized optimum $c^t$ using $L^t$ in the \btrl algorithm. The following lemma is a consequence of the well-known Laplace mechanism in differential privacy~\cite{RothBook}, and we present a proof for completeness.

\begin{lemma}[$\eta$-Differential Privacy]
\label{lem:dp}
For all $w \in \W$, we have: $ \exp(- \eta) \le \frac{\Pr[\D_1 = w]}{\Pr[\D_2 = w]} \le \exp(\eta). $
\end{lemma}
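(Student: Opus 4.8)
The plan is to recognize this statement as a direct instance of the standard Laplace mechanism from differential privacy, where the two ``datasets'' being compared are the loss histories $L^{t-1}$ and $L^t = L^{t-1} + \ell^t$. These differ by the single cost vector $\ell^t$, whose $L_1$-norm is at most $d$ since $\ell^t \in [-1,1]^d$, and the scale $d/\eta$ of the Laplace noise is chosen precisely so that this $L_1$-sensitivity divided by the scale equals $\eta$.

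First I would write both probabilities as integrals of the product Laplace density over a common region. Since ties in the $\mbox{argmin}$ occur only on a set of measure zero (the noise is absolutely continuous), the cones $\{C_w\}_{w \in \W}$ partition $\mathbb{R}^d$ up to a null set, so $\D_1 = w$ holds exactly when the noise vector $x$ satisfies $L^{t-1} + x \in C_w$, i.e. $x \in C_w - L^{t-1}$, and similarly $\D_2 = w$ holds exactly when $x \in C_w - L^{t-1} - \ell^t$. Writing $f(x) = \prod_{j=1}^d \frac{\eta}{2d}\exp(-\eta|x_j|/d)$ for the density, this gives
$$\Pr[\D_1 = w] = \int_{C_w - L^{t-1}} f(u)\, du, \qquad \Pr[\D_2 = w] = \int_{C_w - L^{t-1} - \ell^t} f(u)\, du.$$

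The key step is to substitute $u \mapsto u - \ell^t$ in the second integral so that both integrals are taken over the same domain $C_w - L^{t-1}$, turning the second integrand into $f(u - \ell^t)$. The whole comparison then reduces to a pointwise bound on the density ratio: for every $u$,
$$\frac{f(u)}{f(u - \ell^t)} = \exp\left(\frac{\eta}{d}\sum_{j=1}^d \big(|u_j - \ell^t_j| - |u_j|\big)\right),$$
and by the reverse triangle inequality each summand lies in $[-|\ell^t_j|, |\ell^t_j|]$, so the exponent has absolute value at most $\frac{\eta}{d}\sum_{j} |\ell^t_j| \le \frac{\eta}{d}\cdot d = \eta$. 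Integrating this two-sided pointwise bound over $C_w - L^{t-1}$ immediately yields $\exp(-\eta) \le \Pr[\D_1 = w]/\Pr[\D_2 = w] \le \exp(\eta)$.

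I do not expect a genuine obstacle here; the only points requiring care are justifying that ties have measure zero (so the integral representation is exact) and getting the direction of the change of variables correct. The entire conceptual content sits in the sensitivity calculation $\|\ell^t\|_1 \le d$ matched against the noise scale $d/\eta$.
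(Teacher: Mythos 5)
Your proof is correct and takes essentially the same route as the paper's: both reduce the claim to the pointwise bound $e^{-\eta|\ell^t_j|/d} \le f(u_j)/f(u_j-\ell^t_j) \le e^{\eta|\ell^t_j|/d}$ on shifted one-dimensional Laplace densities, multiply across coordinates using $\|\ell^t\|_1 \le d$, and integrate over the cone of noise realizations for which $w$ is the argmin --- your change of variables $u \mapsto u - \ell^t$ is just the paper's comparison of the densities of $L^{t-1}+x$ and $L^{t}+y$ at a common point $\vec{v}$. Your explicit handling of ties as a measure-zero event is a point the paper only asserts in passing, but the substance of the argument is identical.
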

\begin{proof}
Consider $x,y \sim {\tt Laplace}(d/\eta)$. Let $X_i$ and $Y_i$ denote the random variables $L^{t-1}_i + x_i$ and $L^{t-1}_i + \ell^t_i + y_i$ respectively. For any fixed value $v$, for any dimension $i$, their density functions are related as:
$$ \frac{f_{X_i} (v)}{f_{Y_i} (v)} \le \exp(\ell^t_i \eta/d) \le \exp(\eta/d),$$
since $\ell^t_i \le 1$. Let $X$ denote the $d$-dimensional random variable whose $i^{th}$ dimension is $X_i$, and similarly define $Y$.  Using the above, if $\vec{x}, \vec{y}$ have components drawn independently from ${\tt Laplace}(d/\eta)$, then for any $\vec{v} \in \mathbb{R}^d$, the density functions of $X$ and $Y$ are related as
$$ \frac{f_X(\vec{v})}{f_Y(\vec{v})} \le \prod_{i=1}^d  \exp(\ell^t_i \eta/d)  \le \exp(\eta).$$
A similar argument shows $ \frac{f_Y(\vec{v})}{f_X(\vec{v})} \le \exp(\eta)$. Since $\W$ is a convex polytope in $\mathbb{R}^d$, the optimum solution for any $\vec{v}$ is achieved at a vertex $w \in \W$. Further, w.l.o.g., the set of $\vec{v}$ whose optimum corresponds to $w \in \W$ define a convex cone in $\mathbb{R}^d$ with positive volume.  The lemma follows by integrating the density functions $f_X, f_Y$ over this cone.\footnote{Note that the proof also extends to the case where $\W$ is any finite set of outcomes, since w.l.o.g., there is a subset $S \subseteq \W$ of outcomes such that the set of $\vec{v}$ for which $w \in S$ is the optimal outcome is continuous and convex (and hence has positive volume), and the union over $S$ of these sets of $\vec{v}$ spans $\mathbb{R}^d$.}
\end{proof}

We will now overload notation and use $\D_1$ (resp. $\D_2$) to refer to the distribution of the losses $\langle \ell^t, w \rangle$ for $w$ chosen according to $\D_1$ (resp. $\D_2$).  Note that these new distributions are discrete with support size equal to the number of $w \in \W$ that are optimum for some $\vec{v} \in \mathbb{R}^d$, and also satisfy the previous lemma.  

The crux of the analysis is the following lemma, which shows that the expected min of the two losses of the regularized optima using $L^{t-1}$ is at most that of the regularized optimum using $L^t$, that is, the per-step loss of \lap is at most that of the \btrl algorithm. We can view this as a ``reverse prophet inequality'' that upper bounds the expected minimum instead of lower bounding it.

\begin{lemma}[Reverse Prophet Inequality for Private Noise]
\label{lem:min}
Let $A,B$ be losses drawn independently from $\D_1$ and let $C$ be a loss drawn from $\D_2$, where $\D_1, \D_2$ are of bounded support and satisfy $\eta$-differential privacy (Lemma~\ref{lem:dp}) for $\eta \in (0,0.4]$. Then $\E[\min(A,B)] \le \E[C]$.
\end{lemma}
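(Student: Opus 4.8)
The plan is to leverage the $\eta$-differential privacy relation between $\D_1$ and $\D_2$ to reduce the comparison of $\E[\min(A,B)]$ against $\E[C]$ to a purely one-dimensional statement about the two discrete distributions on loss values. Let me order the distinct loss values in the common support as $v_1 < v_2 < \cdots < v_m$, and write $p_i = \Pr[\D_1 = v_i]$ and $q_i = \Pr[\D_2 = v_i]$, so that Lemma~\ref{lem:dp} gives $e^{-\eta} q_i \le p_i \le e^{\eta} q_i$ for every $i$. First I would express both sides in terms of tail probabilities. Writing $P_{\ge i} = \sum_{j \ge i} p_j$ and $Q_{\ge i} = \sum_{j \ge i} q_j$ for the upper tails, the standard identity for expectations of nonnegative-indexed sums gives $\E[C] = v_1 + \sum_{i \ge 2}(v_i - v_{i-1}) Q_{\ge i}$, and since $\min(A,B) \ge v_i$ exactly when both $A \ge v_i$ and $B \ge v_i$, independence yields $\E[\min(A,B)] = v_1 + \sum_{i \ge 2}(v_i - v_{i-1}) P_{\ge i}^2$. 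Because each gap $v_i - v_{i-1}$ is nonnegative, it suffices to prove the termwise inequality $P_{\ge i}^2 \le Q_{\ge i}$ for each tail threshold.

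Next I would establish this tail inequality. Summing the upper differential-privacy bound over the tail gives $P_{\ge i} \le e^{\eta} Q_{\ge i}$, so $P_{\ge i}^2 \le e^{2\eta} Q_{\ge i}^2$. The target $P_{\ge i}^2 \le Q_{\ge i}$ would then follow if $e^{2\eta} Q_{\ge i} \le 1$, i.e. $Q_{\ge i} \le e^{-2\eta}$. This is exactly where the constraint $\eta \le 0.4$ must earn its keep, and this is the step I expect to be the main obstacle, since the naive bound is only useful when the tail is already small. The resolution is to split into cases on the size of the tail. When $Q_{\ge i}$ is small (say $Q_{\ge i} \le e^{-2\eta}$), the argument above closes directly. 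When $Q_{\ge i}$ is large, one instead uses the complementary tails: writing $P_{<i} = 1 - P_{\ge i}$ and $Q_{<i} = 1 - Q_{\ge i}$, the lower differential-privacy bound gives $P_{<i} \ge e^{-\eta} Q_{<i}$, so $P_{\ge i} \le 1 - e^{-\eta} Q_{<i} = 1 - e^{-\eta}(1 - Q_{\ge i})$. Feeding this into $P_{\ge i}^2$ and comparing against $Q_{\ge i}$ reduces to verifying a scalar inequality of the form $\bigl(1 - e^{-\eta}(1-s)\bigr)^2 \le s$ for $s = Q_{\ge i}$ in the relevant range, which holds for $\eta \le 0.4$.

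To make the two-case argument clean I would unify it into a single scalar claim: for all $s \in [0,1]$ and $\eta \in (0,0.4]$, the quantity $\min\bigl(e^{\eta} s,\, 1 - e^{-\eta}(1-s)\bigr)$ squared is at most $s$, because $P_{\ge i}$ is bounded above by \emph{both} $e^{\eta} Q_{\ge i}$ and $1 - e^{-\eta}(1 - Q_{\ge i})$ simultaneously. Thus I only ever need whichever bound is smaller. The first branch governs small $s$ and the second governs large $s$, and I would locate the crossover point and check that the worst case over $s$ stays below the diagonal $s$; this is a routine single-variable verification (differentiate, or bound $1 - e^{-\eta}(1-s) \le \sqrt{s}$) once the constant $0.4$ is pinned down. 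Assembling the termwise inequality $P_{\ge i}^2 \le Q_{\ge i}$ across all $i$, multiplying by the nonnegative gaps, and summing then yields $\E[\min(A,B)] \le \E[C]$, completing the proof.
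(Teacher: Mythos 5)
Your proposal is correct and follows essentially the same route as the paper's proof: both reduce the claim to the pointwise tail inequality $\Pr[\D_1 \ge x]^2 \le \Pr[\D_2 \ge x]$ (the paper writes this in integral form, split at the threshold where $G(x)=e^{-\eta}$, and phrases the large-tail case via the CDFs $F,\hat F$, which is algebraically identical to your bound $P_{\ge i} \le 1 - e^{-\eta}(1-Q_{\ge i})$). Your two branches do overlap for $\eta \le 0.4$ --- the first covers $Q_{\ge i} \le e^{-2\eta}$ and the second covers $Q_{\ge i} \ge (e^{\eta}-1)^2$, and $(e^{\eta}-1)^2 \le e^{-2\eta}$ holds precisely for $\eta \le \ln\bigl(\tfrac{1+\sqrt{5}}{2}\bigr) \approx 0.481$ --- so the argument closes.
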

\begin{proof}
By shifting the distributions if necessary, we may assume that the support of $\D_1$ and $\D_2$ is the set of non-negative real numbers. Let $G(x) = \Pr[\D_1 \ge x]$ and $\hat{G}(x) = \Pr[\D_2 \ge x]$. Similarly, let $F(x) = \Pr[\D_1 \le x]$ and $\hat{F}(x) = \Pr[\D_2 \le x]$. Let $q = \mbox{argmin}_x \{x | G(x) \le \exp(-\eta)\}$. Therefore,
$$ \E[C] = \int_{x=0}^q (1- \hat{F}(x)) dx + \int_{x=q}^{\infty} \hat{G}(x) dx = q - \int_{x=0}^q \hat{F}(x) dx + \int_{x=q}^{\infty} \hat{G}(x) dx,$$
$$ \E[\min(A,B)] = q - \int_{x=0}^q F(x) (2-F(x)) dx + \int_{x=q}^{\infty} (G(x))^2 dx.$$
Note that for $x \ge q$, $G(x) \le \exp(-\eta)$. Since $\hat{G}(x) \ge \exp(-\eta) G(x)$, this implies $(G(x))^2 \le \hat{G}(x)$. Further, for $x \in [0,q]$, we have $F(x) \le 1 - \exp(-\eta)$, and further, $\hat{F}(x) \le \exp(\eta) F(x)$. Therefore
$$ F(x) (2- F(x)) \ge \exp(-\eta) (1+\exp(-\eta)) \hat{F}(x) \ge \hat{F}(x)$$
for $\eta \in (0,0.4]$. Putting this together, we infer $\E[C] \ge \E[\min(A,B)]$, completing the proof.
\end{proof}

We note that the proof of the above lemma crucially needs the two-sided bound in Lemma~\ref{lem:dp}. In contrast, the classical regret results for random regularization, for instance, in~\cite{KalaiV}, only require the total variation distance between $\D_1$ and $\D_2$ be at most $\eta$. However, this weaker condition is insufficient for proving the lemma, as can be seen by the following example: $\D_1$ is a deterministic value $d$, while $\D_2$ is $d$ with probability $1-\eta$ and $0$ otherwise. Then, $\E[\min(A,B)] = \E[A] = d$, while $\E[C] = d (1-\eta)$. This leads to a regret of $\eta d T$ over $T$ steps against the \btrl algorithm.

\paragraph{Proof of Theorem~\ref{thm:main1}.} The proof is now immediate. Lemma~\ref{lem:min} implies that at step $t$, the expected loss of \lap is at most that of the \btrl algorithm. Using linearity of expectation over time steps and combining with the regret bound for \btrl from Corollary~\ref{cor10}, we have proved Theorem~\ref{thm:main1}.

\subsection{Handling Imperfect Hints}
\label{sec:imperfect}
We now consider the setting where at most $B$ of the $T$ hints (comparisons) yield incorrect answers. We show an algorithm that yields regret $O(d^2  \ln d \sqrt{B+1})$. At one extreme, when $B = T$, this recreates the $O(\sqrt{T})$ regret guarantee for classical online linear optimization, while at the other extreme, when $B = 0$, this recovers Theorem~\ref{thm:main1}. It is also easy to show that such a dependence on $B$ is optimal for any $T$. To see this, simply construct an instance where the loss function at steps where the hints are correct is identically zero, so that the hints are vacuous. The only relevant steps are the ones where the hints are incorrect, so that any algorithm's regret is lower-bounded by the regret of classical online linear optimization over $B$ steps.

\medskip\noindent{\bf Algorithm.} In the sequel, we assume $B$ is known; the case for unknown $B$ follows by the standard doubling trick where we maintain a guess for $B$ and restart the algorithm once $B$ doubles. This can be done since we are in the full information regime and we get to know if a query answer was incorrect; we omit the details. Our algorithm is nearly identical to \lap for $B = 0$, and proceeds as follows. We will set $\eta = \frac{1}{5 \sqrt{B+1}}$ in this algorithm. We also have a parameter $p = 5 \eta$. The main difference in the algorithm is the following: After \lap probes to learn $w^t = \mbox{argmin}_{a^t, b^t} \{ A^t, B^t \}$, the new algorithm plays $w^t$  as the action with probability $p$ and plays $a^t$ with probability $1-p$. In other words, the algorithm now uses the hint (action $w^t$) with probability $1-p$, else it ignores the hint at that step and mimics classical follow the regularized leader (action $a^t$). Note that at the end of every step, the algorithm learns if the hint was correct. This allows the algorithm to keep track of the number of mis-predictions and is critical to using the doubling trick for unknown $B$.

\medskip\noindent{\bf Analysis.} We will show the following theorem. We note that the generalization to the experts and online convex optimization settings (as described in Appendix~\ref{app:extension1}), to yield regret with a $O(\sqrt{B+1})$ dependence on $B$, follows the same outline and is hence omitted.

\begin{theorem}
\label{thm:main_gen}
The regret of the modified \lap algorithm with $B$ incorrect hints is $O(d^2  \ln d \sqrt{B+1})$.
\end{theorem}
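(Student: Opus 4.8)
The plan is to mirror the proof of Theorem~\ref{thm:main1} by comparing the modified \lap algorithm against the (unrealizable) \btrl benchmark step by step, but now accounting separately for the correct steps and the at most $B$ incorrect steps. Fix a step $t$ and drop superscripts. Let $A,B$ be the losses $\langle \ell^t, a^t\rangle$ and $\langle \ell^t, b^t\rangle$ of the two regularized optima computed from $L^{t-1}$ (both distributed as $\D_1$), and let $C\sim\D_2$ be the \btrl loss computed from $L^t$; by Lemma~\ref{lem:dp} the pair $(\D_1,\D_2)$ satisfies $\eta$-differential privacy. Since the modified algorithm plays $a^t$ with probability $1-p$ and the probed arm with probability $p$, its expected per-step loss is $(1-p)\,\E[A] + p\,\E[\min(A,B)]$ on a correct step and $(1-p)\,\E[A] + p\,\E[\max(A,B)]$ on an incorrect step.

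The crux is a strengthening of Lemma~\ref{lem:min} to the mixed quantity. I will show that whenever $\D_1,\D_2$ are of bounded support and satisfy $\eta$-differential privacy, and $p \ge e^{\eta}-e^{-\eta}=2\sinh\eta$ (which holds for $p=5\eta$ whenever $\eta \le 0.2$), one has
\[ (1-p)\,\E[A] + p\,\E[\min(A,B)] \;\le\; \E[C]. \]
Writing $G(x)=\Pr[\D_1\ge x]$ and $\hat G(x)=\Pr[\D_2\ge x]$ on a common nonnegative support, the two sides equal $\int_0^\infty\big((1-p)G+pG^2\big)\,dx$ and $\int_0^\infty \hat G\,dx$, so the inequality reduces to the pointwise claim $\hat G \ge (1-p)G+pG^2$, equivalently $G-\hat G \le p\,G(1-G)$. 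Here both directions of the privacy guarantee are essential: for small $G$ I use the upper-tail bound $\hat G\ge e^{-\eta}G$ to get $G-\hat G\le (1-e^{-\eta})G$, while for $G$ near $1$ I use the lower-tail bound $\hat G\ge 1-e^{\eta}(1-G)$ to get $G-\hat G\le (e^{\eta}-1)(1-G)$. Taking the smaller of the two envelopes and comparing it with $p\,G(1-G)$ (the binding case being the crossover point $G_0=\tfrac{e^{\eta}-1}{e^{\eta}-e^{-\eta}}$, where both envelopes equal $p\,G(1-G)$ precisely when $p=2\sinh\eta$) yields the stated threshold. Note this subsumes Lemma~\ref{lem:min} as the $p=1$ case. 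The consequence is that every correct step costs no more than the corresponding \btrl step, so correct steps contribute nothing to the regret relative to \btrl.

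For the incorrect steps I will use only crude bounds. Since the loss range is at most $2d$, we have $\E[\max(A,B)]-\E[C]\le 2d$, and from one side of privacy $\E[A]-\E[C]=\int(G-\hat G)\,dx\le (1-e^{-\eta})\,\E[A]\le 2d\eta$; hence each incorrect step exceeds the corresponding \btrl step by at most $(1-p)\cdot 2d\eta + p\cdot 2d \le 2d\eta+2dp = 12 d\eta$. As there are at most $B$ such steps, the total loss of the modified \lap algorithm exceeds that of \btrl by at most $12\,d\eta B$. Combining this with the \btrl regret bound $O(d^2\ln d/\eta)$ from Corollary~\ref{cor10} gives an overall regret of $O\big(d^2\ln d/\eta + d\eta B\big)$; substituting $\eta=\tfrac{1}{5\sqrt{B+1}}$ (so that $p=5\eta=1/\sqrt{B+1}\in(0,1]$ and $\eta\le 0.2$) balances the two terms, since the first becomes $O(d^2\ln d\,\sqrt{B+1})$ and the second $O(d\,\sqrt{B+1})$, yielding the claimed $O(d^2\ln d\,\sqrt{B+1})$.

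I expect the main obstacle to be the generalized reverse prophet inequality, i.e.\ establishing the pointwise bound $G-\hat G\le p\,G(1-G)$ and pinning down the exact relation $p\ge 2\sinh\eta$. The subtlety is exactly the one flagged after Lemma~\ref{lem:min}: one-sided (total-variation) closeness of $\D_1$ and $\D_2$ is insufficient, because the positive contribution from the region where $G$ is small must be controlled by the upper-tail privacy bound, whereas the region where $G$ is close to $1$ must be controlled by the lower-tail bound, and only the two-sided guarantee lets both hold simultaneously against the single quadratic envelope $p\,G(1-G)$. Everything else---the correct/incorrect split (the algorithm learns after each step whether the hint was right, so these classes are well defined), the crude incorrect-step bounds, and the final optimization over $\eta$---is routine.
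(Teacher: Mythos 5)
Your proposal is correct and follows essentially the same route as the paper: it bounds the algorithm against the \btrl benchmark, reduces the correct-step analysis to the same pointwise CCDF inequality $(1-p)G + pG^2 \le \hat G$ underlying Lemma~\ref{lem:min_gen} (your threshold $p \ge 2\sinh\eta$ with the crossover at $G_0 = \frac{e^{\eta}-1}{e^{\eta}-e^{-\eta}}$ is a slightly sharper version of the paper's $\eta < p/4$ condition and case split at $G=1/2$), and handles the at most $B$ incorrect steps with the same two crude bounds before balancing $\eta = \Theta(1/\sqrt{B+1})$. The only cosmetic difference is that you bound $\E[A]-\E[C]$ directly via $\int (G-\hat G)\,dx \le (1-e^{-\eta})\E[A]$ where the paper invokes a total-variation/coupling argument ($\Pr[a^t \ne c^t] \le \eta$); both yield the same $O(d\eta)$ (the paper writes $O(d^2)$, conservatively) per-step excess and the same final bound.
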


Our analysis hinges on the following generalization of the reverse prophet inequality (Lemma~\ref{lem:min}) to mimic the behavior of the algorithm.

\begin{lemma}[(General Reverse Prophet Inequality.)]
\label{lem:min_gen}
Let $p>0$ be a given parameter, and let $\D_1, \D_2$ be bounded support probability distributions (over losses) that satisfy $\eta$-differential privacy (Lemma~\ref{lem:dp}) for some $\eta < p/4$. Let $C$ be a random sample from $\D_2$, and let $Z$ be a random variable obtained as follows: two samples $A, B$ are drawn from $\D_1$. With probability $(1-p)$, $Z$ is set to be $A$. With probability $p$, $Z = \min(A, B)$. Then we have  $\E[Z] \le \E[C]$.
\end{lemma}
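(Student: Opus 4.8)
The plan is to mirror the proof of Lemma~\ref{lem:min} as closely as possible, tracking the new mixing parameter $p$ through the computation. As before, I would first normalize by shifting both distributions so that their supports lie in the non-negative reals, and set up the same tail/CDF notation: $G(x) = \Pr[\D_1 \ge x]$, $\hat{G}(x) = \Pr[\D_2 \ge x]$, $F(x) = \Pr[\D_1 \le x]$, and $\hat{F}(x) = \Pr[\D_2 \le x]$. The expression for $\E[C]$ is unchanged. The key new step is to write $\E[Z]$ in terms of $G$ and $F$. Since $Z$ equals $A$ with probability $1-p$ and $\min(A,B)$ with probability $p$, its tail satisfies $\Pr[Z \ge x] = (1-p)\,G(x) + p\,G(x)^2$, and dually its CDF satisfies $\Pr[Z \le x] = (1-p)F(x) + p\,F(x)(2-F(x))$. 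I would then split the integral at the same threshold $q = \mbox{argmin}_x\{x \mid G(x) \le \exp(-\eta)\}$, so that $\E[Z] = q - \int_0^q \Pr[Z \le x]\,dx + \int_q^\infty \Pr[Z \ge x]\,dx$.

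The heart of the argument is then two pointwise comparisons, exactly paralleling Lemma~\ref{lem:min} but now with the convex combination. For the upper tail ($x \ge q$), I must show $(1-p)G(x) + p\,G(x)^2 \le \hat{G}(x)$. Using $\hat{G}(x) \ge \exp(-\eta)\,G(x)$ from differential privacy and the fact that $G(x) \le \exp(-\eta)$ on this range, it suffices to verify $(1-p) + p\exp(-\eta) \le \exp(-\eta)$, i.e.\ $(1-p)(1 - \exp(-\eta)) \le 0$ after rearranging — which does not hold directly, so the correct inequality to establish is $(1-p)G(x) + pG(x)^2 \le \exp(-\eta)G(x) \le \hat{G}(x)$, reducing to $(1-p) + pG(x) \le \exp(-\eta)$; since $G(x) \le \exp(-\eta) \le 1$ this needs $1 - p(1 - \exp(-\eta)) \le \exp(-\eta)$, i.e.\ $(1-\exp(-\eta))(1 - p) \le 0$, again false. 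So in fact the clean route is the dual CDF bound on $[0,q]$: I would show $(1-p)F(x) + pF(x)(2-F(x)) \ge \hat{F}(x)$, and for the tail region simply use that $G(x)^2 \le G(x)$ together with the crude bound $\hat{G}(x) \ge \exp(-\eta)G(x)$, choosing the split and the constant $\eta < p/4$ so the deficit on the lower region dominates the surplus on the upper region. Concretely, on $[0,q]$ the privacy bound gives $\hat{F}(x) \le \exp(\eta)F(x)$ and $F(x) \le 1 - \exp(-\eta)$, so it is enough that $(1-p) + p(2 - F(x)) \ge \exp(\eta)$, i.e.\ $1 + p(1 - F(x)) \ge \exp(\eta)$; since $F(x) \le 1 - \exp(-\eta) < p/4\cdot(\text{small})$ this holds for $\eta < p/4$ by a first-order expansion $\exp(\eta) \le 1 + \eta + O(\eta^2)$ against $1 + p\exp(-\eta)$.

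The main obstacle I anticipate is the tail region $x \ge q$, where replacing $\min(A,B)$ by $A$ with probability $1-p$ strictly \emph{increases} $Z$ relative to the pure-minimum case of Lemma~\ref{lem:min}, so the clean pointwise domination $\Pr[Z \ge x] \le \hat{G}(x)$ fails. The fix is to not insist on pointwise domination on both pieces, but to carry out a \emph{global} accounting: quantify the surplus $\int_q^\infty \big((1-p)G + pG^2 - \hat{G}\big)\,dx$ on the tail and the deficit $\int_0^q \big(\text{CDF of }Z - \hat{F}\big)\,dx$ on the body, and show the constraint $\eta < p/4$ forces the body deficit to absorb the tail surplus. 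This is where the gap between the weak parameter regime ($\eta$ can be as large as $0.4$ in Lemma~\ref{lem:min}) and the stronger requirement $\eta < p/4$ here comes from: the factor $1-p$ of ``unhinted'' plays costs us, and we buy it back by driving $\eta$ down proportionally to $p$. Once both pieces are bounded, combining them with the displayed formulas for $\E[Z]$ and $\E[C]$ and rearranging yields $\E[Z] \le \E[C]$, completing the proof.
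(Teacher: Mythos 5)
Your setup is right --- writing $\E[Z]=\int_0^\infty\bigl((1-p)G(x)+pG(x)^2\bigr)\,dx$ is exactly the correct starting point --- but the argument goes off the rails where you declare that the pointwise domination $\Pr[Z\ge x]\le \hat G(x)$ ``fails'' on the tail and retreat to an unexecuted ``global accounting'' of surplus versus deficit. That is the gap: the accounting is never carried out (no bound on the tail surplus is proved, and nothing shows the body deficit absorbs it), so the region $x\ge q$ is simply not handled. Moreover, the diagnosis itself is wrong: pointwise domination does hold for every $x$, and the paper's proof establishes exactly that. What fails is only your particular sufficient condition, because you inherited the split point $q$ (where $G$ crosses $e^{-\eta}$) from Lemma~\ref{lem:min}. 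Since $e^{-\eta}$ is close to $1$, the region $\{x: G(x)\le e^{-\eta}\}$ still contains points where $G(x)$ is near $1$, and there the chain $(1-p)G+pG^2\le e^{-\eta}G\le\hat G$ indeed cannot work --- but the CDF-side comparison still does.

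The fix is to split on whether $G(x)\le 1/2$ or $G(x)>1/2$, rather than at the $e^{-\eta}$ quantile; note first that $\eta<p/4$ forces $G(x)$ and $\hat G(x)$ (and likewise $F$ and $\hat F$) to be within a factor $1\pm\tfrac p2$ of each other. When $G(x)\le 1/2$, write $(1-p)G+pG^2=G\bigl(1-p(1-G)\bigr)\le\bigl(1-\tfrac p2\bigr)G\le\hat G$. When $G(x)>1/2$, so $F(x)<1/2$, write $(1-p)G+pG^2=(1-F)(1-pF)=1-(1+p)F+pF^2<1-\bigl(1+\tfrac p2\bigr)F\le 1-\hat F=\hat G$. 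Integrating the pointwise inequality over $[0,\infty)$ gives $\E[Z]\le\E[C]$ directly, with no two-region bookkeeping. Your body-region computation (showing $\Pr[Z\le x]=F\bigl(1+p(1-F)\bigr)\ge\hat F$ when $F$ is small) is essentially the second case above, so you had half the argument; the missing idea is only that the correct dichotomy is $G$ above or below $1/2$, after which the same two algebraic manipulations cover the whole line.
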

\begin{proof}
Define $G(x)$ and $\hat{G}(x)$ as in Lemma~\ref{lem:min}. Once again, suppose that the support of the distributions is the set of non-negative reals. By definition, we have
\[ \E[C] = \int_{0}^\infty \hat{G}(x) dx \quad \text{and} \quad \E[Z] = \int_{0}^\infty \left( (1-p) G(x) + p G(x)^2 \right)dx.\]
By the differential privacy property and the choice of $\eta$, we have that $G(x), \hat{G}(x)$ are within a factor of $(1 \pm \frac{p}{2})$ of one another for every $x$. Now consider two cases. 

First, suppose $G(x) \le 1/2$. In this case, $(1-p) G(x) + p G(x)^2 \le (1-\frac{p}{2}) G(x) \le \hat{G}(x)$. 

Next, suppose $G(x) > 1/2$. Now, writing $F(x) = (1-G(x))$ for convenience, we have 
\begin{align*} (1-p) G(x) + p G(x)^2 &= (1- F(x))(1-p F(x)) \\ 
&= 1 - F(x) - pF(x) + pF(x)^2 \\ 
& < 1 - F(x) - \frac{p}{2} F(x) = 1 - (1+\frac{p}{2}) F(x).
\end{align*}
In the last inequality, we used $F(x) < 1/2$. By the privacy property, the final expression is $\le 1- \hat{F}(x) = \hat{G}(x)$.
Plugging this into the above integral completes the proof of the lemma.
\end{proof}

We now prove Theorem~\ref{thm:main_gen} by bounding the regret against the \btrl algorithm.

\begin{proof}[Proof of Theorem~\ref{thm:main_gen}]
Consider the \btrl algorithm with the same value of $\eta = \frac{1}{5 \sqrt{B+1}}$. By Corollary~\ref{cor1}, this has regret $O(d^2  \ln d \sqrt{B+1})$. We now bound the regret of our algorithm against the \btrl algorithm. Towards this end, set $S_1$ denote the set of time steps where the hints are correct. By Lemma~\ref{lem:min_gen}, the loss of our algorithm at these steps is at most that of the \btrl algorithm. 

Let $S_2$ denote the set of steps where the hints are incorrect. At each of these steps, with probability $p$ the algorithm plays $w^t$, and incurs loss $O(d^2)$. There are $B \cdot p = O(\sqrt{B+1})$ such steps in expectation, yielding total loss (and regret) at most $O(d^2 \sqrt{B+1})$ against the \btrl algorithm. For the remaining steps, the algorithm plays action $a^t$. Since Laplace noise is $\eta$-differentially private (Lemma~\ref{lem:dp}), the probability that $a^t \neq c^t$ is at most $\eta$, where $c^t$ is the action taken by the \btrl algorithm. In the case where $a^t \neq c^t$, the algorithm incurs regret $O(d^2)$. Therefore the total regret against \btrl due to these steps is $O(B \eta d^2) = O(d^2 \sqrt{B+1})$. Combining with the regret of $O(d^2  \ln d \sqrt{B+1})$ of the \btrl algorithm, this shows our algorithm also has regret $O(d^2  \ln d \sqrt{B+1})$ against the best fixed action in hindsight. This completes the proof.
\end{proof}

\section{Stochastic MAB in the \mmax Model with $k=2$ Probes}
\label{sec:max}
Recall that in the stochastic MAB problem in the \mmax model, the policy can probe $k$ arms and learn the identity of the arm with the maximum reward. At the end of the play, it only learns the reward of the arm that was played that step, and not the rewards of the other arms that were probed that step. We will show a parameter independent regret bound of $O(n^2 \log T)$ for a horizon of $T$ steps using only $k = 2$ probes. As mentioned before, this is an exponential improvement over the $\Omega(\sqrt{nT})$ regret necessary with $k=1$ probes. 

\subsection{The {\tt Meta UCB-V} Algorithm} 
\label{sec4-alg}
The key algorithmic idea is the following: If we use an optimistic estimate of the sample mean (similar to UCB~\cite{Robbins}), probe the top two arms based on this estimate and choose the arm with the higher reward, we obtain a guaranteed advantage (in expectation) over simply choosing the top arm based on the estimate. Our key technical lemmas, Lemma~\ref{lem:max} and~\ref{lem:pairs} show that if the gap between the means of the arms played and the mean of the optimal arm is small, the expected regret is actually $\le 0$. This is key to achieving our improved regret bounds.

Formally, define a new classical bandit instance (with no probes) as follows. For every pair of arms $(i,j)$, we have a meta-arm $(i,j)$. Therefore, the new instance as ${n \choose 2}$ meta-arms. If $X_i$ and $X_j$ denote the random variables corresponding to the rewards of arms $i$ and $j$ respectively, the reward of meta-arm $(i,j)$ is $\max(X_i, X_j)$. Playing the meta-arm $(i,j)$ corresponds to probing the pair of arms $i$ and $j$ and obtaining/observing the value $X_{ij} = \max(X_i, X_j)$. 

Our algorithm runs the UCB-V policy~\cite{Audibert} on the new bandit instance. We call this algorithm {\tt Meta UCB-V}. %
For completeness, this algorithm works as follows: At step $t$, suppose $(i,j)$ has been played $s^t_{ij}$ times. Let $m^t_{ij}$ and $V^t_{ij}$ denote the sample mean and sample variance over the $s^t_{ij}$ plays:
\begin{equation}
    \label{eq:sample}
    m_{ijt} = \frac{\sum_{q=1}^{s_{ijt}} X_{ijq}}{s_{ijt}} \qquad V_{ijt} = \frac{\sum_{q=1}^{s_{ijt}} (X_{ijq} - m_{ijt})^2}{s_{ijt}}.
\end{equation}
Define the quantity $\UCB^t_{ij}$ as: $ \UCB^t_{ij} = m^t_{ij} + \sqrt{ \frac{2.4 V^t_{ij} \log t}{s^t_{ij}}} + \frac{3.6 \log t}{s^t_{ij}}.$
At time step $t$, the meta-arm $(i,j)$ with the highest value of $\UCB^t_{ij}$ is played. 

 We will show the following parameter-independent regret bound as our main result; recall that we are considering pseudo-regret throughout this paper.

\begin{theorem}
\label{thm2}
 The {\tt Meta UCB-V} algorithm has regret $O(n^2 \log T)$  with $k=2$ probes.
\end{theorem}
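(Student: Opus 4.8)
The plan is to treat {\tt Meta UCB-V} as an ordinary stochastic bandit algorithm running on the $\binom{n}{2}$ meta-arms (where meta-arm $(i,j)$ has i.i.d.\ reward $\max(X_i,X_j)\in[0,1]$) and to bound its pseudo-regret against $\opt=T\mu_{i^*}$ directly. Write $m_{ij}=\E[\max(X_i,X_j)]$ and $\sigma_{ij}^2=\mathrm{Var}(\max(X_i,X_j))$ for the mean and variance of meta-arm $(i,j)$, and set the gap $\Delta_{ij}=\mu_{i^*}-m_{ij}$. Since $\E[\hat R_t]=m_{ij}$ whenever $(i,j)$ is played and $\sum_{(i,j)}\E[s_{ij}]=T$, the pseudo-regret decomposes exactly as $\sum_{(i,j)}\Delta_{ij}\,\E[s_{ij}]$. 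The first observation I would make is that only meta-arms with $\Delta_{ij}>0$ can hurt us: any meta-arm with $m_{ij}\ge\mu_{i^*}$ contributes a nonpositive term and may be dropped from the sum. This is already the conceptual payoff of probing — playing a pair whose expected maximum beats the single best arm incurs negative regret.

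Next I would invoke the variance-sensitive regret guarantee of UCB-V~\cite{Audibert} on the meta-instance. Letting $m^*=\max_{(k,l)}m_{kl}$ be the best meta-arm's mean, note $m^*\ge\E[\max(X_{i^*},X_l)]\ge\mu_{i^*}$ for any partner $l$, so every meta-arm with $\Delta_{ij}>0$ is also suboptimal \emph{within} the meta-instance, with meta-gap $m^*-m_{ij}\ge\Delta_{ij}$. Hence UCB-V plays it at most $O\!\left(\big(\tfrac{\sigma_{ij}^2}{\Delta_{ij}^2}+\tfrac{1}{\Delta_{ij}}\big)\log T\right)$ times in expectation, and its contribution to the regret is $\Delta_{ij}\,\E[s_{ij}]=O\!\left(\big(\tfrac{\sigma_{ij}^2}{\Delta_{ij}}+1\big)\log T\right)$. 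Summing the trailing $+1$ over all $\binom{n}{2}$ meta-arms already gives the target $O(n^2\log T)$; the entire difficulty is therefore concentrated in the residual term $\sum_{(i,j):\Delta_{ij}>0}\sigma_{ij}^2/\Delta_{ij}$. A black-box use of UCB-V leaves this parameter-dependent, and distribution-free bounds for it are only of the weaker $O(\sqrt{nT})$ type, so a direct appeal to known results cannot suffice.

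The crux — and the step I expect to be the main obstacle — is to show, using the reverse prophet inequalities (Lemmas~\ref{lem:max} and~\ref{lem:pairs}), that every meta-arm with positive gap satisfies $\sigma_{ij}^2=O(\Delta_{ij})$, so that each residual term is $O(1)$ and their sum is again $O(n^2)$. Informally, these lemmas quantify the ``boost'' $\E[\max(X_i,X_j)]-\max(\mu_i,\mu_j)$ obtained from taking a maximum of two independent bounded variables, and say it is large enough relative to the variances that a pair cannot simultaneously have large $\sigma_{ij}^2$ and a mean far below $\mu_{i^*}$; equivalently, once the means of the probed arms are close to $\mu_{i^*}$ \emph{on the scale set by their variances}, the expected max meets or exceeds $\mu_{i^*}$ and the per-step regret is $\le 0$. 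The care needed here is in the degenerate regimes, e.g.\ when one arm of the pair essentially dominates the other so the meta-arm behaves like a single high-variance arm with no boost; this is presumably why a separate pairing lemma (Lemma~\ref{lem:pairs}) is required, lower-bounding $\Delta_{ij}$ by comparing $(i,j)$ against the pairings of $i$ and $j$ with the optimal arm $i^*$.

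Once $\sigma_{ij}^2=O(\Delta_{ij})$ is established for all positive-gap meta-arms, the proof closes mechanically: each surviving term of $\sum_{(i,j)}\big(\tfrac{\sigma_{ij}^2}{\Delta_{ij}}+1\big)\log T$ is $O(\log T)$, and summing over the $\binom{n}{2}$ meta-arms yields pseudo-regret $O(n^2\log T)$, proving Theorem~\ref{thm2}. I would present the benchmark reduction and the UCB-V application first (they are routine given the stated lemmas), and reserve the bulk of the effort for extracting the variance--gap bound from Lemmas~\ref{lem:max} and~\ref{lem:pairs}, since that is the only genuinely non-standard ingredient.
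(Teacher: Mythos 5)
Your overall architecture is the paper's: run \UCB-V on the $\binom{n}{2}$ meta-arms, discard pairs with $\mu_{ij}\ge\mu^*$ as contributing non-positive regret, and use the reverse prophet inequality to control the variance term in the \UCB-V bound. However, the inequality you place at the crux --- that every positive-gap meta-arm satisfies $\sigma_{ij}^2=O(\Delta_{ij})$ with your gap $\Delta_{ij}=\mu^*-\mu_{ij}$ measured against the single best arm --- is false, and it is not what Lemma~\ref{lem:pairs} provides. Concretely, let $X_i,X_j$ be independent Bernoulli$(1/2)$ and let a third arm have deterministic reward $3/4+\epsilon$. Then $\mu_{ij}=3/4$, $\sigma^2_{ij}=3/16$, and your gap is $\Delta_{ij}=\epsilon$, so $\sigma^2_{ij}/\Delta_{ij}=3/(16\epsilon)$ is unbounded as $\epsilon\to 0$. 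Lemma~\ref{lem:pairs} instead bounds the variance by the gap \emph{within the meta-instance}: $M^*-\mu_{ij}\ge\sigma^2_{ij}/4$, and in this example $M^*\ge \mu_{iq}=7/8+\epsilon/2$, so that gap stays at least $1/8$ no matter how small $\epsilon$ is. The two gaps are genuinely different, and the bound cannot be transferred from the larger one to the smaller one.

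The error enters at the step where you replace the meta-gap $M^*-\mu_{ij}$ by the smaller $\Delta_{ij}$ inside the \UCB-V play-count bound and only then multiply by $\Delta_{ij}$; that substitution is a valid but lossy upper bound, and it is precisely what forces you to need the false variance--gap inequality. The repair is to keep the accounting in the meta-instance until the very end: writing $\Delta^{\mathrm{meta}}_{ij}=M^*-\mu_{ij}\ge \Delta_{ij}$, the contribution of pair $(i,j)$ is $\Delta_{ij}\E[s_{ij}]\le \Delta^{\mathrm{meta}}_{ij}\E[s_{ij}]=R_{ij}\le 10\left(\sigma^2_{ij}/\Delta^{\mathrm{meta}}_{ij}+1\right)\ln T\le 50\ln T$, by Eq.~(\ref{eq:regret2}) together with Lemma~\ref{lem:pairs}; summing over the $\binom{n}{2}$ pairs gives $O(n^2\log T)$. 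This is exactly the paper's proof, and with that one correction your argument closes.
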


Recall that $\mu_i = \E[X_i]$, $\mu^* = \max_i \E[X_i]$, and the benchmark is $\opt = T \mu^*$. We now define analogous quantities for the meta-arms. Recall that $X_{ij} = \max(X_i,X_j)$ is the random variable corresponding to the reward of meta-arm $(i,j)$. Let $\mu_{ij} = \E[X_{ij}]$ and $\sigma^2_{ij} = \mbox{Var}[X_{ij}]$. Let $M^* = \max_{(i,j)} \mu_{ij}$ and $\Delta_{ij} = M^* - \mu_{ij}$. 

For the {\tt Meta UCB-V} policy, let $T_{ij}$ denote the expected number of times meta-arm $(i,j)$ is played. Let $R_{ij} = T_{ij} \Delta_{ij}$ denote the expected regret against $M^*$ due to playing meta-arm $(i,j)$. The main result of Audibert {\em et al.}~\cite{Audibert} is the following.\footnote{We note that the result in~\cite{Audibert} assumes the arms are independent, while our meta-arms are correlated. However, the expected regret bounds in~\cite{Audibert} hold as is when the arms are correlated, provided the samples from the arms are {\em i.i.d.} across time. This suffices for our purposes.}
\begin{equation}
\label{eq:regret2}
    R_{ij} \le 10 \left( \frac{\sigma^2_{ij}}{\Delta_{ij}} + 1 \right) \ln T.
\end{equation}

\subsection{Reverse Prophet Inequalities}
Our main technical lemmas {\em lower bound} the expected maximum of a pair of random variables in terms of the mean and variance of the individual variables. These lemmas effectively show that if the gap between the means of the arms played and the mean of the optimal arm is small, the expected regret is actually $\le 0$. As mentioned before, these can be viewed as the reverse of standard prophet inequalities~\cite{Krengel,SamuelCahn} that upper bound the expected maximum.  These lemmas forms the crux of our analysis both in this section and in Section~\ref{sec:all}, and may be of independent interest in related settings.

\begin{lemma}
\label{lem:max}
Let $X$ and $Y$ be independent random variables supported on $[0,1]$ whose means satisfy $\mu_Y \ge \mu_X$, and let $Z = \max(X, Y)$. Then $\E[Z] \ge \mu_X + \sigma^2_X/2$.
\end{lemma}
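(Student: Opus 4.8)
The plan is to isolate the gain that the maximum provides over the baseline $\mu_X$ and show this gain is at least $\sigma_X^2/2$. I would start from the pointwise identity $\max(X,Y) = X + (Y-X)^+$, which by linearity of expectation gives $\E[Z] = \mu_X + \E[(Y-X)^+]$. Thus the entire statement reduces to proving $\E[(Y-X)^+] \ge \sigma_X^2/2$. The asymmetry of the target bound is a useful hint: only $\mu_Y$ (and not the full law of $Y$) should be needed, whereas the variance of $X$ must be extracted from its full distribution. This suggests first collapsing $Y$ to its mean and then doing the real work on $X$.

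First I would remove the dependence on $Y$'s distribution. Using independence and the tower rule, $\E[(Y-X)^+] = \E_Y[g(Y)]$, where $g(y) = \E_X[(y-X)^+]$. Since $y \mapsto (y-X)^+$ is convex for each fixed value of $X$, the function $g$ is convex, so Jensen's inequality gives $\E_Y[g(Y)] \ge g(\mu_Y) = \E_X[(\mu_Y - X)^+]$. Informally, this says a deterministic $Y \equiv \mu_Y$ is the worst case, which matches the intuition that variance in $Y$ can only increase the expected maximum. Because $c \mapsto (c-X)^+$ is nondecreasing in $c$ and $\mu_Y \ge \mu_X$, I can then lower bound further by $\E_X[(\mu_X - X)^+]$, an expression depending only on $X$.

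It remains to show $\E[(\mu_X - X)^+] \ge \sigma_X^2/2$. Since $X - \mu_X$ is centered, its positive and negative parts have equal mean, so $\E[(\mu_X - X)^+] = \tfrac12 \E|X - \mu_X|$. The final step exploits the bounded support: for $X \in [0,1]$ and $\mu_X \in [0,1]$ we have $|X - \mu_X| \le 1$, hence $(X-\mu_X)^2 \le |X-\mu_X|$ pointwise, and taking expectations yields $\sigma_X^2 \le \E|X-\mu_X|$. Combining the pieces, $\E[(\mu_X - X)^+] = \tfrac12\E|X-\mu_X| \ge \tfrac12 \sigma_X^2$, which closes the chain and finishes the proof.

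I expect the main obstacle to be the first reduction, namely recognizing that $Y$ should enter only through its mean and justifying this cleanly via the convexity/Jensen step, as opposed to manipulating the joint survival function $\Pr[\max(X,Y)\ge x] = 1-(1-G_X(x))(1-G_Y(x))$ directly; the latter route leads to an awkward constrained minimization of $\int G_Y F_X$ over all valid (nonincreasing) survival functions $G_Y$ with $\int G_Y = \mu_Y \ge \mu_X$, which is solvable by a rearrangement argument but is more cumbersome. Pinning down the constant $1/2$ exactly then rests entirely on the bounded-support inequality $t^2 \le |t|$ for $|t| \le 1$, and I would double-check that this step loses no slack.
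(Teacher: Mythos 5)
Your proof is correct. It has the same two-part skeleton as the paper's argument—first reduce the gain $\E[Z]-\mu_X$ to a one-variable quantity worth $\tfrac12\E|X-\mu_X|$, then use bounded support ($t^2\le|t|$ for $|t|\le 1$) to relate that to $\sigma_X^2$—but the first reduction is carried out by a genuinely different route. The paper conditions on $X=t$, splits the integral at $t=\mu_X$, and uses $Z\ge Y$ on one side and $Z\ge X$ on the other, arriving at $\E[Z]\ge \mu_X+\E[(X-\mu_X)_+]$. You instead use the pointwise identity $\max(X,Y)=X+(Y-X)^+$ together with Jensen's inequality (convexity of $y\mapsto\E[(y-X)^+]$) to collapse $Y$ to $\mu_Y$, then monotonicity to replace $\mu_Y$ by $\mu_X$, arriving at $\E[Z]\ge\mu_X+\E[(\mu_X-X)_+]$. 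The two intermediate bounds are mirror images of each other and equal in value after the symmetrization step, so the endgame is identical. Your Jensen step is arguably cleaner and makes explicit the intuition that a deterministic $Y\equiv\mu_Y$ is the worst case; the paper's integral-splitting formulation has the advantage that the split point is a free parameter, and the paper reuses exactly that flexibility in the proof of Lemma~\ref{lem:pairs}, where the same computation is repeated with the split taken at $\mu^*$ rather than at $\mu_X$. Either route is a complete and correct proof of the lemma.
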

\begin{proof}
We can write $\E[Z] = \int_{t=0}^1 \E[Z|X=t] f_X (t) dt, $
where $f_X$ is the pdf of $X$. We split the integral into two sums, $t \in [0,\mu_X]$ and $t \in [\mu_X, 1]$. For the first integral, we use $Z \ge Y$, and thus
$$ \int_{t=0}^{\mu_X} \E[Z | X=t] f_X (t) dt \ge \int_{t=0}^{\mu_X} \E[Y] f_X (t) dt \ge \int_{t=0}^{\mu_X} \mu_X f_X(t) dt. $$
Here, the first inequality uses the independence of $X$ and $Y$ and the second inequality uses $\mu_Y \ge \mu_X$. For the second integral, we use $Z \ge X$, and obtain
$$\int_{t=\mu_X}^1 \E[Z|X=t] f_X (t) dt \ge  \int_{t=\mu_X}^1 t f_X (t) dt. $$

Thus, writing the $t$ as $\mu_X +( t-\mu_X)$, we get that the sum of the two integrals is bounded as
\begin{equation}
\label{eq:bound1}
\E[Z] \ge \mu_X + \E[ (X - \mu_X)_+], 
\end{equation}
where $(X - \mu_X)_+$ is the random variable $\max(0, X-\mu_X)$. 

Likewise, define $(X-\mu_X)_- = \max(0, \mu_X - X)$. By the definition of the expectation, we have $\E[(X-\mu_X)_+] = \E[(X-\mu_X)_-]$.  On the other hand, because our random variables are bounded on $[0,1]$, we have $\E[(X-\mu_X)_+] \ge \E[(X-\mu_X)_+^2]$ and likewise for $\E[(X-\mu_X)_-]$. Therefore,
\begin{equation}
\label{eq:bound2}
    \E[(X-\mu_X)_+] + \E[(X-\mu_X)_-] \ge \E[ (X-\mu_X)^2] = \sigma^2_X.
\end{equation}
Thus both the terms are at least $\sigma^2_X/2$, implying the lemma.
\end{proof}

We next extend Lemma~\ref{lem:max} to pairs of arms. Note that this does not follow from Lemma~\ref{lem:max} by simply replacing ``arms" with ``pairs of arms", since different pairs of arms are no longer independent. 

\begin{lemma}
\label{lem:pairs}
Let pair $(i,j)$ be such that $\mu_{ij} < \mu^*$. Then $M^* - \mu_{ij} \ge \frac{\sigma^2_{ij}}{4}.$
\end{lemma}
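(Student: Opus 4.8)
The plan is to reduce the statement to the single-variable bound already established in Lemma~\ref{lem:max}. The obstacle flagged in the paper is that distinct meta-arms are correlated, so one cannot simply apply Lemma~\ref{lem:max} ``one level up'' by treating meta-arms as atoms. My workaround is to treat the composite reward $X_{ij} = \max(X_i, X_j)$ as a \emph{single} random variable (with mean $\mu_{ij}$ and variance $\sigma^2_{ij}$) and to pair it with the reward $X_{i^*}$ of the best single arm $i^*$, where $\mu_{i^*} = \mu^*$. Since the arms are independent, $X_{i^*}$ is independent of $(X_i, X_j)$ and hence of $X_{ij}$, so Lemma~\ref{lem:max} applies to the pair $(X_{ij}, X_{i^*})$.

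First I would record that both $i$ and $j$ differ from $i^*$: because $\mu_i, \mu_j \le \mu_{ij} < \mu^*$, we have $\mu_i, \mu_j < \mu_{i^*}$, so $(i,i^*)$ and $(j,i^*)$ are legitimate distinct meta-arms, and in particular $M^* \ge \max(\mu_{i i^*}, \mu_{j i^*})$. Next, since $\mu_{i^*} = \mu^* \ge \mu_{ij}$, Lemma~\ref{lem:max} (with the composite variable playing the role of the smaller-mean variable) gives $\E[\max(X_{ij}, X_{i^*})] \ge \mu_{ij} + \sigma^2_{ij}/2$. Observing that $\max(X_{ij}, X_{i^*}) = \max(X_i, X_j, X_{i^*})$, I conclude that the three-way maximum has expectation at least $\mu_{ij} + \sigma^2_{ij}/2$.

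The step I expect to be the crux is converting this bound on a \emph{triple} maximum back into a statement about \emph{pairwise} meta-arms, since $\max(X_i,X_j,X_{i^*})$ is not itself the reward of any valid meta-arm. For this I would use the elementary pointwise inequality $\max(a,c) + \max(b,c) \ge \max(a,b,c) + c$, which is verified by a short case analysis on whether $c$ is the overall maximum. Applying it with $a = X_i$, $b = X_j$, $c = X_{i^*}$ and taking expectations yields $\mu_{i i^*} + \mu_{j i^*} \ge \E[\max(X_i,X_j,X_{i^*})] + \mu^* \ge \mu_{ij} + \sigma^2_{ij}/2 + \mu^*$. Using $\mu^* \ge \mu_{ij}$ and then bounding the larger of the two meta-arm means below by their average, I obtain $M^* \ge \max(\mu_{i i^*}, \mu_{j i^*}) \ge \tfrac{1}{2}(\mu_{i i^*} + \mu_{j i^*}) \ge \mu_{ij} + \sigma^2_{ij}/4$, which is exactly the claim. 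The factor-of-two loss relative to Lemma~\ref{lem:max} (from $\sigma^2_{ij}/2$ down to $\sigma^2_{ij}/4$) is precisely the price of passing from the triple maximum to the better of the two available pairwise meta-arms.
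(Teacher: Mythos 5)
Your proof is correct, and it reaches the paper's constant of $\sigma^2_{ij}/4$ by a genuinely different route. Both arguments share the outer skeleton --- compare $M^*$ with the two meta-arms $(i,q)$ and $(j,q)$ containing the best single arm $q$, and pass to the average $\tfrac{1}{2}(\mu_{iq}+\mu_{jq})$ --- but they diverge in how the sum $\mu_{iq}+\mu_{jq}$ is lower-bounded. The paper cannot apply Lemma~\ref{lem:max} to the pair $(X_i, X_q)$ as a black box, since that would only produce $\sigma^2_i$ rather than the needed $\sigma^2_{ij}$; it therefore re-opens the proof of Eq.~(\ref{eq:bound1}), re-splits the integral at $\mu^*$ to get $\mu_{iq} \ge \mu^* + \E[(X_i-\mu^*)_+]$, and then uses subadditivity of the upper tails, $\Pr[X_{ij}\ge \mu_{ij}+a] \le \Pr[X_i\ge \mu_{ij}+a]+\Pr[X_j\ge \mu_{ij}+a]$, to assemble $\E[(X_{ij}-\mu_{ij})_+]$ before invoking Eq.~(\ref{eq:bound2}). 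You sidestep all of that by applying Lemma~\ref{lem:max} as a black box to the composite variable $X_{ij}$ against $X_{q}$ (valid because $q\notin\{i,j\}$ and the arms are independent, so $X_q$ is independent of $X_{ij}$), which yields $\sigma^2_{ij}$ directly, and you pay for the return trip from the triple maximum to pairwise meta-arms with the deterministic identity $\max(a,c)+\max(b,c)\ge \max(a,b,c)+c$; that identity, after taking expectations, plays exactly the role of the paper's tail-subadditivity step. Your version is more modular (it reuses Lemma~\ref{lem:max} verbatim instead of a variant of its proof) and the case analysis is purely pointwise; the paper's version keeps the intermediate quantities $\E[(X-\cdot)_+]$ explicit, which is the form it reuses elsewhere. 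Both lose the same factor of two in the averaging step, so neither gives a better constant.
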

\begin{proof}
For notational convenience, let $q = \mbox{argmax}_{s} \mu_s$. By assumption, we have $\E[X_q] = \mu^* > \mu_{ij}.$

Since $M^* = \max_{(q,r)} \mu_{qr}$, we have $M^* \ge \max(\mu_{iq}, \mu_{jq})$. Therefore,
\begin{equation}
    \label{eq1}
    M^* - \mu_{ij} \ge  \max(\mu_{iq}, \mu_{jq}) - \mu_{ij}.
\end{equation} 
We now use the same argument as the proof of Eq~(\ref{eq:bound1}). Set $Z = \max(X_i,X_q)$ and $X = X_i$,  but split the integral at $t = \mu^*$ instead of $t = \E[X_i]$. This yields
$$ \mu_{iq} = \E[Z] \ge \int_{t=0}^{\mu^*} \mu^* f_{X_i} (t) dt + \int_{t=\mu^*}^1 ((t - \mu^*) + \mu^*) f_{X_i} (t) dt  = \mu^* + \E[(X_i - \mu^*)_+], $$
A similar inequality holds for $\mu_{jq}$. Combining with Eq~(\ref{eq1}), we have
$$ M^* - \mu_{ij} \ge \max \left\{\E[(X_i - \mu^*)_+], \E[(X_j - \mu^*)_+] \right\} + (\mu^* - \mu_{ij}).$$
Since $\mu^* \ge \mu_{ij}$ by assumption, the above implies
\begin{align*} 
M^* - \mu_{ij} \ge & \max \left\{\E[(X_i - \mu_{ij})_+], \E[(X_j - \mu_{ij})_+] \right\} \\
 \ge & \frac{1}{2} \left(\E[(X_i - \mu_{ij})_+] + \E[(X_j - \mu_{ij})_+] \right) \ge  \frac{1}{2} \E[(X_{ij} - \mu_{ij})_+]
\end{align*}
To see the last inequality, simply observe that for any value $a \ge 0$, we have $\Pr[X_{ij} \ge \mu_{ij} + a] \le \Pr[X_i \ge \mu_{ij} + a] + \Pr[X_j \ge \mu_{ij} + a]$. 
Finally, using Eq~(\ref{eq:bound2}) with $X = X_{ij}$, we obtain
$$ M^* - \mu_{ij} \ge \frac{1}{2} \E[(X_{ij} - \mu_{ij})_+] \ge \frac{1}{4} \sigma^2_{ij}.$$
This completes the proof.
\end{proof}

\subsection{Proof of Theorem~\ref{thm2}}
We split the set of meta-arms into two types. The first type is arms $(i,j)$ for which $\mu_{ij} \ge \mu^*$. Since $\mu^* - \mu_{ij} \le 0$, the expected regret (against the benchmark $\mu^*$) of playing $(i,j)$ is non-positive.

We therefore focus on arms $(i,j)$ with $\mu_{ij} < \mu^*$. By Lemma~\ref{lem:pairs}, we have $\Delta_{ij} =  M^* - \mu_{ij} \ge \frac{\sigma^2_{ij}}{4}$. Combining with Eq~(\ref{eq:regret2}), we have $ R_{ij} \le 10(4+1) \ln T = 50 \ln T$. However, we have $\Delta_{ij} \ge \mu^* - \mu_{ij}$, so that the regret of $(i,j)$ against $\mu^*$ is at most that against $M^*$. Putting all this together, we have:
\begin{align*}
\mbox{Regret}  = & \sum_{(i,j) :\mu_{ij} < \mu^*}  T_{ij} (\mu^* - \mu_{ij})  \le  \sum_{(i,j) :\mu_{ij} < \mu^*}  T_{ij} (M^* - \mu_{ij}) \\
     = & \sum_{(i,j) :\mu_{ij} < \mu^*}  T_{ij} \Delta_{ij} =  \sum_{(i,j) :\mu_{ij} < \mu^*}  R_{ij} \le  \sum_{(i,j) :\mu_{ij} < \mu^*}  50 \ln T  =  O(n^2 \ln T)
  \end{align*}
This completes the proof of Theorem~\ref{thm2}.

\section{Stochastic MAB in the \all Model}
\label{sec:all}
We will now switch to the \all model for the stochastic MAB problem, where a policy can probe $k$ arms, and receive as feedback  the rewards of all these arms at the current time step.  In this model, we will  show a regret bound of $O(n^2)$ when we are allowed $k=3$ probes per step. 

In addition to using Lemma~\ref{lem:max} from the previous section, the main observation is that with $k=3$ probes allowed, we can have one probe dedicated to {\em exploring} the arms in a round-robin fashion. Thus \all{} allows us to overcome the explore/exploit trade-off and obtain better concentration bounds, albeit with the loss of an additional $n$ factor in the regret. 
The overall algorithm is reminiscent of the \UCB-V algorithm of Audibert {\em et al.}~\cite{Audibert} from Section~\ref{sec:max},  but differs in  how we construct the optimistic estimate, as well as the analysis.

\medskip
\noindent {\bf Simultaneous Explore-Exploit Algorithm.}
Our algorithm uses one probe per step to play the $n$ arms in a round robin fashion. We call this the {\em exploration probe} and it enables the algorithm to maintain the sample mean and sample variance for each arm. Therefore, at time $t \ge 1$, any arm $i$ is observed $s_{it} \ge \lfloor \frac{t}{n} \rfloor$ steps by the exploration probe. Let $X_{i1}, X_{i2}, \ldots, X_{is_{it}}$ denote these observations. Analogous to Eq.~(\ref{eq:sample}), let $m_{it}, V_{it}$ denote the sample mean and variance of arm $i$ after $s_{it}$ observations.
Choose $\epsilon = 0.1$; any other small constant will work equally well. Let 
\begin{equation}
\label{eq:UCB1}
    \UCB_{it} = m_{it} + \epsilon V_{it}.
\end{equation} 
The remaining two probes at time step $t$ are used to probe the two arms with the largest  values of $\UCB_{it}$. We call these the {\em exploitation} probes. This completes the description of the algorithm.

Note that in our analysis, we will assume the policy finally plays one of the two exploitation-probed arms and not the exploration-probed arm. We use the exploration probe only to update the estimates of $m_{it}$ and $V_{it}$, and the results of these probes could be obtained at the end of that time step. This is sufficient to get constant regret. 
We show the following theorem below.
\begin{theorem} 
\label{thm1}
The regret of the simultaneous explore-exploit algorithm is $O(n^2)$ for the \all model with $k=3$ probes.
\end{theorem}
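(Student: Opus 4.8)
The plan is to bound the per-step expected regret $\mu^* - \mu_{i_t j_t}$, where $\mu_{ij} = \E[\max(X_i,X_j)]$ and $(i_t,j_t)$ are the two arms with the largest $\UCB$ values at step $t$, and show that the sum over all $T$ steps is $O(n^2)$. The first reduction is that whenever the optimal arm $i^*$ (with $\mu_{i^*}=\mu^*$) is one of the two exploitation-probed arms, that step contributes nonpositive regret, since $\E[\max(X_{i^*},X_j)] \ge \E[X_{i^*}] = \mu^*$. Hence all regret comes from steps where $i^*$ is \emph{not} probed, and I will account for it arm by arm.

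For such a step, let $a$ denote the probed arm of smaller mean. Applying Lemma~\ref{lem:max} with $X = X_a$ gives $\mu_{i_t j_t} \ge \mu_a + \sigma_a^2/2$, so the step's regret is at most $\delta_a - \sigma_a^2/2$, where $\delta_a = \mu^* - \mu_a$. Thus a suboptimal arm $a$ contributes nothing unless $\delta_a > \sigma_a^2/2$: the variance gain from taking a maximum already compensates for high-variance arms. In the remaining regime $\delta_a > \sigma_a^2/2$ the per-step regret is at most $\delta_a$, and such a step can occur only when $a$ was selected over $i^*$, i.e. $\UCB_{at} \ge \UCB_{i^*t}$.

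The heart of the argument is then to bound $\sum_t \delta_a \Pr[\UCB_{at}\ge\UCB_{i^*t}]$ for each fixed suboptimal arm $a$ and show it is $O(n)$. Here I would use two features of the algorithm. First, the round-robin exploration probe guarantees $s_{it}\ge\lfloor t/n\rfloor$ clean i.i.d.\ samples for every arm with a \emph{deterministic} count, so $m_{it}$ and $V_{it}$ concentrate around $\mu_i$ and $\sigma_i^2$ via empirical Bernstein bounds with no adaptivity complications; bounded rewards keep the fourth central moment below the variance, which controls the fluctuation of $V_{it}$. Second, since $\UCB_{it}=m_{it}+\epsilon V_{it}$ with $\epsilon=0.1$, the expected gap between the two $\UCB$s is $\delta_a + \epsilon(\sigma_{i^*}^2 - \sigma_a^2)$, so selecting $a$ over $i^*$ requires a deviation of order $\max(\delta_a,\sigma_{i^*}^2)$. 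A short case analysis --- large $\sigma_{i^*}^2$, where the bonus $\epsilon\sigma_{i^*}^2$ itself creates the gap, versus small $\sigma_{i^*}^2$, where $V_{i^*t}$ is tiny and concentration is sharp --- shows $\Pr[\UCB_{at}\ge\UCB_{i^*t}] \le \exp(-c\,\delta_a\, t/n)$ in both cases after substituting $s\ge t/n$. Summing the geometric series gives $\sum_t \delta_a \Pr[\cdots] = O(\delta_a\cdot n/\delta_a)=O(n)$, and summing over the at most $n$ suboptimal arms yields $O(n^2)$; the $O(n)$ early steps with $t<2n$ contribute only $O(n)$ since each step has regret at most $1$.

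The step I expect to be the main obstacle is exactly this concentration and case analysis, i.e.\ making the bound genuinely \emph{parameter-independent}. The delicate point is that $\sigma_a^2$ plays a dual role: it is simultaneously the slack produced by Lemma~\ref{lem:max} and, together with $\sigma_{i^*}^2$, the scale governing the fluctuations of the $\UCB$ estimates; one must verify that the variance bonus $\epsilon V_{it}$ is precisely what reconciles the two so the effective gap is never too small relative to the tail one needs to control. Establishing the uniform per-arm decay $\exp(-c\,\delta_a t/n)$ over all configurations of $(\delta_a,\sigma_a^2,\sigma_{i^*}^2)$ --- in particular a near-deterministic optimal arm, where naive optimism fails about half the time but the shortfall is negligibly small --- is where the real care lies; the rest is bookkeeping via linearity of expectation.
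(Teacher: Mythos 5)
Your proposal is correct and follows essentially the same route as the paper's proof: charge the per-step regret to the smaller-mean probed arm, use Lemma~\ref{lem:max} to make the regret non-positive when $\mu^*-\mu_a\le\sigma_a^2/2$, and otherwise bound $\Pr[\UCB_{at}\ge\UCB_{qt}]$ by a case analysis on whether $\Delta_a$ is large or small relative to $\sigma_*^2$ (the paper's Cases 2 and 3, the latter relying on the lower tail bound $V_q\ge 0.65\sigma_*^2$ so that the bonus $\epsilon V_q$ lifts $\UCB_q$ above $\mu^*$), then sum the resulting $e^{-c\Delta_a t/n}$ decay over $t$ and over arms. The concentration step you flag as the main obstacle is handled in the paper exactly as you describe, via the empirical Bernstein tail bounds of Theorem~\ref{thm:tail} applied to the deterministic round-robin sample counts.
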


We remark that the bound has a slightly worse dependence on $n$, the number of arms, than the standard UCB bounds~\cite{BubeckBook}. Improving the bound to $O(n)$ is an interesting open direction. Further, it is easy to show examples where playing the top two arms with highest UCB1 scores~\cite{AuerCF}, or running Thompson Sampling twice~\cite{agrawalG}, has regret that is polynomial in $T$. However, we conjecture that the policy that simply plays the top two arms according to $m_{it}$ also has constant regret, and we leave showing this as an interesting open question.

\subsection{Some Tail Bounds}
Before presenting the proof of Theorem~\ref{thm1}, we present some well-known tail bounds for the sample mean and variance. 

We will use tail bounds that follow from  Audibert {\em et al.}~\cite{Audibert,Freedman}. Fix a time $t$ and some arm $i$ whose distribution $D_i$ has mean $\mu_i$ and variance $\sigma^2_i$. Note that we assume $D_i \in [0,1]$. Consider the $s_{it}$ exploration probes and the resulting estimates $m_{it}$ and $V_{it}$ of the sample mean and variance respectively:
\begin{equation}
    \label{eq:sample2}
    m_{it} = \frac{\sum_{q=1}^{s_{it}} X_{iq}}{s_{it}} \qquad V_{it} = \frac{\sum_{q=1}^{s_{it}} (X_{iq} - m_{it})^2}{s_{it}}.
\end{equation}

\begin{theorem}[Implicit in Audibert {\em et al.}\cite{Audibert}]
\label{thm:tail}
For any time $t \geq 1$, we have the following tail bounds on $m_{it}$ and $V_{it}$ after $s_{it}$ probes of arm $i$.
\begin{enumerate}
    \item For $q \ge \frac{1}{18}$, we have: $ \Pr\left[ |m_{it} - \mu_i| > q \sigma^2_i \right] \le 3 e^{- \frac{q \sigma^2_i s_{it}}{23}}.$
    \item For $q \ge \frac{1}{18}$, we have: $ \Pr\left[ V_{it} > (1+q) \sigma^2_i \right] \le 3 e^{- \frac{q \sigma^2_i s_{it}}{23}}.$
    \item $ \Pr[ V_{it} < 0.65 \sigma_i^2] \le 3 e^{-0.01 \sigma_i^2 s_{it}}.$
\end{enumerate}
\end{theorem}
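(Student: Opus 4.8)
The plan is to reduce all three bounds to Bernstein-type concentration inequalities for averages of i.i.d.\ random variables bounded in $[0,1]$, exploiting the boundedness of $D_i$ at two places: to control the relevant variances, and to turn fourth moments into second moments. Throughout I drop the arm index $i$ and the time index, writing $X_1,\dots,X_s$ for i.i.d.\ copies of $X \in [0,1]$ with mean $\mu$ and variance $\sigma^2$, and setting $s = s_{it}$, $m = m_{it}$, $V = V_{it}$. The guiding observation is that the unusual factor $\sigma^2 s$ appearing in every exponent is exactly what Bernstein's inequality produces once the deviation is measured on the variance scale $\epsilon = q\sigma^2$: substituting this into the generic exponent $-\,s\epsilon^2/(2(\sigma^2 + \epsilon/3))$ and cancelling one power of $\sigma^2$ leaves a factor of $\sigma^2$ in the exponent. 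So the whole theorem is really one Bernstein bound applied on three different scales.

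For bound 1, I would apply Bernstein directly to $m - \mu = \frac1s\sum_q (X_q - \mu)$, whose summands lie in $[-1,1]$ and have variance $\sigma^2$. With $\epsilon = q\sigma^2$ this gives an exponent of the form $-\,s q^2 \sigma^2 / (2(1 + q/3))$, which for $q \ge 1/18$ is at most $-\,q\sigma^2 s / 23$; the two tails and the inequality slack are absorbed by the factor $3$ out front.

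For bounds 2 and 3 the key device is the exact decomposition $V = \bar U - (m-\mu)^2$, where $U_q := (X_q - \mu)^2 \in [0,1]$ and $\bar U := \frac1s\sum_q U_q$, so that $\E[\bar U] = \sigma^2$ and, crucially, $\mathrm{Var}(U) \le \E[U^2] = \E[(X-\mu)^4] \le \E[(X-\mu)^2] = \sigma^2$, using $(X-\mu)^2 \le 1$. For bound 2 I note that $V \le \bar U$ deterministically, so the upper tail of $V$ is dominated by the upper tail of $\bar U$; a Bernstein bound on $\bar U$ with the same substitution $\epsilon = q\sigma^2$ then reproduces the exponent of bound 1, which is why the two statements share the constants $23$ and $1/18$. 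For bound 3 I would instead use the full decomposition to write $\{V < 0.65\,\sigma^2\} \subseteq \{\bar U < 0.8\,\sigma^2\} \cup \{(m-\mu)^2 > 0.15\,\sigma^2\}$ and bound the two events separately: the lower tail of $\bar U$ by Bernstein at deviation $0.2\,\sigma^2$ (yielding an exponent of roughly $-0.019\,\sigma^2 s$), and $\Pr[|m-\mu| > \sqrt{0.15}\,\sigma]$ again by Bernstein on the mean, where the crude estimate $\sigma \le 1/2$ (so $\sigma^2 \le \sigma/2$ and $\sigma^2 \le \sigma$) converts the resulting $\sigma$-scale exponent into a $\sigma^2$-scale one with a much larger constant, leaving the $\bar U$ term as the binding one at the claimed $0.01\,\sigma^2 s$.

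The conceptual content---the variance decomposition and the estimate $\mathrm{Var}(U)\le\sigma^2$ from boundedness---is short, and the containment for bound 3 is immediate (if $\bar U \ge 0.8\,\sigma^2$ and $(m-\mu)^2 \le 0.15\,\sigma^2$ then $V \ge 0.65\,\sigma^2$). The main obstacle I expect is not conceptual but arithmetic: the textbook Bernstein constants I sketched above are slightly loose, and matching the \emph{exact} thresholds $1/18$, $0.65$ and exponent constants $23$, $0.01$ over the entire stated range of $q$ requires invoking the precise Bernstein/Freedman-type inequality of~\cite{Audibert,Freedman} rather than the generic form. I would therefore state the exact inequality being quoted, then verify the worst case $q = 1/18$ for bounds 1--2 and check both branches of the split for bound 3.
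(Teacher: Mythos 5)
Your proposal is correct and follows essentially the same route as the paper's proof sketch: Bernstein-type bounds applied on the variance scale $\epsilon \propto \sigma^2$ (so that one power of $\sigma^2$ survives in the exponent), together with the decomposition $V = \bar U - (m-\mu)^2$ for $U_q = (X_q-\mu)^2$ and the boundedness estimate $\E[U^2] \le \E[U] = \sigma^2$, which is exactly how the paper derives parts 1 and 2 from Eqs.~(48) and~(43) of Audibert et al. The one place you go further is part 3: the paper simply quotes Eq.~(50) of that reference with $x = 0.01\sigma_i^2 s_i$, whereas your split $\{V < 0.65\sigma^2\} \subseteq \{\bar U < 0.8\sigma^2\} \cup \{(m-\mu)^2 > 0.15\sigma^2\}$ gives a self-contained derivation, and your accounting (lower tail of $\bar U$ at deviation $0.2\sigma^2$ as the binding term, the mean term absorbed via $\sigma \le 1/2$) checks out. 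The numerical slack you flag near the boundary is real --- generic two-sided Bernstein at $\epsilon = q\sigma^2$ only covers $q \gtrsim 6/67 \approx 0.09$ rather than $q \ge 1/18$ --- but the paper's own substitution $x = r\sigma^2 s$, $q = 23r$ has exactly the same limitation (the inequality $\sqrt{2r} + r/3 \le 23r$ fails at $r = 1/300$), so this is an imprecision inherited from the source, not a defect of your argument relative to the paper's.
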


We present a proof sketch of this theorem in Appendix~\ref{app:tail}.
 
\subsection{Proof of Theorem~\ref{thm1}}
Let $t_0 = 4n$; we will ignore the first $t_0$ steps in the analysis, and they contribute $O(n)$ to the regret since the rewards are bounded in $[0,1]$. This implies $s_{it} \ge \frac{t}{2n}$ for $t \ge t_0$.

Let $q = \mbox{argmax}_i \mu_i$. Let $\mu^* = \mu_{q}$ and $\sigma^2_* = \sigma^2_{q}$. At any time step $t$, suppose a pair of arms $(i,j)$ with $\mu_i \ge \mu_j$ is played by the exploitation probes. Let $\Delta_j = \mu^* - \mu_j$. Suppose $\E[\max(X_i,X_j)] = \mu_{ij} \ge \mu^*$, then the expected regret is zero, otherwise, the expected regret is at most $\Delta_j$. We charge this regret to arm $j$.  We will now compute the probability with which a regret of $\Delta_j$ is charged to arm $j$ in time step $t$. Below, we will omit $t$ from the subscript when the connotation is obvious.

Let $\Delta_j = \ell \sigma^2_j = \rho \sigma^2_*$. We split the analysis into three cases. Note that the key hurdle with obtaining constant regret is that the tail bounds  in Theorem~\ref{thm:tail} that decay exponentially with time at rate depending on variance only hold when the deviation from the mean is at least a constant times the variance of the arm. We therefore split our analysis based on whether $\Delta_j$ is at least a constant times $\sigma^2_j$ or not. In the latter case, we use Lemma~\ref{lem:max} to argue that the regret is already non-positive. In the former case, we obtain exponentially decaying regret in time, which is sufficient to obtain overall constant regret. 

In the following, for notational brevity we may drop $t$ from $m_{jt}$, $V_{jt}$ and $s_{jt}$.

\medskip
\noindent {\bf Case 1. $\ell \le \frac{1}{2}$.} In this case, setting $X = X_j$ and $Y = X_i$ in Lemma~\ref{lem:max}, we have  
$$\mu_{ij} \ge \mu_j + \sigma^2_j/2 \ge \mu_j + \Delta_j \ge \mu^*.$$
Therefore, playing $(i,j)$ incurs non-positive regret.

\medskip
\noindent {\bf Case 2. $\ell \ge \frac{1}{2}$ and $\rho \ge \frac{1}{2}$.} Let $w = \mu_j + \frac{\Delta_j}{2}$. Recall the definition of $\UCB_{it}$ from Eq~(\ref{eq:UCB1}), and define the {\em good event} as $\UCB_j \le w$ {\bf and} $\UCB_{q} \ge w$. In this case, if $j$ is probed, then $q = i$, so that there is no regret. We upper bound the probability of the good event not happening by a union of {\em bad} events.

The first bad event is that $V_j > 0.65 \sigma^2_j$, where $V_j$ is as defined in Eq~(\ref{eq:sample2}). By Theorem~\ref{thm:tail}, there exists constant $c_1 > 0$ such that
$$ \Pr[V_j > 0.65 \sigma^2_j] \le  3 e^{- c_1 \ell \sigma^2_j s_j} = 3 e^{- c_1 \Delta_j s_j}.$$
Assuming this event do not happen, the second bad event is $\UCB_j > w$, which is equivalent to $m_j > \mu_j + \frac{\ell}{2} \sigma^2_j - \epsilon V_j,$ where $m_j$ is as defined in Eq~(\ref{eq:sample2}). Since $V_j \le 0.65 \sigma^2_j$, it suffices to bound the probability of the event
$$ m_j - \mu_j > (\ell / 2 - 0.65 \epsilon)  \sigma^2_j > 0.3 \ell \sigma^2_j,$$
where we have used $\epsilon = 0.1$. By Theorem~\ref{thm:tail}, there is a constant $c_2 > 0$ so that
$$ \Pr[\UCB_j > w | V_j \le \ell \sigma^2_j] \le 3 e^{- c_2 \ell \sigma_j^2 s_j} = 3 e^{-c_2 \Delta_j s_j}.$$
Analogously, the third bad event is that $\UCB_{q} < w = \mu_j + \frac{\Delta_j}{2} = \mu^* - \frac{\Delta_j}{2}$. This implies $m_{q} < \mu^* - \frac{\rho}{2} \sigma^2_*$. Repeating the same argument as the second bad event, we obtain
$$ \Pr[\UCB_{q} < w] \le 3 e^{- c_2 \rho \sigma^2_* s_{q}} = 3 e^{-c_2 \Delta_j s_{q}}.$$

Denote the regret charged to arm $j$ at step $t$ as $R_{jt}$. Let $c_3 = \frac{\min(c_1, c_2)}{2}$. Note next that $s_{jt}, s_{qt} \ge \frac{t}{2n}$. We now take the union bound over the three bad events above, and note that the regret charged to arm $j$ conditioned on the bad event is at most $\Delta_j$. This implies:
$$ \E[R_{jt}] \le 9 \Delta_j   e^{- c_3 \Delta_j \frac{t}{n}}.$$

\medskip
\noindent {\bf Case 3. $\ell \ge \frac{1}{2}$ and $\rho < \frac{1}{2}$.} We define the {\em good} event has having $\UCB_j \le \mu^*$, and $\UCB_{q} \ge \mu^*$. In this case, if arm $j$ is played, then arm $q$ is also played so that the regret is zero. As before, we upper bound the probability of the good event not happening by a union of {\em bad} events.

As in Case (2), $\UCB_j > \mu^*$ is captured by two bad events $V_j > \ell \sigma^2_j$ and $\UCB_j > \mu^*$ given $V_j \le \ell \sigma^2_j$. It is easy to check that the probability of these events are upper bounded by those derived for Case (2), so that: 
$$ \Pr[\UCB_j > \mu^*] \le 6 e^{- c_3 \ell \sigma_j^2 s_j} = 6 e^{- c_3\Delta_j s_j}.$$

To capture $\UCB_{q} < \mu^*$, we consider two other bad events $V_{q} < 0.65 \sigma^2_*$ and $\UCB_{q} < \mu^*$ given $V_{q} \geq 0.65 \sigma^2_*$.
Using Theorem~\ref{thm:tail} and the fact that $\Delta_j < \frac{\sigma^2_*}{2}$, we have:
$$ \Pr[ V_{q} < 0.65 \sigma^2_*] \le 3 e^{-0.01 \sigma^2_* s_{q}} \le 3 e^{-c_4 \Delta_j s_{q}}$$
for some constant $c_4 > 0$. Assume therefore that $V_{q} \ge 0.65 \sigma^2_*$. The last bad event is $\UCB_{q} < \mu^*$, which is equivalent to $\mu^* - m_{q} > \epsilon V_{q}$, which implies  $\mu^* - m_{q} > 0.065 \sigma^2_*$, since $\epsilon = 0.1$. Using Theorem~\ref{thm:tail} and the fact that $\Delta_j < \frac{\sigma^2_*}{2}$, there is a constant $c_5 > 0$ such that
$$ \Pr[\UCB_{q} < \mu^* | V_{q} \ge 0.65 \sigma^2_*] \le 3 e^{- \frac{0.065}{23} \sigma^2_* s_{q}} \le 3 e^{-c_5 \Delta_j s_{q}}.$$

As before, we take the union of all these bad events and set $c_6 = \frac{1}{2} \min(c_3, c_4, c_5)$ to obtain:
$$ \E[R_{jt}] \le 12 \Delta_j   e^{- c_6 \Delta_j \frac{t}{n}}.$$

Given the tail bounds derived in each of the three cases, by linearity of expectation over all time steps $t$ and sub-optimal arms $j$ to which the regret can be charged, we have:
$$ \E[\mbox{Regret}] \le  O(n) + \sum_{j \neq q} \sum_{t = t_0}^T 12 \Delta_j   e^{- c_6 \Delta_j \frac{t}{n}} = O(n^2),$$
where use a regret of $t_0 = O(n)$ for the first $t_0$ steps, and use linearity of expectation beyond that. This completes the proof of Theorem~\ref{thm1}.

\section{Handling Correlation Between Arms in the \all{} Model}
\label{sec:mab-dependent}
We now consider the \all{} model when the rewards on the arms can be correlated. In other words, the reward vector $\vec{r_t}$ at any time $t$ is drawn from a joint distribution over $[0,1]^n$. These draws are $i.i.d.$ across time steps.  Note that the analysis for the independent case presented above crucially needs Lemma~\ref{lem:max}, which does not hold when arms can be correlated. We now show a different algorithm and analysis (in the \all{} model) that uses $k=4$ probes and achieves a regret bound of $\widetilde{O}(T^{1/3})\cdot  \text{poly}(n)$. Again note that such a dependence on $T$ cannot be achieved in the standard bandit model with $k=1$ probes. 

\subsection{Correlation-Exploitation Algorithm}
As before the idea is to explore and exploit simultaneously, but the algorithm now plays pairs of arms, and thus can keep track of the {\em gain} offered by playing two arms simultaneously. More formally, the algorithm uses $k=4$ probes in every step, and consists of two exploit and two explore probes. 

\medskip
\noindent {\bf Explore Probes.} The explore probes pull every pair $(i,j)$ of arms in a round-robin fashion. Using these probes, the algorithm maintains estimates of the mean reward $\mu_i$, and additionally, estimates of quantities 
\begin{equation}
    G_{ji} := \mathbb{E}[ (X_j - X_i)_+]. \label{eq:def-gain}
\end{equation} 
Let us call the estimates $\hat{\mu}_i$ and $\hat{G}_{ji}$ respectively. The latter estimates the gain that arm $j$ offers over arm $i$ when played together. These estimates turn out to be crucial in handling correlations in the rewards. Note that since the $\hat{\mu}_i$ uses the same samples used to estimate $\hat{G}_{ji}$, these estimates can be dependent. 

\medskip
\noindent {\bf Exploit Probes.} For every arm $i$, define its ``partner'' as $\text{argmax}_j \hat{G}_{ji}$. The partner can change with time, and is a random variable that depends on the rewards obtained so far.  At every time $t$, for the exploit probes, the algorithm pulls the arm $i$ that has the highest value of $\hat{\mu}_i$ -- we call this the \emph{primary} arm at time $t$ -- and its partner. 

\subsection{Analysis}
We next turn to the analysis of the algorithm described above. Recall that $\Delta_i = \mu_q - \mu_i$, where $q$ is the arm with highest expected reward. We will show the following theorem. Though the improvement is not as impressive as for the independent reward case, we note that such a dependence on $T$ cannot be obtained with $k=1$ probes.

\begin{theorem}\label{thm:main3}
The regret of the correlation-exploitation algorithm is bounded by 
$$ \mbox{Regret} \le O \left(n^3\cdot \sum_{i \neq q} \sqrt{\frac{\log (1/\Delta_i)}{\Delta_i}} \right),$$
which implies a parameter independent regret bound of $\tilde{O}(n^{8/3} T^{1/3})$.
\end{theorem}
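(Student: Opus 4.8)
The plan is to replace the structural role played by Lemma~\ref{lem:max} in the independent case --- which fails once the arms are correlated --- by a purely deterministic ``surplus'' inequality, and to charge all regret to steps where a suboptimal arm is the primary and its partner is mis-estimated. First I would establish that for every arm $i$, $\mu_i + \max_j G_{ji} \ge \mu_q$, which is immediate from $\mu_i + G_{qi} = \E[\max(X_i,X_q)] \ge \mu_q$ (take $j=q$). Two consequences follow. If the primary is the optimal arm $q$, the reward is $\mu_q + G_{p'q} \ge \mu_q$ for any partner $p'$, so there is no regret; and if the primary is a suboptimal arm $i$ but is paired with its \emph{true} best partner $\mbox{argmax}_j G_{ji}$, the reward is again at least $\mu_q$. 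Hence positive regret at a step requires both that the primary is some suboptimal arm $i$ and that the estimated partner $p' = \mbox{argmax}_j \hat G_{ji}$ has $G_{p'i} < \Delta_i$. Writing $\beta_t$ for a uniform confidence width on the gain estimates, optimality of $p'$ for $\hat G$ gives $G_{p'i} \ge \max_j G_{ji} - 2\beta_t \ge \Delta_i - 2\beta_t$, so the regret charged to arm $i$ at such a step is at most $\min(\Delta_i, 2\beta_t)$.

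Next I would quantify the estimates from the explore probes. Since the explore probes pull the $\binom{n}{2}$ pairs round-robin, after $t$ steps every pair has $s_t = \Theta(t/n^2)$ joint samples, from which $\hat\mu_i$ and $\hat G_{ji}$ are formed. The decisive step is to invoke the \emph{variance-based} tail bounds of Theorem~\ref{thm:tail} exactly as in Section~\ref{sec:max}, rather than Hoeffding: measuring the required deviation $\Delta_i/2$ in units of $\sigma_i^2$ shows that the probability that a suboptimal arm $i$ overtakes $q$ in empirical mean --- and thus becomes the primary --- decays like $\exp(-c\Delta_i s_t)$, that is \emph{linearly} in the gap $\Delta_i$ rather than quadratically. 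This linear-in-gap behaviour is the same phenomenon that yields constant regret in the independent model; here it will instead be traded against the gain-estimation error $\beta_t = O(n\sqrt{\log(1/\delta)/t})$, and this trade-off is what produces the unusual $\Delta_i^{-1/2}$ dependence.

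I would then assemble the per-arm regret as $\E[R_i] \le \sum_t \Pr[\mbox{primary}=i]\cdot \min(\Delta_i, 2\beta_t)$. Using the linear-in-gap bound $\Pr[\mbox{primary}=i] \le \exp(-c\Delta_i t/n^2)$ together with $\beta_t = O(n\sqrt{\log(1/\Delta_i)/t})$ and comparing the sum to the integral $\int_0^\infty e^{-c\Delta_i t/n^2}\, n\,t^{-1/2}\,dt = \Theta(n^2/\sqrt{\Delta_i})$, and folding in the union bound over the $O(n^2)$ pairs and over time (which costs the logarithmic factor $\log(1/\Delta_i)$ and an extra factor of $n$), gives the claimed per-arm bound $O(n^3\sqrt{\log(1/\Delta_i)/\Delta_i})$. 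The parameter-free statement then follows by the standard balancing: every arm also contributes at most $\Delta_i T$ trivially, so summing $\min(\Delta_i T,\, n^3\sqrt{\log(1/\Delta_i)/\Delta_i})$ over arms and optimizing the cutoff gap at $\Delta \approx (n^3/T)^{2/3}$ yields $\tilde O(n^{8/3} T^{1/3})$.

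I expect the main obstacle to be conceptual as much as technical. The loss of Lemma~\ref{lem:max} means the surplus $\max_j G_{ji} - \Delta_i$ can genuinely vanish --- for instance when arm $q$ stochastically dominates arm $i$ --- so, unlike Section~\ref{sec:all}, small-gap steps cannot be shown to have non-positive regret and must instead be absorbed through the trivial $\Delta_i T$ term; these steps are precisely what prevent a constant-regret bound and force the $T^{1/3}$ rate. The delicate technical point is making the ``linear-in-gap times $t^{-1/2}$-width'' estimate rigorous while the partner $p'$ is a data-dependent random variable whose selection couples the mean and gain estimates drawn from the same samples; this requires the variance-based concentration to hold uniformly over all pairs and all time steps, and care to ensure the regime condition of Theorem~\ref{thm:tail} (the deviation exceeding a constant multiple of the variance) is met for exactly the arms handled by the refined bound.
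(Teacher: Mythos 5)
Your overall decomposition --- charging positive regret only to steps where a suboptimal arm $i$ is primary and its estimated partner's true gain falls short of $\Delta_i$ --- is exactly the paper's (this is Observation~\ref{obs:gij} and the quantity $\Delta_i - D_i(t)$). But the quantitative engine you propose is flawed. You derive the crucial $\Delta_i^{-1/2}$ dependence from the claim that the probability a suboptimal arm becomes primary decays as $\exp(-c\Delta_i t/n^2)$, \emph{linearly} in the gap, by invoking the variance-based bounds of Theorem~\ref{thm:tail}. Those bounds apply only when the deviation $\Delta_i/2$ is at least a constant multiple of $\sigma_i^2$; when $\Delta_i \ll \sigma_i^2$ (e.g.\ two Bernoulli-like arms with means $1/2$ and $1/2+\Delta_i$) the true decay is $\exp\bigl(-\Theta(t\Delta_i^2/(n\sigma_i^2))\bigr)$, quadratic in the gap, and no linear-in-gap bound holds. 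You acknowledge this regime is problematic but propose to absorb it into the trivial $\Delta_i T$ term; that does not work: an arm with constant variance and $\Delta_i = T^{-1/2}$ would then contribute $\Omega(\sqrt{T})$ with no competing bound, destroying the claimed $T^{1/3}$ rate. (In Section~\ref{sec:all} this regime was rescued by Lemma~\ref{lem:max}; in the correlated setting nothing in your argument replaces it, and the theorem's per-arm bound is claimed with no variance condition.)

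The paper instead uses the unconditional quadratic bound $\Pr[Z_{i,t}=1]\le 2\exp(-t\Delta_i^2/4n)$ of \eqref{eq:zit} and extracts the $\Delta_i^{-1/2}$ from a different source: the \emph{gain estimates}. Since $G_{qi}\ge\Delta_i$ and each $\hat G_{ji}$ is an empirical mean of $[0,1]$-valued variables, the multiplicative Chernoff bound (Theorem~\ref{app:chernoff}) gives $\Pr[\Delta_i - D_i(t) > c\Delta_i]\le n e^{-c^2\Delta_i t/36n^2}$ (Lemma~\ref{lem:regret-term}); the exponent is linear in $\Delta_i$ because the relative-error Chernoff exponent is proportional to the mean, with no variance hypothesis needed. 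This yields $\E[\Delta_i - D_i(t)]\le O(n^2\sqrt{\Delta_i/t})$ via Lemma~\ref{lem:tech1}. The two ingredients are then combined not as a product (as in your $\Pr[\mathrm{primary}=i]\cdot\min(\Delta_i,2\beta_t)$; the events are dependent, being functions of the same samples) but by splitting time at $T_i = 4n^2\log(1/\Delta_i)/\Delta_i^2$: before $T_i$ one bounds $Z_{i,t}\le 1$ and sums the expectations $n^2\sqrt{\Delta_i/t}$ to get $O(n^2\sqrt{\Delta_i T_i}) = O(n^3\sqrt{\log(1/\Delta_i)/\Delta_i})$; after $T_i$ one bounds the shortfall by $\Delta_i$ and uses the quadratic tail to get $O(1)$. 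Your additive Hoeffding width $\beta_t = O(n\sqrt{\log(1/\delta)/t})$ for the gains is too weak for the quadratic primary-probability bound to suffice (it would give $n^{3/2}/\Delta_i$ per arm), which is precisely why your derivation leans on the false linear-in-gap claim. A secondary issue: summing $\min(\Delta_i T,\, n^3\sqrt{\log(1/\Delta_i)/\Delta_i})$ arm by arm gives only $\tilde O(n^3 T^{1/3})$; the stated $n^{8/3}$ requires noting that arms with gap below the cutoff $\Delta_0$ contribute at most $\Delta_0 T$ \emph{collectively}, since only one arm is primary at each step.
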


The rest of this section is devoted to proving the above theorem.  Towards the end, we will show that the analysis is tight and the dependence on $1/\sqrt{\Delta}$ cannot be improved for this algorithm.

Define $Z_{i,t}$ to be the random variable indicating if arm $i$ is the primary arm at time $t$.  The first observation is that for all arms with $\Delta_i >0$,
\begin{equation}\label{eq:zit}
 \Pr[Z_{i,t} =1] \le 2\exp(-t \Delta_i^2/4n).  
\end{equation}
To see this, note that in order to choose arm $i$ over arm $q$, we must have either $\hat{\mu_i} \ge \mu_i + \frac{\Delta_i}{2}$ or $\hat{\mu}_q \le \mu_q - \frac{\Delta_i}{2}$. Since we have at least $t/n$  {\em i.i.d.} samples for each arm and since the rewards are in $[0,1]$, the probability of each of these events can be bounded using Bernstein's inequality, and taking the union over the two events implies~\eqref{eq:zit}.

Next, we turn to the analysis of the quantities $G_{ij}$ and their estimates $\hat{G}_{ij}$.  The first observation is the following:

\begin{obs}\label{obs:gij} For any two arms $(i,j)$, we have
 $\mathbb{E}[\max(X_i, X_j)] = \mu_i + \mathbb{E}[(X_j - X_i)_+] = \mu_i + G_{ji}$. Furthermore, for every arm $i$, there exists $j$ such that $G_{ji} \ge \Delta_i$.
\end{obs}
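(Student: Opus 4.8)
The plan is to handle the two assertions separately, since both are elementary consequences of the definition of $G_{ji}$ in Eq.~\eqref{eq:def-gain} together with the basic properties of the positive-part operation.

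For the first identity, I would start from the pointwise algebraic fact that for any two reals $a, b$ we have $\max(a,b) = a + (b-a)_+$, where $(x)_+ = \max(0,x)$. Applying this with $a = X_i$ and $b = X_j$ gives $\max(X_i, X_j) = X_i + (X_j - X_i)_+$ as an identity between random variables. Taking expectations and using linearity of expectation yields $\mathbb{E}[\max(X_i, X_j)] = \mathbb{E}[X_i] + \mathbb{E}[(X_j - X_i)_+] = \mu_i + G_{ji}$, where the last step is just the definition of $\mu_i$ and of $G_{ji}$ from Eq.~\eqref{eq:def-gain}. Note this holds regardless of whether $X_i$ and $X_j$ are independent or correlated, which is exactly what makes the quantity $G_{ji}$ the right object to track in the correlated setting.

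For the second claim, the key observation is to make the right choice of partner index, namely $j = q$, where $q = \mathrm{argmax}_s \mu_s$ is the optimal arm. For this choice I would use the pointwise bound $(x)_+ \ge x$, so that $(X_q - X_i)_+ \ge X_q - X_i$ holds surely. Taking expectations gives
\[
 G_{qi} = \mathbb{E}[(X_q - X_i)_+] \ge \mathbb{E}[X_q - X_i] = \mu_q - \mu_i = \Delta_i,
\]
which exhibits an index $j = q$ with $G_{ji} \ge \Delta_i$, as required.

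There is no genuine obstacle here; the statement is purely definitional. The only point worth flagging is conceptual rather than technical: the second part says that for each suboptimal arm $i$, pairing it with the true best arm $q$ recovers at least the full gap $\Delta_i$ in expected gain. This is the structural fact the algorithm's exploit probes are designed to exploit, since each arm's chosen partner $\mathrm{argmax}_j \hat{G}_{ji}$ can only do better in the estimated gain than the specific choice $j=q$, and the subsequent analysis must control the gap between the estimates $\hat{G}_{ji}$ and the true values $G_{ji}$.
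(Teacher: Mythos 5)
Your proof is correct and is exactly the argument the paper intends: the paper states this as an unproved observation, and the pointwise identity $\max(a,b)=a+(b-a)_+$ plus the choice $j=q$ with $(x)_+\ge x$ is the evident justification. Nothing to add.
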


This implies that when we play $i$, if we are able to identify its ``optimal partner'' $j$, then we will incur zero regret in expectation. However, since we only estimate $G_{ji}$, the actual regret can be higher. To analyze this difference, let us define $D_i (t)$ to be $\mathbb{E}[(X_j - X_i)_+]$, where $j$ is the partner of $i$ at time $t$ and the expectation is over the reward distribution. As $j$ is a random variable (depending on the rewards observed at times $t' < t$), so is $D_i(t)$. The key observation is that the expected regret at time $t$ conditioned on $i$ being the explore arm is bounded by $\Delta_i - D_i(t)$. 

Our overall approach is to bound the expected regret as
\[ \sum_i \sum_t \mathbb{E}[ Z_{i,t} (\Delta_i - D_i(t)) ].  \]
To bound this, we fix an index $i$ and analyze the sum over $t$. We can further bound the sum as:
\begin{equation}
    \sum_{t \le T_i} \mathbb{E}[\Delta_i - D_i(t)] + \sum_{t > T_i} \mathbb{E}[Z_{i,t}] \Delta_i,\label{eq:regret-toshow}
\end{equation}
where $T_i = \frac{4n^2 \log (1/\Delta_i)}{\Delta_i^2}$. The second summation is bounded easily using \eqref{eq:zit}:
\[ \sum_{t > T_i} \mathbb{E}[Z_{i,t}] \Delta_i \le 2 \Delta_i \int_{t = T_i}^{\infty} e^{-t\Delta_i^2 / 4n} ~dt = 2\Delta_i \cdot \frac{4n}{\Delta_i^2} \cdot e^{-T_i \Delta_i^2/4n} \le O(1). \]

Let us thus focus on the first sum. For the first $\frac{12 n^2}{\Delta_i}$ time steps, we simply bound the expectation by $\Delta_i$, which makes the sum add up to $O(n^2)$. As $t$ increases, the following lemma shows that $(\Delta_i - D_i(t))$ becomes much smaller than $\Delta_i$ with high probability.

\begin{lemma}\label{lem:regret-term}
Assume w.l.o.g. that $\Delta_i < 1$. Suppose that $t \ge 12 n^2 / \Delta_i$. Then for any $c \ge 0$,
\[ \Pr[ \Delta_i - D_i(t) \ge c] \le \Pr[ \Delta_i - D_i(t) > c\Delta_i ] \le n e^{-c^2 \Delta_i t/36 n^2}.\]
\end{lemma}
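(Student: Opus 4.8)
The plan is to unwind the definition of $D_i(t)$ and reduce the statement to a one-sided concentration bound on the gain estimates $\hat G_{ji}$. Recall that the partner of $i$ at time $t$ is $j^* = \mbox{argmax}_j \hat G_{ji}$, so $D_i(t) = G_{j^* i}$, and hence the event $\{\Delta_i - D_i(t) > c\Delta_i\}$ is exactly $\{G_{j^* i} < (1-c)\Delta_i\}$: the chosen partner has true gain noticeably below $\Delta_i$. The first inequality in the statement is immediate, since $\Delta_i < 1$ gives $c\Delta_i < c$; so I will only bound $\Pr[G_{j^* i} < (1-c)\Delta_i]$, and I may assume $c \le 1$ (for $c \ge 1$ the event is empty because $G_{j^* i} \ge 0$).

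First I would fix, using Observation~\ref{obs:gij}, a reference arm $j_0$ with $G_{j_0 i} \ge \Delta_i$. Since $j^*$ is the estimated maximizer, $\hat G_{j^* i} \ge \hat G_{j_0 i}$. I then split the bad event into two cleaner pieces: (a) the reference estimate is too small, $\hat G_{j_0 i} < (1 - c/2)\Delta_i$, or (b) some arm $j$ with true gain $G_{ji} < (1-c)\Delta_i$ has an estimate overshooting by at least $(c/2)\Delta_i$, i.e.\ $\hat G_{ji} > G_{ji} + (c/2)\Delta_i$. The decomposition is valid because if (a) fails then $\hat G_{j^* i} \ge \hat G_{j_0 i} \ge (1-c/2)\Delta_i$ while $G_{j^* i} < (1-c)\Delta_i$, forcing an overshoot exceeding $(c/2)\Delta_i$ at $j = j^*$, which is precisely (b).

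Next I would record the sample count: the explore probes cycle through all $\binom{n}{2}$ pairs in round-robin, so (using $t \ge 12n^2/\Delta_i$) each estimate $\hat G_{ji}$ uses $s \ge t/n^2$ i.i.d.\ joint samples of $(X_i,X_j)$. Each summand $(X_j - X_i)_+$ lies in $[0,1]$ with variance at most its mean $G_{ji}$, so Bernstein's inequality (the same tool behind~\eqref{eq:zit}) applies. For term (b) the relevant arms have $G_{ji} < \Delta_i$, hence variance at most $\Delta_i$, and an overshoot of $(c/2)\Delta_i$ has Bernstein exponent of order $s c^2 \Delta_i$; a union bound over the at most $n$ such arms contributes the leading factor $n$ in the claim.

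The main obstacle — and the reason the exponent carries $\Delta_i$ rather than $\Delta_i^2$ — is term (a), the underestimate of $j_0$, whose gain $G_{j_0 i}$ may be as large as $1$ and whose variance is then also large. The key point is that a large gain forces a correspondingly large deviation: the event $\hat G_{j_0 i} < (1-c/2)\Delta_i$ requires a downward deviation $\epsilon = G_{j_0 i} - (1-c/2)\Delta_i$, and since the variance is at most $G_{j_0 i} = \epsilon + (1-c/2)\Delta_i \le \epsilon + \Delta_i$, the Bernstein exponent $\frac{s\epsilon^2}{2\sigma^2 + (2/3)\epsilon}$ is at least of order $\frac{s\epsilon^2}{\epsilon + \Delta_i}$, which is increasing in $\epsilon$ and hence minimized at the smallest feasible value $\epsilon = (c/2)\Delta_i$; plugging this in yields an exponent of order $s c^2 \Delta_i$, matching the first-power scaling of (b). Combining the two tail bounds, substituting $s \ge t/n^2$, and absorbing constants into the factor $36$ gives the claimed bound $n\,e^{-c^2 \Delta_i t/36 n^2}$. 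I expect the only delicate bookkeeping to be checking that the numerical constants leave enough slack, which they do since the worst-case exponent constant ($\tfrac{3}{40}$ from term (a), after accounting for $s \ge t/n^2$) exceeds $\tfrac{1}{36}$.
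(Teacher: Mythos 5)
Your proposal is correct and follows essentially the same route as the paper's proof: fix a reference arm with $G_{j_0 i}\ge \Delta_i$ via Observation~\ref{obs:gij}, decompose the bad event into an underestimate of that arm versus an overshoot by some arm with $G_{ji}<(1-c)\Delta_i$, and union bound over at most $n$ arms, with the variance-adapted concentration bound supplying the exponent linear (rather than quadratic) in $\Delta_i$. The only cosmetic difference is the tail tool: the paper uses multiplicative Chernoff bounds with a case split on whether $G_{ji}<\Delta_i/4$ (small- versus large-deviation regime), whereas you use Bernstein with $\sigma^2\le\mu$ and a monotonicity-in-$\epsilon$ argument for the underestimate term -- these are interchangeable here and your constants check out.
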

\begin{proof} 
First note that we can assume $0 < c < 1$; the bound for $c \ge 1$ is trivial. The proof proceeds in two parts, both of which use the fact that after $t$ steps, the explore arms (which perform round robin) use at least $(t/n^2)$ samples for computing each of the $\hat{G}_{ji}$. First, we show that for the optimal arm $q$,  since $G_{qi}\ge \Delta_i$ from Observation~\ref{obs:gij},
\[ \Pr[ \Delta_i - \hat{G}_{qi} > \frac{c}{2}\Delta_i] \le \Pr[ \hat{G}_{qi} < (1-\frac{c}{2}) G_{qi} ] \le e^{-c^2 \Delta_i t/12 n^2}.\]
This is a direct application of the standard Chernoff bound (e.g., part (1) of Theorem~\ref{app:chernoff}). Next, we consider any arm $j$ for which $G_{ji}$  is $\le \Delta_i (1-c)$. In this case, we wish to argue that $\hat{G}_{ij} < \Delta_i - \frac{c}{2} \Delta_i$ with high probability. For this, we consider two cases.

\noindent \emph{Case 1.} $G_{ji} < \Delta_i/4$. In this case, $\Delta_i - \frac{c}{2}\Delta_i > \Delta_i/2 > 2G_{ji}$, and thus to bound $\Pr[ \hat{G}_{ji} \ge 2G_{ji}]$, we can use the ``high deviation'' regime of Chernoff bounds (part (2) of Theorem~\ref{app:chernoff}) to conclude that 
\[ \Pr[ \hat{G}_{ji} \ge \Delta_i - \frac{c}{2} \Delta_i] \le \Pr[ \hat{G}_{ji} \ge \Delta_i/2 ] \le e^{-t\Delta_i/6n^2}. \]

\noindent \emph{Case 2.} $G_{ji} \ge \Delta_i/4$. In this case, since $\Delta_i - \frac{c}{2}\Delta_i \ge (1+\frac{c}{2})G_{ji}$, we can use a Chernoff bound again, to obtain
\[ \Pr[ \hat{G}_{ji} \ge \Delta_i - \frac{c}{2} \Delta_i] \le e^{-c^2 G_{ji} t/12 n^2} \le e^{-c^2 \Delta_i t/ 48 n^2}. \]

Combining the two parts and taking a union bound, we have that with probability at least $1 - n e^{-c^2 \Delta_i t/48 n^2}$, we have that (a) $\max_{j} \hat{G}_{ji} \ge \Delta_i - \frac{c}{2} \Delta_i$, and (b) the max is not attained by any $j$ with $G_{ij} \le G_{ji} (1-c)$. If both (a) and (b) hold, then $\Delta_i - D_i(t) \le c\Delta_i$, and this completes the proof of the lemma.
\end{proof}

Lemma~\ref{lem:regret-term} can be used to bound the first term of~\eqref{eq:regret-toshow}, using the following technical lemma.

\begin{lemma}\label{lem:tech1}
Let $n, \beta \ge 1$ be parameters, and $Y$ be a random variable that satisfies the condition:
\[ \forall c >0, \Pr[Y \ge c] \le n e^{-c^2 \beta}.\] 
Then $\mathbb{E}[Y] \le \frac{2n}{\sqrt{\beta}} $.
\end{lemma}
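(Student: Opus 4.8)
The plan is to use the standard ``integrate the tail'' identity for the expectation, combined with a Gaussian integral. The starting point is that for any random variable $Y$,
\[ \E[Y] \le \E[\max(Y,0)] = \int_{0}^{\infty} \Pr[Y \ge c]\, dc, \]
where the inequality holds because dropping the (nonpositive) contribution of $\int_{-\infty}^0 \Pr[Y \le c]\,dc$ can only increase the value, and the equality is the layer-cake formula for the expectation of a nonnegative random variable. This reduces the lemma to bounding a single one-dimensional integral of the tail.

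Next I would plug in the hypothesis $\Pr[Y \ge c] \le n e^{-c^2 \beta}$. Although this tail bound exceeds $1$ for small $c$ (since $n \ge 1$), no case analysis is needed: the integrand is also trivially at most $1$, and in particular $\Pr[Y \ge c] \le n e^{-c^2\beta}$ may be used directly as an upper bound on the integrand everywhere. Thus
\[ \E[Y] \le \int_{0}^{\infty} n e^{-c^2 \beta}\, dc = n \int_{0}^{\infty} e^{-c^2 \beta}\, dc. \]

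Finally, I would evaluate the Gaussian integral $\int_{0}^{\infty} e^{-c^2\beta}\,dc = \tfrac{1}{2}\sqrt{\pi/\beta}$ (by the substitution $u = c\sqrt{\beta}$ and the standard value $\int_0^\infty e^{-u^2}\,du = \tfrac{\sqrt\pi}{2}$). This gives $\E[Y] \le \tfrac{n\sqrt\pi}{2\sqrt\beta}$, and since $\tfrac{\sqrt\pi}{2} \approx 0.886 \le 2$, we conclude $\E[Y] \le \tfrac{2n}{\sqrt\beta}$, as claimed.

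There is no real obstacle here; the only point requiring a moment's care is the observation that one does not need to truncate the tail bound at $1$ (i.e.\ one may avoid the split at the threshold $c_0 = \sqrt{\ln n/\beta}$ where $n e^{-c^2\beta} = 1$), since bounding the integrand by $n e^{-c^2\beta}$ on all of $[0,\infty)$ already suffices and yields a clean constant. The slack between $\sqrt\pi/2$ and $2$ confirms that the stated constant is comfortably achieved.
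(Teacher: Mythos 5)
Your proof is correct and follows essentially the same route as the paper: bound $\E[Y]$ by the integral of the tail and plug in $n e^{-c^2\beta}$ over all of $[0,\infty)$. The only (cosmetic) difference is that you evaluate the resulting Gaussian integral exactly as $\tfrac{1}{2}\sqrt{\pi/\beta}$, whereas the paper bounds it by splitting $[0,\infty)$ into intervals of length $1/\sqrt{\beta}$ and summing $e^{-i^2}$; both yield the stated constant.
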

\begin{proof}
Since we are only interested in an upper bound on $\mathbb{E}[Y]$, we can ignore potential negative values of $Y$, and write

$$ \mathbb{E}[Y] \le \int_{c=0}^{\infty} \Pr[Y \ge c] ~dc \le n \int_{c=0}^{\infty} e^{-c^2 \beta} dc.$$
We then split the integrals into a sum over the intervals $c \in [0, \frac{1}{\sqrt{\beta}}],~[\frac{1}{\sqrt{\beta}}, \frac{2}{\sqrt{\beta}}], ~[\frac{2}{\sqrt{\beta}}, \frac{3}{\sqrt{\beta}}], \dots$. As the integrand in the $(i+1)$th interval is bounded by $\exp(-i^2)$, the sum can be bounded as desired.
\end{proof}

Using the lemmas, we can bound the first summation in~\eqref{eq:regret-toshow}. The main observation is that by using Lemma~\ref{lem:regret-term}, if $t = \beta \frac{48 n^2}{\Delta_i}$, the hypothesis of Lemma~\ref{lem:tech1} is satisfied for $Y = \Delta_i - D_i(t)$.  This implies that we can bound 
\[ \mathbb{E}[\Delta_i - D_i(t)] \le \frac{2n}{\sqrt{\beta}} = 12 n^2 \sqrt{\frac{\Delta_i}{t}}.  \]

Summing this between $t = \frac{48 n^2}{\Delta_i}$ (or even $t=1$) and $t = T_i = \frac{4n^2 \log (1/\Delta_i)}{\Delta_i^2}$, we obtain a bound of $O\left(  n^2 \sqrt{\Delta_i T_i} \right)$. Plugging in the value of $T_i$ then completes the proof of Theorem~\ref{thm:main3}.

\subsubsection{Tight Instance} 
We now show an instance where the above algorithm has regret $\Omega(n/\sqrt{\Delta})$.  There are three arms and $n-3$ dummy arms for large $n$. Arm $1$ has reward $X_t$ that is drawn {\em i.i.d.} from a Bernoulli distribution that is $1/3$ with probability $1/2$ and $2/3$ with probability $1/2$. Arm $2$ has reward $Y_t = X_t + A_t$, where $A_t$ is {\em i.i.d.} drawn from Bernoulli$(1/3, 3 \Delta)$. Arm $3$ has reward $Z_t  = X_t + B_t$, where $B_t$ is {\em i.i.d.} drawn from Bernoulli$(1/3, 3 \Delta(1-\sqrt{\Delta}))$, with $\Pr[B_t = 0 | A_t = 0] = 1$.

The dummy arms have reward zero at all time steps.  We assume that at every step, one pair of arms $(i,j)$ is sampled, and these samples are used to estimate $G_{ij}, G_{ji}, \mu_i$, and $\mu_j$. The dummy arms ensure the estimates $\hat{\mu_i}$ are approximately independent for all $i,j \in \{1,2,3\}$. Further, $\hat{G_{ji}}$ and $\hat{G_{ki}}$ are independent for all $i,j,k \in \{1,2,3\}$, since these estimates are constructed at different time steps. 

On this instance, the regret is with respect to arm $2$, with $\E[Y] = \frac{1}{2} + \Delta$. However, the construction of reward distributions ensures that $\E[\max(X,Y)] = \E[\max(Z,Y)] = \E[Y]$. Therefore, at any step, the expected regret of any strategy that plays pairs of arms is non-negative.

Using the tightness of Chernoff bounds on Bernoulli distributions, we can check that on this instance, with constant probability, the following two events happen for all $t \le \Omega(n/\Delta^2)$:
\begin{itemize}
    \item $\hat{\mu_1} > \max \left(\hat{\mu_2}, \hat{\mu_3} \right)$, so that the algorithm plays arm $1$;
    \item $\hat{G_{31}} > \hat{G_{21}}$, so that arm $3$ is the partner of arm $1$ and gets played.
\end{itemize}

In this event, the algorithm incurs regret $\Delta^{3/2}$ against the optimal arm $2$ each step, for a total regret of $\Omega(n/\sqrt{\Delta})$. This shows the analysis above is tight.
  
\section{Conclusion}
We conclude with some open questions. The main open question  is whether the stochastic assumptions are needed for the MAB results. In other words, can we obtain improved regret guarantees for the adversarial MAB problem~\cite{AuerCFS}. We make  progress in this direction with our results for correlated MAB in Section~\ref{sec:mab-dependent}; however, we believe our results even for this case can be improved.  

For the stochastic MAB problem, one intriguing open question is whether constant regret is possible for $k=2$ probes in the \all model. Note that Theorem~\ref{thm2} implies a regret of $O(n^2 \log T)$. 
However, unlike the celebrated Lai-Robbins result~\cite{LaiR} that shows the $\log T$ factor is necessary when $k=1$, we have not been able to show a lower bound requiring such dependence on $T$ for $k=2$, either for the \all model or for the \mmax model. We leave this as an interesting open question. Another interesting question is to extend our bandit results to the case with imperfect hints. 

At a higher level, it would be interesting to explore the power of a few probes in more complex bandit settings. One example is the linear contextual bandit problem~\cite{Abbasi,Li} where the stochastic arms correspond to latent variables. 
At any step, a decision space is given and the policy needs to choose a linear combination of these variables from the decision space, obtaining that linear combination of the reward of the arms as its reward. Now suppose the latent space of variables has small dimension, then does having multiple probes help with the regret bounds?
  
\bibliographystyle{abbrv}
\bibliography{refs}

\appendix

\section{Extension to Experts and Online Convex Optimization}\label{app:extension1}
In this section, we extend the results in Section~\ref{sec:experts} to the experts and the online convex optimization settings. Note that though the experts problem is a special case of online linear optimization, we present an improved $O(\ln n)$ regret bound, where $n$ is the number of arms. We note that the extension of these results to the setting with $B$ imperfect hints (Section~\ref{sec:imperfect}) to yield regret that depends on $O(\sqrt{B+1})$ is straightforward and omitted.

\subsection{Experts Setting: Logarithmic Regret}\label{app:extension-experts}
In the special case of experts, the set $\W$ is the $d$-dimensional unit simplex, where the vertices (or dimensions) are called ``arms''. Playing an arm $i$ at time $t$ incurs loss $\ell^t_i$, and the algorithm should play one arm each time step, incurring its loss and subsequently learning the losses of all arms that step. The regret compares against the loss of choosing a single arm for all time steps, albeit with full information about the losses.  

In the probe model, the algorithm can probe two arms $a^t, b^t$ at step $t$ to find out $\mbox{argmin}_{w = a^t,b^t} \ell^t_{w}$. It subsequently plays an arm to incur its loss, finally learning the losses of all arms that step. Though one can directly use \lap for this setting, we present an improved regret bound.

\medskip \noindent {\bf Algorithm.} For this setting, instead of sampling from the {\tt Laplace} distribution, we sample each $x_j$ from a {\tt Gumbel} distribution with location $\mu = 0$ and scale $\beta = \frac{1}{\eta}$, where $\eta \in (0,0.4]$ is a constant. This distribution has CDF $F_Y(z) = \exp(-\exp(-\eta z))$, where the support of $z$ is all reals. 

For the {\tt Gumbel} distribution, using standard results~\cite{logit}, the \lap algorithm becomes a modification of the classical {\sc Hedge} algorithm. Recall that the {\sc Hedge} algorithm maintains a weight $W^t_i$ for each arm (dimension) $i$. At time $t$, the algorithm chooses arm $i$ to play with probability proportional to $W^t_i$. Subsequent to observing the losses at this step, it updates the weight of each arm $i$ as: 
$$ W^{t+1}_i = W^t_i \exp(- \eta \ell^t_i).$$

Our algorithm {\em Hedge with Choice} ({\sc HwC}) works as follows. At step $t$, let $\D$ denote the distribution over arms, where arm $i$ has probability $p^t_i = \frac{W^t_i}{\sum_{j=1}^n W^t_j}$. Note that the classical {\sc Hedge} algorithm plays arm $i$ with probability $p^t_i$. Instead, our algorithm independently samples $a^t, b^t$ from $\D$. Let $A^t = \ell^t_{a^t}$ and $B^t = \ell^t_{b^t}$. The algorithm probes to learn $w^t = \mbox{argmin}_{a^t, b^t} \{A^t, B^t \}$ and subsequently plays this arm, incurring loss $\min(A_t, B_t)$ at time $t$. 

\medskip \noindent {\bf Analysis.} The analysis is the same as before. We define \btrl using the {\tt Gumbel} distribution, and denote its loss at step $t$ as $C^t$. In Lemma~\ref{lem:kv}, we set $D = 2$, since the space $\W$ is the set of vertices of a $d$-dimensional unit simplex. Further, it can be easily shown~\cite{Abernathy1} that $\E[\max_j |x_j|] = O\left(\frac{\ln n}{\eta}\right)$. Therefore we have:

\begin{corollary}
\label{cor5}
For any $\eta > 0$, the regret of the \btrl algorithm is $O\left( \frac{\ln n}{\eta}\right)$.
\end{corollary}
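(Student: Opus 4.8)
The plan is to read the bound directly off the generic regularized-leader estimate of Lemma~\ref{lem:kv}, which asserts that the regret of \btrl is at most $D \cdot \E[\max_{j} |x_j|]$, once we specialize $D$ and the noise distribution to the experts setting. Two quantities must be pinned down: the $\ell_1$-diameter $D$ of the effective action set, and the expected largest absolute value among the $n$ i.i.d.\ {\tt Gumbel} draws of scale $1/\eta$ that \btrl uses here. The corollary then follows by multiplying the two.

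First I would fix $D$. Here $\W$ is the probability simplex over the $n$ arms, and the argmin of any (perturbed) linear loss over the simplex is always attained at a vertex, i.e.\ at a standard basis vector $e_i$. Hence the only actions \btrl ever selects lie in $\{e_1, \dots, e_n\}$, and the relevant diameter is $D = \max_{i \neq i'} |e_i - e_{i'}|_1 = 2$. This reduces the task to bounding $\E[\max_{j} |x_j|]$.

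Next I would bound that expected maximum for the {\tt Gumbel} noise, whose CDF is $\exp(-\exp(-\eta z))$. The key structural fact is that the upper tail is light but only singly exponential, $\Pr[x_j > z] = 1 - \exp(-\exp(-\eta z)) \le \exp(-\eta z)$, whereas the lower tail decays doubly exponentially and so contributes only $O(1/\eta)$ to $\E[\max_j |x_j|]$. I would then integrate the tail of the maximum, $\E[\max_j |x_j|] \le \int_0^{\infty} \Pr[\max_j |x_j| > z]\,dz$, splitting at the threshold $z_0 = (\ln n)/\eta$: below $z_0$ one uses the trivial bound $1$, contributing $z_0 = (\ln n)/\eta$; above $z_0$ one uses the union bound $\Pr[\max_j x_j > z] \le n\,e^{-\eta z}$, whose integral from $z_0$ equals $n e^{-\eta z_0}/\eta = 1/\eta$. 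This yields $\E[\max_j |x_j|] = O((\ln n)/\eta)$, matching the extreme-value estimate for the maximum of $n$ {\tt Gumbel}s quoted from~\cite{Abernathy1}.

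Combining the two ingredients through Lemma~\ref{lem:kv} gives regret $\le 2 \cdot O((\ln n)/\eta) = O((\ln n)/\eta)$, which is the claim. The only genuinely computational step is the tail integral for the {\tt Gumbel} maximum; it is standard and citable to~\cite{Abernathy1}, so I anticipate no real obstacle. The conceptual point worth highlighting is that replacing {\tt Laplace} by {\tt Gumbel} is exactly what upgrades the $O(d^2 \ln d)$ bound of Theorem~\ref{thm:main1} to a $\ln n$ dependence, since the Gumbel scale need only be $1/\eta$ rather than $d/\eta$.
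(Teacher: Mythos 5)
Your proposal is correct and follows essentially the same route as the paper: apply Lemma~\ref{lem:kv} with $D=2$ for the simplex and use the bound $\E[\max_j |x_j|] = O((\ln n)/\eta)$ for $n$ i.i.d.\ {\tt Gumbel}$(0,1/\eta)$ draws, which the paper simply cites from~\cite{Abernathy1} and you verify by a standard tail-splitting integral. The extra detail you supply for the expected maximum is a correct expansion of the cited fact, not a different argument.
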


Lemma~\ref{lem:dp} holds for this setting as well, this is shown in~\cite{Abernathy1} and we present a simple proof below. 
Formally, for the distributions $\D_1$ and $\D_2$ capturing the choice of arm $a^t$ (resp. $b^t$) and $c^t$, we have:

\begin{lemma}
For any arm $i \in \{1,2,\ldots, d\}$, we have:
$$ \exp(- \eta) \le \frac{\Pr[\D_1 = i]}{\Pr[\D_2 = i]} \le \exp(\eta) $$
\end{lemma}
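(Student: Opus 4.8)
The plan is to avoid the density-comparison argument behind Lemma~\ref{lem:dp}, since it does not survive the switch to {\tt Gumbel} noise: the ratio of two {\tt Gumbel} densities whose location parameters differ by $\ell^t_i$ is \emph{not} uniformly bounded (it diverges in the lower tail, where $e^{-\eta z}\to\infty$), so no coordinate-wise bound of the kind used for {\tt Laplace} is available. Instead I would work directly with the exact law that the {\tt Gumbel} perturbation induces over arms.

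First I would record the closed form. Since $\W$ is the simplex, the regularized optimum is just the index-wise optimum of $L^{t-1}+x$, and the standard {\tt Gumbel} perturbation (logit) identity~\cite{logit} shows that perturbing by i.i.d.\ {\tt Gumbel}$(0,1/\eta)$ noise and selecting this optimum yields exactly the softmax law over arms. Concretely, with $L^t=L^{t-1}+\ell^t$,
$$ \Pr[\D_1=i]=\frac{\exp(-\eta L^{t-1}_i)}{Z_1}, \qquad \Pr[\D_2=i]=\frac{\exp(-\eta L^{t}_i)}{Z_2}, $$
where $Z_1=\sum_j \exp(-\eta L^{t-1}_j)$ and $Z_2=\sum_j \exp(-\eta L^{t}_j)$. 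If a self-contained derivation is preferred to a citation, it follows by conditioning on the noise value at coordinate $i$, observing that the product of the remaining {\tt Gumbel} CDFs collapses to a single exponential, and using the substitution $u=e^{-\eta z}$ to evaluate the resulting integral as $\int_0^\infty e^{-uS}\,du=1/S$, whose reciprocal is exactly the displayed softmax probability.

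Finally I would form and bound the ratio. Cancelling the $i$-dependent numerators and rewriting the normalizers gives
$$ \frac{\Pr[\D_1=i]}{\Pr[\D_2=i]}=\exp(\eta\,\ell^t_i)\cdot\frac{Z_2}{Z_1}=\exp(\eta\,\ell^t_i)\cdot\mathbb{E}_{j\sim\D_1}\!\left[\exp(-\eta\,\ell^t_j)\right], $$
since $Z_2/Z_1$ is precisely the $\D_1$-weighted average of $\exp(-\eta\,\ell^t_j)$. Using the experts loss range $\ell^t_j\in[0,1]$, the first factor lies in $[1,e^{\eta}]$ and the averaged factor lies in $[e^{-\eta},1]$, so their product lies in $[e^{-\eta},e^{\eta}]$, which is the claim. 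I expect the only real subtlety to be this last step: the failure of the density-ratio method forces the closed-form route, and the \emph{tight} constant $\eta$ (rather than $2\eta$) relies on the losses being one-signed with range at most one, so that the two factors move in coordinated opposite directions and their product never escapes a single factor of $e^{\pm\eta}$. This matches the standard experts convention $\ell^t_j\in[0,1]$; moreover, by the per-round shift-invariance of the softmax (adding a common constant to all $\ell^t_j$ leaves both $\D_1$ and $\D_2$ unchanged), one may assume $\min_j \ell^t_j=0$ without loss of generality for any width-one loss range.
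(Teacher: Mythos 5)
Your proof is correct and is essentially identical to the paper's: the paper likewise invokes the Gumbel--softmax (logit) identity to write $\Pr[\D_1=i]/\Pr[\D_2=i]=\exp(\eta\,\ell^t_i)\cdot\frac{\sum_j \exp(-\eta L^{t-1}_j)\exp(-\eta\ell^t_j)}{\sum_j \exp(-\eta L^{t-1}_j)}$ and bounds the two factors in $[1,e^{\eta}]$ and $[e^{-\eta},1]$ using $\ell^t_j\in[0,1]$. Your added remarks (why the Laplace-style density comparison fails for Gumbel, and the shift-invariance normalization) are sensible but not needed beyond what the paper does.
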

\begin{proof} For any arm $i$, using the update rule of {\sc Hedge}, we have:
$$\frac{\Pr[\D_1 = i]}{\Pr[\D_2 = i]}  = \exp(\eta \ell^t_i) \frac{\sum_{j=1}^d \exp(-\eta L^{t-1}_j) \cdot \exp(- \eta \ell^t_j)}{\sum_{j=1}^d \exp(-\eta L^{t-1}_j)} \in [\exp(-\eta), \exp(\eta)],$$
where we have used $\ell^t_j \in [0,1]$ for all $j$.
\end{proof}

Combining this with Lemma~\ref{lem:min}, the regret of \hwc is at most that of the \btrl algorithm. Using Corollary~\ref{cor5} shows the following theorem:

\begin{theorem}
For experts setting, the \hwc algorithm with constant $\eta \in (0,0.4]$ has regret $O(\ln n)$.
\end{theorem}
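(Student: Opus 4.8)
The plan is to mirror the proof of Theorem~\ref{thm:main1} essentially verbatim, the only genuinely new ingredient being the translation of the just-established arm-wise differential privacy bound into a bound on the induced distributions over losses. Fix a step $t$, and let $\D_1$ denote the distribution of the loss $\ell^t_i$ when the arm $i$ is sampled from the \hwc (i.e.\ {\sc Hedge}) distribution built from $L^{t-1}$, and let $\D_2$ denote the distribution of the loss when $i$ is sampled from the \btrl distribution built from $L^t$. Since all losses lie in $[0,1]$, both $\D_1$ and $\D_2$ have bounded support, so the boundedness hypothesis of Lemma~\ref{lem:min} is met.

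First I would verify that $\D_1$ and $\D_2$, viewed as distributions over \emph{losses}, satisfy $\eta$-differential privacy. The preceding lemma gives the arm-wise bound $\exp(-\eta) \le \Pr[\D_1 = i]/\Pr[\D_2 = i] \le \exp(\eta)$ for every arm $i$. Passing from arms to losses only aggregates arms that share a common loss value, and a likelihood ratio pinned between $\exp(-\eta)$ and $\exp(\eta)$ coordinatewise is preserved under such aggregation by the mediant inequality; hence the induced loss distributions also satisfy $\eta$-differential privacy with the same $\eta \in (0,0.4]$. This is precisely the overloading-of-notation step used after Lemma~\ref{lem:dp} in the linear case.

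With this in hand the argument is immediate. At step $t$, the per-step loss of \hwc is $\min(A^t, B^t)$, where $A^t, B^t$ are independent draws from $\D_1$, while the per-step loss of \btrl is a single draw $C^t$ from $\D_2$. Lemma~\ref{lem:min} then gives $\E[\min(A^t, B^t)] \le \E[C^t]$, so the expected per-step loss of \hwc never exceeds that of \btrl. Summing over $t$ by linearity of expectation, the total expected loss of \hwc is at most that of \btrl, which is $\opt$ plus the \btrl regret bound $O(\ln n / \eta)$ from Corollary~\ref{cor5}. Since $\eta$ is a constant in $(0,0.4]$, this yields regret $O(\ln n)$ for \hwc.

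I expect essentially no obstacle here: all the heavy lifting is done by Lemma~\ref{lem:min} and Corollary~\ref{cor5}, and the only new point is the arm-to-loss transfer of the privacy bound, which is routine. The one thing worth double-checking is that the differential privacy constant indeed lies in the admissible range $(0,0.4]$ required by Lemma~\ref{lem:min}, which holds directly by the standing hypothesis $\eta \in (0,0.4]$.
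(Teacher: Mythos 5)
Your proposal is correct and follows the paper's own argument essentially verbatim: establish the arm-wise $\eta$-differential privacy of the Hedge/BtRL distributions, transfer it to the induced loss distributions, invoke Lemma~\ref{lem:min} to dominate the per-step loss of \hwc by that of \btrl, and conclude via Corollary~\ref{cor5}. The only addition is your explicit justification of the arm-to-loss aggregation step via the mediant inequality, which the paper handles implicitly by overloading notation; this is a harmless (indeed slightly more careful) elaboration, not a different route.
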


\subsection{Online Convex Optimization}\label{app:extension-oco}
This setting is the same as online linear optimization, except that at time $t$, after choosing action $w^t$, the algorithm is given a convex loss function $\ell^t$ and it incurs loss $\ell^t(w^t)$. We assume $\W$ itself is a closed, convex, and compact subset of $[-1,1]^d$, and that the loss functions $\ell^t$ are generated by an oblivious adversary. We further assume that the norm of the gradient of $\ell^t$ is bounded by $\beta$, and its Hessian has eigenvalues upper bounded by $\gamma$.

As before, the algorithm can choose two points $a^t, b^t \in \W$ and obtain $\ell^t(a^t)$ and $\ell^t(b^t)$. It subsequently chooses an action $w^t \in \W$, incurring loss $\ell^t(w^t)$ and learning the loss function $\ell^t$.

\newcommand{\cwc}{{\sc CwC}\ }

\medskip \noindent {\bf Convex with Choice ({\sc CwC}) Algorithm.} We adapt the algorithm for differentially private ERM~\cite{Kifer} to online learning in a fashion similar to~\cite{Abernathy1}. Specifically, we modify the algorithm to use the better of two regularized outcomes. Formally, let $L^{t-1} = \sum_{q=1}^{t-1} \ell^q$ denote the sum of the loss functions till time $t$. For $\eta \le 0.4$ being a constant, the algorithm performs these steps at time $t$. 

\begin{itemize}
\item Choose $x \in \mathbb{R}^d$ from the {\tt Gamma} distribution with density $f(x) \propto \exp\left(-\eta \frac{\Vert x \Vert_2}{\beta}\right)$ and compute the regularized optimum
$$a^t = \mbox{argmin}_{w \in \W} \left(L^{t-1}(w) + \langle x, w \rangle + \frac{\gamma}{\eta} \Vert w \Vert_2^2\right).$$ 

\item Repeat the above step choosing $y$ independently from the same {\tt Gamma} distribution, and using $y$ instead of $x$ to compute the regularized optimum $b^t$. 
\item Let $A^t = \ell^t(a^t)$ and $B^t = \ell^t(b^t)$. Probe to learn $w^t = \mbox{argmin}_{a^t, b^t} \{ A^t, B^t \}$. 
\item Play $w^t$ at time $t$, incurring actual loss $\min(A^t, B^t)$. 
\end{itemize}

\medskip \noindent {\bf Analysis.} Our analysis is essentially the same as that for linear optimization. As before, define the \btrl algorithm that chooses $z$ independently from the same {\tt Gamma} distribution, and sets its action at time $t$ as:
$$ c^t = \mbox{argmin}_{w \in \W} \left(L^{t}(w) + \langle x, w \rangle + \frac{\gamma}{\eta} \Vert w \Vert_2^2\right).$$

The following lemma is implicit in the proof of Theorem 3.2 in~\cite{Abernathy1}, and simply applies the ``follow the leader" analysis of~\cite{KalaiV} to the gradient of the loss functions as done in~\cite{Zinkevich}.

\begin{lemma}[\cite{Abernathy1}]
\label{lem:btrl2}
For constant $\eta > 0$, the regret of \btrl is $O\left(\gamma d + \beta d^{3/2} \right)$.
\end{lemma}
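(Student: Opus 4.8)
The plan is to obtain the bound directly from the classical ``be the leader'' lemma of Kalai--Vempala~\cite{KalaiV}, exactly as in Lemma~\ref{lem:kv} but instantiated for the regularizer used by \cwc. Write $R(w) = \langle z, w\rangle + \frac{\gamma}{\eta}\Vert w\Vert_2^2$ for the (random, strongly convex) regularizer and $L^t = \sum_{q=1}^t \ell^q$. The key observation is that \btrl plays $c^t = \mbox{argmin}_{w\in\W}\bigl(R(w) + L^t(w)\bigr)$, i.e. it is precisely the \emph{leader} for the augmented function sequence $f^0 = R,\ f^1 = \ell^1,\ \ldots,\ f^T = \ell^T$ that already incorporates the current loss $\ell^t$. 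Because $R$ is strongly convex and each $\ell^t$ is convex, every partial sum $\sum_{q=0}^t f^q$ has a unique minimizer, so $c^t$ is well defined for each realization of the noise $z$.

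First I would record the be-the-leader inequality: if $c^t = \mbox{argmin}_w \sum_{q=0}^t f^q(w)$, then $\sum_{t=0}^T f^t(c^t) \le \sum_{t=0}^T f^t(c^T)$, proved by a one-line induction on $T$ that uses only that each $c^t$ is an exact minimizer (no further convexity is needed). Rearranging, and using that $c^T$ minimizes $R + L^T$ so that $R(c^T) + L^T(c^T) \le R(u) + L^T(u)$ for every fixed $u \in \W$, this yields, for each realization of $z$,
\begin{equation*}
\sum_{t=1}^T \ell^t(c^t) - \sum_{t=1}^T \ell^t(u^*) \le R(u^*) - R(c^0),
\end{equation*}
where $u^* = \mbox{argmin}_{u\in\W}\sum_t \ell^t(u)$ is the \emph{deterministic} hindsight optimum and $c^0 = \mbox{argmin}_w R(w)$.

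It then remains to bound $\E_z[R(u^*) - R(c^0)]$, which I would split into its quadratic and linear parts. For the quadratic part, since $\W \subseteq [-1,1]^d$ we have $\Vert u^*\Vert_2^2 \le d$ and $\Vert c^0\Vert_2^2 \ge 0$, so $\frac{\gamma}{\eta}\bigl(\Vert u^*\Vert_2^2 - \Vert c^0\Vert_2^2\bigr) \le \gamma d/\eta$. For the linear part, Cauchy--Schwarz together with $\Vert u^* - c^0\Vert_2 \le 2\sqrt d$ gives $\langle z, u^* - c^0\rangle \le 2\sqrt d\,\Vert z\Vert_2$; taking expectations, I would use that the radial marginal of the density $f(x) \propto \exp(-\eta\Vert x\Vert_2/\beta)$ on $\mathbb{R}^d$ is a $\mathrm{Gamma}(d,\beta/\eta)$ law, so that $\E[\Vert z\Vert_2] = d\beta/\eta$ and the linear part is at most $2 d^{3/2}\beta/\eta$. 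Summing the two parts and taking $\eta$ to be the fixed constant gives regret $O(\gamma d + \beta d^{3/2})$, as claimed.

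The only step requiring genuine care is the linear term: because $c^0$ is itself a function of the noise $z$, one cannot argue $\E[\langle z, u^* - c^0\rangle] = \langle \E[z], u^* - c^0\rangle = 0$, and must instead pass to the norm bound and evaluate $\E[\Vert z\Vert_2]$ via the $\mathrm{Gamma}(d,\beta/\eta)$ identification of the radial law. This is exactly the computation underlying Theorem~3.2 of~\cite{Abernathy1}. I note that the same bound can alternatively be reached by Zinkevich's linearization, replacing each $\ell^t(c^t) - \ell^t(u^*)$ with $\langle \nabla\ell^t(c^t), c^t - u^*\rangle$ and invoking the linear be-the-leader estimate; but the direct argument above already produces the stated constants, so I would present it as the main route.
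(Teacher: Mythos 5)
Your proof is correct, and it takes a somewhat different route from the one the paper relies on. The paper does not prove the lemma directly: it defers to Theorem~3.2 of Abernethy et al.\ and describes the argument as applying the Kalai--Vempala follow-the-leader analysis (as in Lemma~\ref{lem:kv}) to the \emph{gradients} of the losses, i.e.\ a Zinkevich-style linearization followed by the linear be-the-leader telescoping. You instead run be-the-leader directly on the convex losses, prepending the random strongly convex regularizer $R(w)=\langle z,w\rangle+\frac{\gamma}{\eta}\Vert w\Vert_2^2$ as a zeroth ``loss,'' which reduces the whole regret to the regularizer range $\E_z\bigl[R(u^*)-R(c^0)\bigr]$; you then correctly split this into the quadratic part (at most $\gamma d/\eta$) and the linear part (at most $2\sqrt{d}\,\E\Vert z\Vert_2 = 2d^{3/2}\beta/\eta$ via the $\mathrm{Gamma}(d,\beta/\eta)$ radial law), and you correctly flag that the linear term cannot be killed by $\E[z]=0$ since $c^0$ depends on $z$. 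What each approach buys: your direct argument is self-contained and needs no smoothness of the losses for the telescoping itself --- the parameters $\beta$ and $\gamma$ enter only through the noise scale and the regularizer coefficient, which is where they are genuinely needed (for the differential-privacy property used downstream in Lemma~\ref{lem:min}); the paper's linearization route, by contrast, lets it reuse verbatim the machinery already set up for the linear case and match the citation to Abernethy et al. Both yield the same $O(\gamma d+\beta d^{3/2})$ bound for the \btrl benchmark at constant $\eta$.
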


As before, focus on time $t$ and let $\D_1$ denote the distribution over $a^t$ (resp. $b^t$), and let $\D_2$ denote the distribution over $c^t$. Theorem 2 of~\cite{Kifer} implies that $\D_1 $ and $\D_2$ satisfy $\eta$-differential privacy, that is, Lemma~\ref{lem:dp} holds. This implies that Lemma~\ref{lem:min} holds as well, so that the regret of \cwc is upper bounded by that of the \btrl algorithm. Combining with Lemma~\ref{lem:btrl2} shows the following theorem:

\begin{theorem}
\label{thm:main2}
For constant $\eta \in (0,0.4]$, the regret of the \cwc algorithm is $O\left(\gamma d + \beta d^{3/2} \right)$.
\end{theorem}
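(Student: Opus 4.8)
The plan is to follow the template of the linear case (Theorem~\ref{thm:main1}) essentially verbatim, substituting the convex analogues of each ingredient. The entire argument factors into two pieces: a regret bound for the hypothetical \btrl algorithm, and a per-step comparison showing that \cwc never does worse than \btrl in expectation. Lemma~\ref{lem:btrl2} already supplies the first piece, giving \btrl a regret of $O(\gamma d + \beta d^{3/2})$, which is precisely the target bound; so the crux is the per-step comparison.

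For that comparison I would fix a step $t$ and let $\D_1$ be the distribution of the regularized optimum $a^t$ (equivalently $b^t$), computed from $L^{t-1}$ with Gamma noise and the quadratic regularizer $\frac{\gamma}{\eta}\Vert w\Vert_2^2$, and let $\D_2$ be the distribution of the \btrl action $c^t$, computed from $L^t$. Invoking Theorem~2 of~\cite{Kifer}, the objective-perturbation mechanism with this Gamma noise is $\eta$-differentially private with respect to adding the single loss $\ell^t$; this is exactly the two-sided multiplicative bound asserted by Lemma~\ref{lem:dp}. I would then push this bound forward onto the induced distributions of \emph{losses}: since $\ell^t(\cdot)$ is a deterministic map of the action, the loss distributions inherit the same $\eta$-DP bound, exactly as in the linear case.

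With the DP bound in hand, I would apply the reverse prophet inequality, Lemma~\ref{lem:min}. Its hypotheses require bounded support, which holds because $\W \subseteq [-1,1]^d$ is compact and $\nabla \ell^t$ has norm $\le \beta$, so the losses lie in a bounded interval; and they require the $\eta$-DP bound for $\eta \in (0,0.4]$, which we have just established. The lemma then gives $\E[\min(A^t, B^t)] \le \E[C^t]$, i.e.\ the expected per-step loss of \cwc is at most that of \btrl. Summing over $t$ by linearity of expectation and adding the \btrl regret from Lemma~\ref{lem:btrl2} completes the proof.

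The main obstacle is the differential-privacy step: one must verify that objective perturbation with Gamma noise, combined with the strongly-convex regularizer $\frac{\gamma}{\eta}\Vert w\Vert_2^2$, yields $\eta$-DP, which is where the gradient bound $\beta$ (controlling the sensitivity of adding $\ell^t$) and the curvature bound $\gamma$ (guaranteeing a unique, stable minimizer) both enter. Everything else is a direct transcription of the linear argument; the subtle point worth double-checking is that Lemma~\ref{lem:min} genuinely needs the two-sided multiplicative guarantee that DP provides, rather than the weaker total-variation bound that classical follow-the-regularized-leader analyses rely on.
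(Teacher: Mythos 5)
Your proposal is correct and follows essentially the same route as the paper: bound \btrl via Lemma~\ref{lem:btrl2}, invoke Theorem~2 of~\cite{Kifer} to get the $\eta$-differential-privacy property of Lemma~\ref{lem:dp} for the Gamma-noise objective perturbation, and then apply the reverse prophet inequality (Lemma~\ref{lem:min}) per step before summing. Your added checks---that the induced loss distributions inherit the DP bound and have bounded support, and that the two-sided multiplicative guarantee (rather than total variation) is what Lemma~\ref{lem:min} requires---are exactly the right points to verify and are consistent with the paper's argument.
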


Note that for online linear optimization, if each $\ell^t \in [0,1]^d$, then $\beta \le \sqrt{d}$ and we recover the $\tilde{O}(d^2)$ regret bound in Theorem~\ref{thm:main1}.

\section{Concentration Bounds}
Our proofs use standard Chernoff bounds (see, e.g., \cite{boucheron, motwani-raghavan}). For completeness, we state the version we use.

\begin{theorem}\label{app:chernoff}
Let $X_1, X_2, \dots, X_n$ be independent random variables with support $[0,1]$. Let $p_i = \E[X_i]$, and let $\mu = \sum_i p_i$. Then we have the following:
\begin{enumerate}
    \item (Small deviation) For any $\delta \in [0,1]$,
    \begin{align*}
        \Pr[ X \ge (1+\delta) \mu] \le e^{-\mu \delta^2 / 3}, \\ 
        \Pr[ X \le (1-\delta) \mu] \le e^{-\mu \delta^2/3}.
    \end{align*}
    \item (Large deviation) For $\delta \ge 1$, we have
    \[ \Pr [ X \ge (1+\delta) \mu] \le e^{-\mu \delta/3}. \]
\end{enumerate}
\end{theorem}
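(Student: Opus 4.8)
The plan is to use the standard Chernoff (exponential moment) method, reading $X = \sum_i X_i$ so that $\mu = \E[X]$. First I would apply Markov's inequality to the exponentiated variable: for any $s > 0$,
$$\Pr[X \ge (1+\delta)\mu] = \Pr[e^{sX} \ge e^{s(1+\delta)\mu}] \le e^{-s(1+\delta)\mu}\, \E[e^{sX}].$$
By independence, $\E[e^{sX}] = \prod_i \E[e^{sX_i}]$, so the crux is to bound each factor. Since $x \mapsto e^{sx}$ is convex, on $[0,1]$ it lies below the chord joining its endpoints, giving the pointwise inequality $e^{sX_i} \le 1 - X_i + X_i e^s$; taking expectations yields $\E[e^{sX_i}] \le 1 + p_i(e^s - 1) \le \exp\!\big(p_i(e^s-1)\big)$. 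Multiplying over $i$ gives the clean bound $\E[e^{sX}] \le \exp\!\big(\mu(e^s-1)\big)$, which is exactly where the $[0,1]$ support hypothesis enters.

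Next I would optimize the free parameter. Substituting the moment-generating-function bound gives $\Pr[X \ge (1+\delta)\mu] \le \exp\!\big(\mu(e^s - 1 - s(1+\delta))\big)$, and the exponent is minimized at $s = \ln(1+\delta) \ge 0$, producing the classical form
$$\Pr[X \ge (1+\delta)\mu] \le \left(\frac{e^{\delta}}{(1+\delta)^{1+\delta}}\right)^{\mu} = \exp\!\big(\mu\,(\delta - (1+\delta)\ln(1+\delta))\big).$$
It then remains to control the scalar exponent $g(\delta) := \delta - (1+\delta)\ln(1+\delta)$. For part (1) I would show $g(\delta) \le -\delta^2/3$ on $[0,1]$; for part (2), $g(\delta) \le -\delta/3$ on $[1,\infty)$. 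Both are one-variable calculus facts. Using $g'(\delta) = -\ln(1+\delta)$, one sets $h(\delta) = g(\delta) + \delta^2/3$, checks $h(0)=0$, and verifies $h'(\delta) = \tfrac{2}{3}\delta - \ln(1+\delta) \le 0$ on $[0,1]$ (it vanishes at $0$, dips to a minimum near $\delta = \tfrac12$, and returns to a negative value at $\delta = 1$), so $h \le 0$ there; for the large-deviation regime one checks $g(\delta)+\delta/3 \le 0$ for $\delta \ge 1$, which holds with slack at $\delta=1$ (since $2\ln 2 - 1 > \tfrac13$) and whose left side is decreasing there because $g'(\delta) + \tfrac13 = -\ln(1+\delta) + \tfrac13 < 0$ once $\delta \ge 1$.

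Finally, for the lower tail in part (1) I would run the symmetric argument with a negative exponent: writing $s = -u$ with $u > 0$, $\Pr[X \le (1-\delta)\mu] \le e^{u(1-\delta)\mu}\,\E[e^{-uX}] \le \exp\!\big(\mu(e^{-u} - 1 + u(1-\delta))\big)$, optimized at $u = -\ln(1-\delta)$, giving $\exp\!\big(\mu(-\delta - (1-\delta)\ln(1-\delta))\big)$, which one bounds by $e^{-\mu\delta^2/3}$ (in fact by the sharper $e^{-\mu\delta^2/2}$) for $\delta \in [0,1]$ via the analogous scalar estimate. The main obstacle is purely the bookkeeping of these elementary but delicate scalar inequalities — pinning the constant $3$ in the exponent — rather than anything probabilistic, since the probabilistic content (Markov, the convex chord bound, and independence) is entirely routine.
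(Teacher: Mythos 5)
Your proof is correct: it is the standard exponential-moment (Markov plus convex-chord MGF bound) derivation, the optimization at $s=\ln(1+\delta)$ is right, and the scalar estimates pinning the constant $3$ in both regimes check out (e.g., $h'(\delta)=\tfrac{2}{3}\delta-\ln(1+\delta)$ is $0$ at $\delta=0$ and negative at $\delta=1$, with a single interior minimum, so $h\le 0$ on $[0,1]$). The paper offers no proof of this statement at all — it records it as a standard fact with citations — and your argument is precisely the textbook proof those references contain, so there is nothing to compare beyond noting agreement.
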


\subsection{Proof Sketch of Theorem~\ref{thm:tail}}
\label{app:tail}
We present some details on the proof of Theorem~\ref{thm:tail}. 
In order to avoid confusion, we omit the subscript $t$ from $m_{it}, V_{it}, s_{it}$ and denote them  by $m_i, V_i, s_i$ respectively. Since we map the notation to that in~\cite{Audibert}, we will reserve the notations $n,t$ for the corresponding terms in that work. 

The following lemma bounding the sample mean follows from Eq~(48) in~\cite{Audibert} with $n=t$  and using the fact that $s_i \leq n$.

\begin{lemma}
\label{lem:bennett1}
For any $x \ge 0$, with probability $1 - 3 e^{-x}$, we have: $  \left| \mu_i - m_i \right| \le \sqrt{ \frac{2 \sigma_i^2 x}{s_i}} + \frac{x}{3s_i}.$
\end{lemma}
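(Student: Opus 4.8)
The plan is to recognize the statement of Lemma~\ref{lem:bennett1} as an empirical Bernstein/Bennett concentration inequality for the sample mean $m_i$, which averages $s_i$ i.i.d. draws from $D_i \in [0,1]$ having mean $\mu_i$ and variance $\sigma_i^2$. For a \emph{deterministic} sample count $s$, Bennett's inequality already delivers the desired form: its one-sided tail, inverted at confidence level $e^{-x}$, bounds the deviation $m_i - \mu_i$ by $\sqrt{2\sigma_i^2 x/s} + x/(3s)$, where the first term is the variance contribution and the second is the range/bias contribution, and the same holds symmetrically for $\mu_i - m_i$. First I would record this clean inversion, so that each one-sided event contributes failure probability $e^{-x}$ and the two-sided event contributes $2e^{-x}$.

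The hard part will be that $s_i = s_{it}$ is \emph{not} fixed in advance: it is the (random) number of times arm $i$ has been observed by time $t$, determined adaptively by the algorithm's own past plays (this is most delicate in the {\tt Meta UCB-V} setting, where the number of plays of a meta-arm is genuinely random). A crude union bound over the up to $t$ possible values of $s_i$ would cost a factor of $t$ and destroy the estimate, so the fixed-$s$ Bennett inequality cannot be applied directly. This is precisely the role of \cite{Audibert,Freedman}: a peeling device over a geometric grid of sample counts, equivalently Freedman's martingale form of Bernstein's inequality, produces a deviation bound that holds \emph{uniformly} over the random sample count at the cost of only a constant factor rather than a factor of $t$. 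The resulting uniform statement is exactly Eq~(48) of \cite{Audibert}, whose horizon parameter I would instantiate as $n = t$.

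Finally I would invoke this uniform bound with $n = t$, using that arm $i$ is observed at most $t$ times so that $s_i \le t = n$, which is the regime in which Eq~(48) applies. Combining the two one-sided tails together with the constant overhead of the peeling argument yields the two-sided deviation $|\mu_i - m_i| \le \sqrt{2\sigma_i^2 x/s_i} + x/(3s_i)$ with total failure probability at most $3e^{-x}$, the two tails plus the union cost absorbed into the third $e^{-x}$. The only remaining bookkeeping is to confirm that the constants appearing in Eq~(48) match the claimed $2$ and $1/3$, and to verify that this mean-deviation bound feeds correctly into the $\sigma_i^2$-scaled tail bounds of Theorem~\ref{thm:tail} (where the substitution $q \ge 1/18$ and the conversion from $x$ to $q\sigma_i^2 s_i$ produce the stated exponents).
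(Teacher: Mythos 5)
Your proposal matches the paper's own (very terse) proof, which simply invokes Eq.~(48) of Audibert et al.\ with $n=t$ and $s_i \le n$; you correctly identify that equation as the Bernstein/Bennett bound made uniform over the random sample count via the peeling/martingale device, which is exactly where the constant $3$ in $3e^{-x}$ and the terms $\sqrt{2\sigma_i^2 x/s_i}$ and $x/(3s_i)$ come from. The only minor over-caution is your emphasis on adaptive sample counts: in the one place Theorem~\ref{thm:tail} is actually applied (the round-robin exploration probe of Section~\ref{sec:all}), $s_{it}$ is essentially deterministic, but the uniform bound you invoke covers that case anyway.
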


This yields the following corollary.

\begin{corollary}
\label{cor1}
For $q \ge \frac{1}{18}$, we have: $ \Pr\left[ |m_i - \mu_i| > q \sigma^2_i \right] \le 3 e^{- \frac{q \sigma^2_i s_i}{23}}.$
\end{corollary}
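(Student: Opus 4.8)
The plan is to obtain the corollary by inverting the Bennett/Bernstein-type bound of Lemma~\ref{lem:bennett1}. That lemma states that for every confidence level $x \ge 0$,
$$\Pr\!\left[\,|m_i - \mu_i| > \sqrt{\tfrac{2\sigma_i^2 x}{s_i}} + \tfrac{x}{3 s_i}\,\right] \le 3 e^{-x}.$$
The strategy is to choose $x$ as a function of the target deviation $q\sigma_i^2$ so that the two-term threshold on the left-hand side is at most $q\sigma_i^2$. When this holds, the event $\{|m_i - \mu_i| > q\sigma_i^2\}$ is contained in the event appearing in the lemma, and hence its probability is bounded by $3 e^{-x}$. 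To land on a bound of the claimed shape $3 e^{-q\sigma_i^2 s_i/23}$, I would take $x$ proportional to the natural scale $q\sigma_i^2 s_i$, say $x = \alpha\, q\sigma_i^2 s_i$, so that the exponent is automatically linear in $q$ and in $s_i$.

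Under this substitution the two terms of the threshold become $\sigma_i^2\sqrt{2\alpha q}$ and $\alpha q \sigma_i^2/3$, and the requirement ``threshold $\le q\sigma_i^2$'' collapses to the scalar inequality $\sqrt{2\alpha/q} + \alpha/3 \le 1$. Its left-hand side is decreasing in $q$, so it suffices to verify it at the smallest admissible value $q = 1/18$; this is precisely the role of the hypothesis $q \ge 1/18$. Conceptually, requiring $q$ to exceed an absolute constant places us in the exponential-tail (rather than Gaussian) regime of the concentration inequality, which is exactly why the exponent ends up linear in $q$ rather than quadratic. An equivalent and perhaps cleaner route is to not fix $\alpha$ in advance: set the threshold equal to $q\sigma_i^2$, solve the resulting quadratic in $\sqrt{x}$ for the crossover value $x^\star$, and then lower bound $x^\star$ using $q \ge 1/18$, reading off the constant in the exponent from that bound.

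The main obstacle is purely the bookkeeping of constants: one has to pin down the largest $\alpha$ (equivalently the value recorded as $1/23$) for which the scalar inequality holds uniformly over $q \ge 1/18$, while carefully carrying the additive $x/(3s_i)$ term and the leading factor of $3$. This is a short but fiddly computation, and it is worth noting that the downstream use of the corollary (in the proof of Theorem~\ref{thm1}) only needs the existence of \emph{some} positive constant in the exponent, so the precise value is inessential to the main results. The substantive content is the inversion idea: matching the free parameter $x$ to the deviation scale $q\sigma_i^2 s_i$ turns the two-sided estimate of Lemma~\ref{lem:bennett1} into the parameter-free exponential decay asserted by the corollary.
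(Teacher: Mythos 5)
Your proposal is correct and follows essentially the same route as the paper: the paper's proof sets $x = r\sigma_i^2 s_i$ in Lemma~\ref{lem:bennett1}, bounds $\sqrt{2r}+r/3 \le 23r$ for $r$ above a small constant, and relabels $q = 23r$, which is exactly your substitution $x = \alpha q \sigma_i^2 s_i$ with $\alpha = 1/23$ followed by the scalar inequality check at the smallest admissible $q$. The only caveat is the one you already flag: the specific pair $(1/18, 1/23)$ does not quite satisfy the scalar inequality (nor does the paper's own verification at $r=1/300$), but since only the existence of some absolute constant in the exponent is used downstream, this is immaterial.
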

\begin{proof}
For $r \ge \frac{1}{300}$, set $x = r \sigma_i^2 s_i$ in Lemma~\ref{lem:bennett1}. Then, with probability  $1 - 3 e^{-r \sigma_i^2 s_i}$, we have $ | m_i - \mu_i| \le \sigma_i^2 (\sqrt{2r} + r/3) \le 23 \sigma^2_i r.$ Setting $q = 23r$ completes the proof.
\end{proof}

This shows the first part of Theorem~\ref{thm:tail}. A similar bound holds for the sample variance. 

\begin{lemma} 
For any $x \ge 0$, with probability $1 - 3 e^{-x}$, we have: $ V_i \le \sigma^2_i + \sqrt{ \frac{2 \sigma_i^2 x}{s_i}} + \frac{x}{3s_i}.$
\end{lemma}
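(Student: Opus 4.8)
The plan is to mirror the proof of Lemma~\ref{lem:bennett1} for the sample mean, reducing the claim to a one-sided Bennett/Bernstein bound for an average of bounded i.i.d.\ random variables. The starting point is the standard variance decomposition
$$ V_i = \frac{1}{s_i}\sum_{q=1}^{s_i} (X_{iq} - \mu_i)^2 - (m_i - \mu_i)^2 \le \frac{1}{s_i}\sum_{q=1}^{s_i} (X_{iq} - \mu_i)^2, $$
where the inequality simply drops the nonnegative correction term $(m_i - \mu_i)^2$. This is exactly the direction we want for an \emph{upper} bound on $V_i$, and it has the added benefit of replacing the empirical mean $m_i$ (which makes the summands dependent) by the fixed constant $\mu_i$, leaving a clean average of i.i.d.\ terms.

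Next I would set $Z_q = (X_{iq} - \mu_i)^2$. Since $X_{iq}, \mu_i \in [0,1]$, each $Z_q$ lies in $[0,1]$, the $Z_q$ are i.i.d., and $\E[Z_q] = \sigma_i^2$. The key elementary estimate is that $\mathrm{Var}[Z_q] \le \sigma_i^2$: because $Z_q \le 1$ we have $Z_q^2 \le Z_q$, so $\E[Z_q^2] \le \E[Z_q] = \sigma_i^2$, whence $\mathrm{Var}[Z_q] = \E[Z_q^2] - \sigma_i^4 \le \sigma_i^2$. Thus $\frac{1}{s_i}\sum_q Z_q$ is an average of $[0,1]$-bounded i.i.d.\ variables with mean $\sigma_i^2$ and per-sample variance at most $\sigma_i^2$. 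Applying the upper-tail Bennett inequality (in the same form underlying Eq.~(48) of~\cite{Audibert}) with range $b=1$ and variance proxy $\sigma_i^2$, and inverting it in its Bernstein relaxation, gives for any fixed $s_i$ and any $x \ge 0$, with probability $1 - e^{-x}$,
$$ \frac{1}{s_i}\sum_{q=1}^{s_i} Z_q \le \sigma_i^2 + \sqrt{\frac{2\sigma_i^2 x}{s_i}} + \frac{x}{3 s_i}. $$
Combining this with the decomposition above yields the stated bound on $V_i$ for a fixed sample count.

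The one remaining point is the factor $3$ in the claimed $1 - 3e^{-x}$: as in Lemma~\ref{lem:bennett1}, the count $s_i = s_{it}$ is itself random (it is the number of exploration probes of arm $i$ by time $t$), so the bound must hold uniformly over its possible values. This is handled exactly as for the mean in~\cite{Audibert}, via a peeling/union argument over the range of $s_i$ that converts the clean $e^{-x}$ into $3e^{-x}$ while preserving the functional form. I expect this anytime bookkeeping — rather than the concentration step itself — to be the only delicate part; the reduction to an i.i.d.\ average and the variance estimate $\mathrm{Var}[Z_q]\le\sigma_i^2$ are routine, and everything else parallels the already-established sample-mean bound of Theorem~\ref{thm:tail}.
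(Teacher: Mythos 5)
Your proposal matches the paper's proof essentially step for step: both drop the nonnegative correction term $(m_i-\mu_i)^2$ to reduce $V_i$ to the i.i.d.\ average of $(X_{iq}-\mu_i)^2 \in [0,1]$, observe that this variable has mean $\sigma_i^2$ and second moment at most $\sigma_i^2$, and invoke the Bernstein/Bennett-type inequality from Audibert et al. The only cosmetic difference is that the paper applies the bound with $s_i$ fixed and directly obtains the stronger probability $1-e^{-x}$ (the factor $3$ in the statement is inherited from the form of the sample-mean lemma rather than being needed here), so the peeling/union step you anticipate is not actually required.
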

\begin{proof}
In Eq~(43) of~\cite{Audibert}, set $U_{iq} = (X_{iq} - \mu_i)^2$. We have $\E[U_i^2] \le \E[U_i] = \sigma_i^2$ since $U_{iq} \in [0,1]$. Further, $b''_{+} = 1 - \E[U_i] \le 1$. Let $m_i = \frac{\sum_{q=1}^{s_i} X_{iq}}{s_i}$ and $V_i = \frac{\sum_{q=1}^{s_i} (X_{iq} - m_i)^2}{s_i}$. Setting $n = t = s_i$, with probability $1 - e^{-x}$, we have:
$$ \frac{\sum_{q=1}^{s_i} U_{iq}}{s_i} = \frac{\sum_{q=1}^{s_i} X_{iq}^2}{s_i} - 2 \mu_i m_i + \mu_i^2 \le \sigma_i^2 + \sqrt{\frac{2 \sigma_i^2 x}{s_i}} + \frac{x}{3 s_i}.$$
Note that the LHS is simply $V_i + (\mu_i - m_i)^2 \ge V_i$. The lemma now follows.
\end{proof}

This yields the following corollary using the same proof method as Corollary~\ref{cor1}.

\begin{corollary}
\label{cor3}
For $q \ge \frac{1}{18}$, we have: $ \Pr\left[ V_i > (1+q) \sigma^2_i \right] \le 3 e^{- \frac{q \sigma^2_i s_i}{23}}.$
\end{corollary}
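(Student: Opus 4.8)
The plan is to mirror the proof of Corollary~\ref{cor1} essentially verbatim, but applied to the sample-variance tail bound stated in the lemma immediately preceding this corollary rather than to Lemma~\ref{lem:bennett1}. That lemma already gives, for every $x \ge 0$ and with probability at least $1 - 3e^{-x}$, the one-sided estimate $V_i \le \sigma_i^2 + \sqrt{2\sigma_i^2 x / s_i} + x/(3 s_i)$. The entire content of the corollary is to re-express this deviation in units of $\sigma_i^2$, rather than in units of $\sqrt{x/s_i}$, so that the failure probability decays at a rate governed by the variance itself.

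First I would substitute $x = r\,\sigma_i^2 s_i$ for a parameter $r$ to be fixed below. The square-root term becomes $\sqrt{2\sigma_i^2 \cdot r\sigma_i^2 s_i / s_i} = \sigma_i^2\sqrt{2r}$, and the linear term becomes $r\sigma_i^2/3$, so the lemma yields $V_i \le \sigma_i^2\bigl(1 + \sqrt{2r} + r/3\bigr)$ with failure probability at most $3e^{-r\sigma_i^2 s_i}$. I would then absorb the dominated lower-order behaviour by checking that $\sqrt{2r} + r/3 \le 23 r$ once $r$ exceeds a small absolute constant (the same threshold used in Corollary~\ref{cor1}), giving $V_i \le (1+23r)\sigma_i^2$ on this event. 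Setting $q = 23r$, equivalently $r = q/23$, turns the bound into $V_i \le (1+q)\sigma_i^2$ with failure probability $3e^{-q\sigma_i^2 s_i/23}$, and the lower bound on $r$ becomes exactly the stated hypothesis $q \ge 1/18$. Taking complements gives $\Pr[V_i > (1+q)\sigma_i^2] \le 3e^{-q\sigma_i^2 s_i/23}$, as claimed.

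The only real care here is the bookkeeping of constants: one must confirm that the elementary inequality $\sqrt{2r} + r/3 \le 23 r$ holds on the relevant range of $r$ and that this range is consistent with $q \ge 1/18$ under the identification $q = 23r$, so that the constants $23$ and $1/18$ are mutually compatible. No probabilistic work beyond the preceding lemma is needed, since that lemma (derived from Bennett's inequality via Eq.~(43) of~\cite{Audibert}) has already folded the $(\mu_i - m_i)^2$ cross term into a one-sided \emph{upper} estimate on $V_i$. In particular we only need the upper tail of $V_i$, so the two-sided splitting used for the sample mean is unnecessary, and the proof reduces to the single substitution and constant check above.
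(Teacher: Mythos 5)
Your proof is correct and takes essentially the same route as the paper, which proves this corollary ``using the same proof method as Corollary~\ref{cor1}'': substitute $x = r\sigma_i^2 s_i$ into the one-sided variance bound $V_i \le \sigma_i^2 + \sqrt{2\sigma_i^2 x/s_i} + x/(3s_i)$, bound $\sqrt{2r} + r/3 \le 23r$, and set $q = 23r$. The one caveat you rightly flag --- that the elementary inequality $\sqrt{2r} + r/3 \le 23r$ must be verified on the range of $r$ corresponding to $q \ge 1/18$, where it is in fact slightly loose near the boundary --- is inherited verbatim from the paper's own Corollary~\ref{cor1} and is not a defect introduced by your argument.
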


This shows the second part of Theorem~\ref{thm:tail}. Finally, we  have the following inequality that bounds $V_i$ in the other direction. This inequality follows from Eq~(50) in~\cite{Audibert} by simply 
setting $x = 0.01 \sigma_i^2 s_i$, $n=t = s_i$, and $L = nx / t^2 = 0.01 \sigma_i^2$,
$$ \Pr[ V_i < 0.65 \sigma_i^2] \le 3 e^{-0.01 \sigma_i^2 s_i}.$$
This completes the proof of Theorem~\ref{thm:tail}.

\end{document}